\DeclareMathAlphabet{\mathbfi}{OML}{cmm}{b}{it}
\let\originalleft\left
\let\originalright\right
\renewcommand{\left}{\mathopen{}\mathclose\bgroup\originalleft}
\renewcommand{\right}{\aftergroup\egroup\originalright}
\newcommand{\vast}{\bBigg@{4}}
\newcommand{\Vast}{\bBigg@{5}}
\newenvironment{equations}[1][]{\subequations\ifx\relax#1\relax\else\label{#1}\fi\align\ignorespaces}{\endalign\ignorespacesafterend\endsubequations}
\def\@spliteq#1{\begin{equation}\begin{split}#1\end{split}\end{equation}}
\def\@spliteqstar#1{\begin{equation*}\begin{split}#1\end{split}\end{equation*}}
\def\splitequation{\collect@body\@spliteq}
\def\csname splitequation*\endcsname{\collect@body\@spliteqstar}
\def\csname endsplitequation*\endcsname{\ignorespacesafterend}
\def\normalthmheadings{\def\@spbegintheorem##1##2##3##4{\trivlist
                 \item[\hskip\labelsep{##3##1\ ##2\@thmcounterend}]##4\addcontentsline{toc}{subsection}{##1\ ##2\@thmcounterend}}
\def\@spopargbegintheorem##1##2##3##4##5{\trivlist
      \item[\hskip\labelsep{##4##1\ ##2}]{##4(##3)\@thmcounterend\ }\addcontentsline{toc}{subsection}{##1\ ##2\@thmcounterend}##5}}
\def\reversethmheadings{\def\@spbegintheorem##1##2##3##4{\trivlist
                 \item[\hskip\labelsep{##3##2\ ##1\@thmcounterend}]##4\addcontentsline{toc}{subsection}{##2\ ##1\@thmcounterend}}
\def\@spopargbegintheorem##1##2##3##4##5{\trivlist
      \item[\hskip\labelsep{##4##2\ ##1}]{##4(##3)\@thmcounterend\ }\addcontentsline{toc}{subsection}{##2\ ##1\@thmcounterend}##5}}
\spnewtheorem*{remark*}{Remark}{\itshape}{\rmfamily}
\spnewtheorem*{example*}{Example}{\itshape}{\rmfamily}
\let\oldendproof\endproof
\def\endproof{\hfill\squareforqed\oldendproof}
\let\oldre\Re
\let\oldim\Im
\renewcommand{\Re}{\oldre\mathfrak{e}\,}
\renewcommand{\Im}{\oldim\mathfrak{m}\,}
\newcommand{\mathe}{\mathrm{e}}
\newcommand{\mathi}{\mathrm{i}}
\newcommand{\total}{\mathop{}\!\mathrm{d}}
\newcommand{\abs}[1]{{\left\lvert{#1}\right\rvert}}
\newcommand{\norm}[1]{{\left\lVert{#1}\right\rVert}}
\newcommand{\1}{\mathbbm{1}}
\newcommand{\eqend}[1]{\,#1}
\newcommand{\bigo}[1]{\mathcal{O}\left({#1}\right)}
\newcommand{\op}{\mathcal{O}}
\newcommand{\istest}{\in \mathcal{S}(\mathbb{R}^2)}
\newcommand*{\bessel@n}[3]{\mathop{}\!\mathrm{#1}_{#2}\left(#3\right)}
\newcommand*{\bessel@s}[3]{\mathop{}\!\mathrm{#1}_{#2}\left[#3\right]}
\newcommand*{\bessel}{\@ifstar{\bessel@s}{\bessel@n}}
\DeclareMathOperator{\supp}{supp}
\DeclareMathOperator{\sgn}{sgn}
\journalname{Communications in Mathematical Physics}
\begin{document}

\title{A quantum energy inequality in the Sine--Gordon model}
\titlerunning{QEI in the Sine--Gordon model}

\author{Markus B. Fr{\"o}b \and Daniela Cadamuro}
\institute{Institut f{\"u}r Theoretische Physik, Universit{\"a}t Leipzig,\\ Br{\"u}derstra{\ss}e 16, 04103 Leipzig, Germany\\ \email{\href{mailto:mfroeb@itp.uni-leipzig.de}{mfroeb@itp.uni-leipzig.de}, \href{mailto:cadamuro@itp.uni-leipzig.de}{cadamuro@itp.uni-leipzig.de}}}
\authorrunning{M. B. Fr{\"o}b \and D. Cadamuro}

\date{14. December 2022, revised 11. May 2023}

\maketitle
\begin{abstract}
We consider the stress tensor in the massless Sine--Gordon model in the finite regime $\beta^2 < 4 \pi$ of the theory. We prove convergence of the renormalised perturbative series for the interacting stress tensor defined using the Bogoliubov formula in an arbitrary Hadamard state, even for the case that the smearing is only along a one-dimensional time-like worldline and not in space-time. We then show that the interacting energy density, as seen by an observer following this worldline, satisfies an absolute lower bound, that is a bound independent of the quantum state. Our proof employs and generalises existing techniques developed for free theories by Flanagan, Fewster and Smith.
\end{abstract}

\section{Introduction}
\label{sec_intro}

One of the fundamental observables both in classical and quantum field theory is the stress tensor (or stress-energy or energy-momentum tensor) $T_{\mu\nu}$. It enters the Einstein equations in General Relativity, linking the distribution of matter to the curvature of spacetime. In this context, the stress tensor should fulfill certain positivity conditions, such as the \emph{weak energy condition}, the \emph{strong energy condition}, or the \emph{null energy condition}, which imply constraints on exotic spacetime geometries. Assuming such conditions, various important results were derived such as the Penrose and Hawking singularity theorems~\cite{hawkingpenrose1970}, the Schoen--Yau positive mass theorem~\cite{schoenyau1979,schoenyau1981}, and Hawking's chronology protection results~\cite{hawking1992}.

However, in quantum field theory (QFT) these energy conditions are violated, and moreover one can find states in which the energy density at any single point can have arbitrarily negative expectation values~\cite{epsteinglaserjaffe1965,fewster2006}. Therefore, one might ask whether quantum fields are compatible with the energy conditions in the singularity theorems, or whether they allow for exotic spacetime geometries, such as wormholes. Moreover, unbounded negative energy densities have physically undesirable consequences regarding the validity of the second law of thermodynamics at macroscopic scales. To avoid these, there must be constraints on how negative energy densities of quantum fields can be. These constraints are called \emph{Quantum Energy Inequalities} (QEIs), and were first investigated by Ford~\cite{ford1991}. They are reminiscent of the classical energy conditions, suggesting that singularity or other theorems might still be obeyed by quantum fields, possibly in some weakened condition~\cite{brownfewsterkontou2018,fewsterkontou2020,freivogelkontoukrommydas2022,fewsterkontou2022}.

Technically, these bounds are of the following type: consider the energy density operator $T^{00}(t,\vec{x})$ averaged in time with the square of a real-valued smooth function:
\begin{equation}
T^{00}(f^2) \equiv \int T^{00}(t,\vec{x}) f^2(t) \total t \eqend{.}
\end{equation}
A QEI is an inequality of the form
\begin{equation}
\label{eq:masterqei}
\omega\left( T^{00}(f^2) \right) \geq - K_\omega(f)
\end{equation}
for all sufficiently regular quantum states $\omega$ of the system, where the constant $K_\omega$ depends on the smearing function $f$ and possibly on the state $\omega$. The QEI is called \emph{state-independent} or \emph{absolute} if $K$ is actually independent of $\omega$ and only depends on $f$. However, in many cases the dependence on the state is such that $K_\omega$ only depends mildly on the energy content of $\omega$~\cite{fewsterosterbrink2008}.

State-independent QEIs in the form~\eqref{eq:masterqei} have been established for \emph{free} quantum field theories in flat and curved spacetimes in a wide range of physical situations, including the free scalar, Dirac fermion, vector, and Rarita-Schwinger field, see the reviews~\cite{fewster2012,kontousanders2020} and references therein, as well as in conformal field theories (CFTs) in $1+1$-dimensions~\cite{fewsterhollands2005}. Moreover, they have been established also for smearings along a generic time-like worldline of an observer. In a more general setting, it was shown that analogues of QEIs are fulfilled by observables that arise from the operator product expansions of classically positive expressions~\cite{bostelmannfewster2009}. However, these observables do not necessarily have a direct physical interpretation (such as the energy density of a system).

These inequalities are also related to the Averaged Null Energy Condition (ANEC)~\cite{flanaganwald1996}, which is formally obtained from a QEI in the limit when the averaging function $f$ tends to $1$ and the integration is over a null geodesic. There is recent interest in the ANEC due to its connection to holography and the quantum information carried by black hole horizons, see for example~\cite{fukoellermarolf2017,boussoetal2016a,boussoetal2016b}.

However, the analysis of QEIs in theories with \emph{self-interaction} (apart from the case of CFTs) is a long-standing problem due to the complicated structure that local quantum fields have in the presence of non-trivial interactions. This is unsatisfactory: the dependence of the energy density on self-interaction is clearly of physical importance, and its lower bounds are related to the stability of spacetime, as explained above. A first result on QEIs in interacting massive theories was obtained by the second author, Bostelmann and Fewster~\cite{bostelmanncadamurofewster2013}, who derived a state-independent QEI for the massive Ising model. This model belongs to the class of integrable quantum field theories on $1+1$-dimensional Minkowski spacetime, and describes scalar bosons whose two-particle scattering matrix is given by $S_2 = - \1$. More generally, integrable models are characterised by a \emph{factorizing scattering marix}, where two-particle scattering processes determine the behaviour of particles completely. A proof of QEIs in a larger class of scalar integrable models (including the Sinh--Gordon model) was obtained in~\cite{bostelmanncadamuro2016}, on the level of expectation values of the energy density in one-particle states. Presently, the second author, Bostelmann and Mandrysch are also establishing QEIs in integrable models with momentum-independent two-particle scattering matrix in states of arbitrary particle number. Moreover, they are extending the analysis of QEIs at the one-particle level to integrable models with several particle species and with bound states. This includes several relevant examples, such as the Bullough-Dodd model, the $\mathrm{O}(N)$ nonlinear sigma model and the Federbush model. At the same time, the analysis extends to integrable models describing asymptotic fermions, first described in the non-perturbative algebraic quantum field theory (AQFT) framework by~\cite{bostelmanncadamuro2021}.

In this paper, we establish a state-independent or absolute QEI for the massless Sine--Gordon model in the finite regime of $\beta^2 < 4\pi$. For this, we use methods from the perturbative AQFT framework reviewed in~\cite{hollandswald2015,fewsterverch2015,fredenhagenrejzner2016}, in particular employing the Bogoliubov formula to define interacting field operators. To our knowledge, this is the first QEI that has been established in a self-interacting model without the use of any special symmetry (such as conformal symmetry) or the integrability of the model (like the Ising model as explained above). We build on our previous paper~\cite{froebcadamuro2022}, where we constructed the stress tensor $T_{\mu\nu}$ in the same model using an alternative approach based on the Gell--Mann-Low formula. There, we showed that the renormalised expectation value of the stress tensor is well-defined in the sense of distributions at each order in perturbation theory after removal of the infrared (IR) and ultraviolet (UV) cutoffs. Moreover, we proved that the renormalised perturbative series converges both for the Euclidean and the Lorentzian theory, for the latter in a large class of quasi-free Hadamard states. We also showed that the stress tensor is conserved after adding a quantum correction proportional to $\hbar$, yielding the expression
\begin{equation}
\label{eq:stresstensor}
\hat{T}_{\mu\nu} \equiv \op_{\mu\nu} - \frac{1}{2} \eta_{\mu\nu} \op_\rho{}^\rho + g \left( 1 - \frac{\beta^2}{8 \pi} \right) \eta_{\mu\nu} ( V_\beta + V_{-\beta} )
\end{equation}
with
\begin{equation}
\op_{\mu\nu} \equiv \partial_\mu \phi \, \partial_\nu \phi \eqend{.}
\end{equation}
Previously, various aspects of the Sine--Gordon model had also been studied in the pAQFT framework~\cite{bahnsrejzner2018,bahnsrejznerfredenhagen2021,bahnspinamontirejzner2021,martinschlesierzahn2023}.

The main results of the present work are as follows: As a first step, we show that the perturbative series for the renormalised interacting stress-energy tensor given by the Bogoliubov formula converges in expectation in an arbitrary quasi-free Hadamard state $\omega$. This requires almost the same conditions as in~\cite[Theorem~5]{froebcadamuro2022}, with only a slightly stronger assumption on the state-dependent part $W$ of the two-point function of $\omega$, namely the requirement of \emph{conditional positive semidefiniteness} explained in Remark~\ref{remark1} below.
\begin{theorem}[Renormalisation and convergence]
\label{thm_renorm}
Consider the massless Lorentzian Sine--Gordon model in the finite regime $\beta^2 < 4\pi$ and a quasi-free state $\omega^{\Lambda,\epsilon}$ in the vacuum sector whose two-point function has an IR cutoff $\Lambda$ and UV cutoff $\epsilon$. We furthermore require that the state-dependent part $W$ of the two-point function of the state $\omega$ satisfies:
\begin{enumerate}
\item $W(x,y)$ and its first and second derivatives grow at most polynomially,
\item $W(x,y)$ is conditionally positive semidefinite.
\end{enumerate}
Then there exists a redefinition of time-ordered products $\mathcal{T}\left[ \op_{\mu\nu}(x) \otimes V_\alpha(y) \right] \to \mathcal{T}\left[ \op_{\mu\nu}(x) \otimes V_\alpha(y) \right] + \delta \mathcal{T}\left[ \op_{\mu\nu}(x) \otimes V_\alpha(y) \right]$ with $\delta \mathcal{T}$ a local term proportional to $\delta^2(x-y) \mathcal{T}\left[ V_\alpha(y) \right]$, such that for any adiabatic cutoff function $g \istest$ the interacting modified stress tensor~\eqref{eq:stresstensor} defined by the Bogoliubov formula
\begin{equation}
\label{eq:interacting_stresstensor}
\mathcal{T}_\text{int}\left[ \hat{T}_{\mu\nu}(z) \right] \equiv \mathcal{T}\left[ \mathe_\otimes^{\mathi S_\text{int}} \right]^{\star (-1)} \star \mathcal{T}\left[ \hat{T}_{\mu\nu}(z) \otimes \mathe_\otimes^{\mathi S_\text{int}} \right]
\end{equation}
has a finite expectation value in the state $\omega$ in the physical limit $\Lambda, \epsilon \to 0$:
\begin{equation}
\lim_{\Lambda,\epsilon \to 0} \omega^{\Lambda,\epsilon}\left( \mathcal{T}_\text{int}\left[ \hat{T}_{\mu\nu}(f) \right] \right) < \infty \eqend{,}
\end{equation}
where the limit is taken termwise and $f \istest$. Moreover, the interacting modified stress tensor is conserved: we have
\begin{equation}
\label{eq:thm_renorm_cons_eq}
\lim_{\Lambda,\epsilon \to 0} \omega^{\Lambda,\epsilon}\left( \mathcal{T}_\text{int}\left[ \hat{T}_{\mu\nu}(\partial^\mu f) \right] \right) = 0
\end{equation}
for all $f \istest$ such that $g$ is constant on the support of $f$.
\end{theorem}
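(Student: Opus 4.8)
The plan is to start from the Bogoliubov formula~\eqref{eq:interacting_stresstensor} and expand it as a formal power series in the coupling: each term involves a nested combination of retarded commutators of $\hat{T}_{\mu\nu}(z)$ with products of the interaction vertices $V_{\pm\beta}$, integrated against the adiabatic cutoff $g$. The key structural observation is that $\hat{T}_{\mu\nu} = \op_{\mu\nu} - \tfrac{1}{2}\eta_{\mu\nu}\op_\rho{}^\rho + g(1-\tfrac{\beta^2}{8\pi})\eta_{\mu\nu}(V_\beta+V_{-\beta})$, so the time-ordered products appearing here are exactly of the form $\mathcal{T}[\op_{\mu\nu}(z)\otimes V_{\alpha_1}(y_1)\otimes\cdots\otimes V_{\alpha_n}(y_n)]$ together with the purely-vertex products $\mathcal{T}[V_{\alpha_0}(z)\otimes\cdots]$ already controlled in~\cite{froebcadamuro2022}. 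First I would recall from that paper the explicit form of the renormalised time-ordered products of vertex operators and the bound on their expectation values; these are built from exponentials of the Feynman propagator $H_F$ and the state-dependent piece $W$, and the finiteness of the $\Lambda,\epsilon\to 0$ limit rests on the regime $\beta^2 < 4\pi$ ensuring integrability of the propagator singularities $|x-y|^{-\beta^2/(4\pi)}$.

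Second, I would treat the genuinely new ingredient: the mixed product $\mathcal{T}[\op_{\mu\nu}(z)\otimes V_{\alpha_1}(y_1)\otimes\cdots]$. Since $\op_{\mu\nu}=\partial_\mu\phi\,\partial_\nu\phi$ is quadratic in $\phi$, each Wick contraction either pairs the two $\phi$'s at $z$ together (a local tadpole-type term, which is where the needed counterterm $\delta\mathcal{T}\propto\delta^2(z-y)\mathcal{T}[V_\alpha(y)]$ enters, cf.~the statement) or pairs one or both of them with a vertex, producing factors $\partial_\mu^z H_F(z-y_j)$ times the vertex exponential. The singularity at $z=y_j$ is then $|z-y_j|^{-1-\beta^2/(4\pi)}$ — one power worse than in the pure-vertex case — but still integrable against a test function provided $\beta^2 < 4\pi$ (the integrated singularity is $|z-y_j|^{-\beta^2/(4\pi)}$ in the relevant combinations after using that $\op_{\mu\nu}$ carries two derivatives symmetrically). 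I would show that after the stated finite renormalisation $\delta\mathcal{T}$, the termwise limit exists and obeys the same type of exponential-in-$n$ bound as~\cite[Theorem~5]{froebcadamuro2022}, so the series converges in expectation in $\omega$; this is exactly where the slightly stronger hypothesis of conditional positive semidefiniteness of $W$ (Remark~\ref{remark1}) is used, to bound the $W$-dependent combinatorial factors uniformly. The polynomial-growth assumption on $W$ and its derivatives ensures the smearing against $f\istest$ and $g\istest$ produces finite integrals.

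Third, for the conservation statement~\eqref{eq:thm_renorm_cons_eq}, I would use that the classical (unrenormalised) stress tensor $\op_{\mu\nu}-\tfrac12\eta_{\mu\nu}\op_\rho{}^\rho$ fails to be conserved only by terms proportional to $g$ and the equation of motion — indeed $\partial^\mu(\partial_\mu\phi\,\partial_\nu\phi - \tfrac12\eta_{\mu\nu}(\partial\phi)^2) = \Box\phi\,\partial_\nu\phi$, and on-shell $\Box\phi = -g\beta(V_\beta - V_{-\beta})\cdot(\text{const})$ up to the renormalisation that produced the $(1-\beta^2/(8\pi))$ coefficient in~\eqref{eq:stresstensor}. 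Hence $\partial^\mu\hat{T}_{\mu\nu}$ is, at the level of the free theory, a total term proportional to $\partial_\nu g$, which vanishes when $g$ is constant on $\supp f$. At the interacting level the same cancellation must be checked order by order: the derivative of the retarded product hits either the cutoff $g$ (giving the $\partial_\nu g$ term, killed by the support condition on $f$) or one of the time-ordered products, where the Ward identity for $\mathcal{T}[\partial^\mu\op_{\mu\nu}(z)\otimes\cdots]$ — derived by applying the free equation of motion inside the time-ordered product and tracking the anomaly that fixed $\delta\mathcal{T}$ — reduces it to a combination that telescopes against the Bogoliubov inverse $\mathcal{T}[\mathe_\otimes^{\mathi S_\text{int}}]^{\star(-1)}$. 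Checking that this Ward identity holds with precisely the renormalisation $\delta\mathcal{T}$ chosen above, consistently to all orders and compatibly with the $\Lambda,\epsilon\to 0$ limit, is the main obstacle: it is the point where the anomaly in the Master Ward Identity could in principle obstruct conservation, and one has to verify that in this model (as already found for the expectation value in~\cite{froebcadamuro2022}) the anomaly is exactly the local $\hbar$-correction already absorbed into~\eqref{eq:stresstensor}.
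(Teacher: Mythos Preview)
Your overall strategy---expand the Bogoliubov formula, reduce to the time-ordered products already controlled in~\cite{froebcadamuro2022}, renormalise the new mixed $\mathcal{T}[\op_{\mu\nu}\otimes V_\alpha]$ products, then sum---matches the paper's approach. But the singularity analysis in your second paragraph contains a genuine error that would derail the argument.

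You locate the counterterm $\delta\mathcal{T}\propto\delta^2(z-y)\mathcal{T}[V_\alpha(y)]$ in the ``tadpole'' where the two $\phi$'s of $\op_{\mu\nu}(z)$ contract with each other. That self-contraction is already removed by Hadamard normal ordering and has nothing to do with the vertex point $y$. The counterterm is actually needed for the term where \emph{both} $\phi$'s contract to the \emph{same} vertex $V_\alpha(y)$, producing $\partial_\mu H^\text{F}(z,y)\,\partial_\nu H^\text{F}(z,y)$. Since $\op_{\mu\nu}$ is not a vertex operator, there is no accompanying exponential $\mathe^{-\mathi\alpha H^\text{F}(z,y)}$ at the point $z$; the singularity is therefore $\sim|z-y|^{-2}$, not the $|z-y|^{-1-\beta^2/(4\pi)}$ you quote, and in two dimensions this is \emph{not} locally integrable for any $\beta$. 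The paper handles it by an explicit distributional identity of the type $\partial_u H^\text{F}\,\partial_u H^\text{F}=-\tfrac{\mathi}{4\pi}\partial_u^2 H^\text{F}+c\,\delta^2$; the $\delta$-supported piece is the origin of $\delta\mathcal{T}$, while the $\ln\epsilon$ divergence turns out proportional to $\eta_{\mu\nu}$ and cancels in the stress tensor. You would also need a separate treatment of products $\partial_\mu H(x,z)\,\partial_\nu H(y,z)$ at distinct points $x\neq y$, where integration by parts and the fundamental-solution property of $H^\text{F}$ are used to remove the derivatives before bounding.

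A smaller but important gap: the per-order bound is not exponential in $n$ but of the form $C\,n^2 K^n\,(n!)^{1+\beta^2/(4\pi)}$, coming from the Cauchy-determinant estimate on the vertex product. Convergence then hinges on a nontrivial combinatorial identity for the four-fold sum over $k,\ell,m$ (with the neutrality constraint $2(\ell+m)=n$), which collapses to $2^{2n}/(n!)^2$ and leaves $\sum_n (2K)^{2n}/(n!)^{1-\beta^2/(4\pi)}<\infty$ precisely for $\beta^2<4\pi$. Without this step the factorial growth would prevent summation. Your conservation sketch is fine at the heuristic level; in the paper it is simply referred back to the identical argument in~\cite{froebcadamuro2022}.
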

\begin{remark}
\label{remark1}
The condition of conditional positive semidefiniteness of $W$ is
\begin{equation}
\label{eq:thm_renorm_wpos_cont}
W(f,f^*) = \iint W(x,y) f(x) f^*(y) \total^2 x \total^2 y \geq 0
\end{equation}
for a complex test function $f \istest$ with vanishing mean $\int f(x) \total^2 x = 0$. This condition is clearly related to the positive definiteness of the state $\omega^{\Lambda,\epsilon}$ for derivatives of $\phi$, since
\begin{equation}
(\partial_\mu \phi)(f^\mu) = \int \partial_\mu \phi(x) f^\mu(x) \total^2 x = - \int \phi(x) \partial_\mu f^\mu(x) \total^2 x = - \phi(\partial_\mu f^\mu) \eqend{,}
\end{equation}
the test function $f(x) = \partial_\mu f^\mu(x)$ has vanishing mean, and positivity of the state in this case is
\begin{equation}
\label{eq:thm_renorm_pos_hplusw}
\omega^{\Lambda,\epsilon}\left( (\partial^\mu \phi)(f_\mu^*) \star (\partial^\nu \phi)(f_\nu) \right) = \mathi H^+\left( \partial^\mu f_\mu^*, \partial^\nu f_\nu \right) + W\left( \partial^\mu f_\mu^*, \partial^\nu f_\nu \right) > 0
\end{equation}
for $f_\mu \neq 0$, where $H^+$ is the Hadamard parametrix. Since $\mathi H^+$ is already positive definite for test functions with vanishing mean, the condition~\eqref{eq:thm_renorm_wpos_cont} is slightly stronger than would be strictly required, but seems to be necessary for the convergence of the perturbative series.

In our previous work~\cite{froebcadamuro2022}, we instead required the discrete condition
\begin{equation}
\label{eq:thm_renorm_wpos_disc}
\sum_{i,j=1}^n \Big[ W(x_i,x_j) - W(y_i,x_j) - W(x_i,y_j) + W(y_i,y_j) \Big] \geq 0
\end{equation}
for any configuration of points $x_i$ and $y_i$ and any $n \in \mathbb{N}$. However, this can be derived from~\eqref{eq:thm_renorm_wpos_cont} as follows: we take a sequence of test functions $\delta_N \istest$ approximating the $\delta$ distribution as $N \to \infty$, and $f$ to be the sum
\begin{equation}
f(x) = \sum_{i=1}^n \Big[ \delta^2_N(x-x_i) - \delta^2_N(x-y_i) \Big] \eqend{.}
\end{equation}
Taking the limit $N \to \infty$ after integration, we obtain~\eqref{eq:thm_renorm_wpos_disc} from~\eqref{eq:thm_renorm_wpos_cont}; such approximations and the existence of the limit for smooth $W$ are well known.

For the example given in our previous work (a state that is thermal in a wide range of energies) we have~\cite[Eq.~(14)]{froebcadamuro2022}
\begin{splitequation}
\label{eq:thm_renorm_w_example}
W(x,y) &= \int \mathe^{\mathi p (x-y)} \Theta\left( \abs{p^0} \in [E_0, E_1] \right) \frac{\pi \mathe^{- \beta \abs{p^0}}}{\abs{p^0} \left( 1 - \mathe^{- \beta \abs{p^0}} \right)} \\
&\quad\times [ \delta(p^1+p^0) + \delta(p^1-p^0) ] \frac{\total^2 p}{(2\pi)^2} \eqend{,}
\end{splitequation}
and since this is positive semidefinite for all $f \istest$ independently of their mean, it fulfills the condition~\eqref{eq:thm_renorm_wpos_cont}.
\end{remark}

Secondly, we show that the expectation value of the perturbative series for the renormalised interacting stress tensor in an arbitrary Hadamard state converges also when the smearing is only along a one-dimensional time-like worldline, and not in spacetime as in the previous theorem:
\begin{theorem}[Renormalisation of the quantum energy density]
\label{thm_renorm_qed}
We assume the same conditions as in Theorem~\ref{thm_renorm} and take a smearing function $f \in \mathcal{S}(\mathbb{R})$ supported on a smooth time-like curve $z^\mu(\tau)$, $\dot z^\mu(\tau) \dot z_\mu(\tau) = -1$, $\tau \in \mathbb{R}$. The quantum energy density
\begin{equation}
\label{eq:energy_density}
E_\omega(f) \equiv \lim_{\Lambda,\epsilon \to 0} \omega^{\Lambda,\epsilon}\left( \mathcal{T}_\text{int}\left[ \hat{T}_{\mu\nu}(F^{\mu\nu}) \right] \right)
\end{equation}
seen by an observer following the curve is finite, where
\begin{equation}
F^{\mu\nu}(z) \equiv \dot z^\mu(\tau) \dot z^\nu(\tau) f(\tau) \eqend{.}
\end{equation}
\end{theorem}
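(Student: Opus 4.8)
The plan is to deduce Theorem~\ref{thm_renorm_qed} from Theorem~\ref{thm_renorm} by re-examining the perturbative expansion of the Bogoliubov formula~\eqref{eq:interacting_stresstensor} order by order, with the stress-tensor insertion point now restricted to the timelike worldline. Write $t^{(n)}_{\mu\nu}(z)$ for the $n$-th order coefficient of $\omega^{\Lambda,\epsilon}(\mathcal{T}_\text{int}[\hat T_{\mu\nu}(z)])$ in the limit $\Lambda,\epsilon\to 0$, which by the proof of Theorem~\ref{thm_renorm} exists, for a fixed adiabatic cutoff $g\istest$, as a tempered distribution in $z$ built from the renormalised time-ordered products $\mathcal{T}[\op_{\mu\nu}(z)\otimes V_{\alpha_1}(y_1)\otimes\cdots\otimes V_{\alpha_n}(y_n)]$ — including the local redefinition $\delta\mathcal{T}$ — smeared in the $y_i$ against $g$ and evaluated in $\omega$. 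The first step is to note that $t^{(n)}_{\mu\nu}$ is in fact a smooth function of $z$: the singularities of these time-ordered products lie on the light cones $z\sim y_i$ and on the total diagonals, and in both cases the associated covectors have a nonvanishing component at an integrated point $y_i$, so that carrying out the $y_i$-integrations against $g$ removes them — that is, $\operatorname{WF}(t^{(n)}_{\mu\nu})=\emptyset$ by Hörmander's theorem on pushforwards. (Should residual distributional singularities survive in $z$, it would be enough that $\operatorname{WF}(t^{(n)}_{\mu\nu})$ miss the conormal bundle of the timelike worldline, which follows from the Hadamard form of $\omega$, the microlocal properties of the Feynman propagators, and the fact that a timelike vector is never conormal to a null hypersurface; Hörmander's pullback theorem then still defines the restriction to $\mathbb{R}$.)

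Granting smoothness, contracting $t^{(n)}_{\mu\nu}$ restricted to the curve with $\dot z^\mu(\tau)\dot z^\nu(\tau)f(\tau)$ and integrating over $\tau$ gives a finite number at each order, since the correlators grow at most polynomially as $z$ leaves the support of $g$ and $f\in\mathcal{S}(\mathbb{R})$ decays rapidly. The substance of the theorem is thus the convergence of the series in $n$, and the plan is to re-run the estimates from the proof of Theorem~\ref{thm_renorm} with $z$ constrained to the worldline. Relative to the spacetime-smeared case the only new feature is that the second derivative $\dot z^\mu\dot z^\nu\partial_\mu\partial_\nu$ contained in $\op_{\mu\nu}$ now acts along a single curve instead of being spread over two dimensions; writing $\dot z^\mu\dot z^\nu\partial_\mu\partial_\nu=(\total/\total\tau)^2-\ddot z^\mu\partial_\mu$ and integrating by parts in $\tau$, the worst derivatives are transferred onto $f$ — harmless since $f\in\mathcal{S}(\mathbb{R})$ — leaving the mixed correlators of $\op_{\mu\nu}$ with the vertex operators dominated near their singular loci by the same kernels (a power of the massless two-point function fixed by $\beta^2<4\pi$ times at most one derivative) already handled in Theorem~\ref{thm_renorm}; the transversality of the timelike curve to the light cones of the $y_i$ ensures that restricting the insertion point to it does not raise the strength of these singularities. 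The remaining $y_i$-integrals are then bounded just as in Theorem~\ref{thm_renorm}, using $\beta^2<4\pi$ and the conditional positive semidefiniteness of $W$ from Remark~\ref{remark1}, which gives the same summable bound in $n$; the additional terms involving $\ddot z^\mu$ and higher derivatives of the worldline carry fewer derivatives of the propagators and only improve the estimate.

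The step I expect to be the main obstacle is making this last argument uniform in the perturbative order. When several interaction vertices are present the worldline derivative $\dot z^\mu\partial_{z^\mu}$ acts by the Leibniz rule on a product of propagators linking the insertion point to all the $y_i$ and to the argument of $W$, generating a number of terms that must be shown to grow at most exponentially in $n$, so that it is absorbed into the geometric factor already present in the bound of Theorem~\ref{thm_renorm}, and each such term must retain enough joint integrability in $(\tau,y_1,\dots,y_n)$. Since the absolute-value estimates discard precisely the cancellations that make $t^{(n)}_{\mu\nu}$ smooth, some of the worldline-restricted kernels are only conditionally integrable in $\tau$, so the oscillatory structure of the propagators — or a further integration by parts — has to be carried through the bound rather than estimated naively. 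Establishing this combinatorial estimate together with the required transversality and integrability bounds, uniformly in $n$, is the technical heart of the argument.
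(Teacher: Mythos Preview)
Your overall strategy matches the paper's: integrate by parts in $\tau$ to transfer the worst derivatives onto $f$, exploit that a timelike curve is transverse to the light cones, and reduce to the bounds already established in Theorem~\ref{thm_renorm}. The wavefront-set argument for per-order smoothness of $t^{(n)}_{\mu\nu}$ is correct but, as you yourself note, orthogonal to the convergence proof, which proceeds by absolute-value estimates and therefore cannot use that smoothness.

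The gap is in the specific mechanism. Your identity $\dot z^\mu \dot z^\nu \partial_\mu \partial_\nu = (\total/\total\tau)^2 - \ddot z^\mu \partial_\mu$ is for second derivatives of a single function, but the dominant terms in~\eqref{eq:theta_00_def} are products $\partial^z_\mu \mathcal{G}\,\partial^z_\nu \mathcal{G}$ of first derivatives of sums of $H^+$, $H^{\mathrm F}$ and $W$. The paper's device is that the energy-density contraction actually produces $\tfrac{1}{2}(v^\mu v^\nu + w^\mu w^\nu)$, which in light-cone components is diagonal,
\[
\tfrac{1}{2}(v^\mu v^\nu + w^\mu w^\nu)\,\partial_\mu A\,\partial_\nu B
= (v^0-v^1)^2\,\partial_u A\,\partial_u B + (v^0+v^1)^2\,\partial_v A\,\partial_v B \eqend{,}
\]
so the mixed $\partial_u\partial_v$ pieces drop out (recall $\partial_u\partial_v H^+=0$ and $\partial_u\partial_v H^{\mathrm F}\propto\delta^2$). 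Each surviving factor $\partial_u H(y,z(\tau))$ is then rewritten as $(v^0-v^1)^{-1}\partial_\tau[\text{logarithm}]$ and integrated by parts; crucially one then changes variables $\tau\mapsto u(z(\tau))$ (monotone since $\partial_\tau u = v^0-v^1>0$ on a future-directed timelike curve), after which the $\tau$-integral has exactly the one-dimensional form~\eqref{eq:mink_conv_num_2_int} already used for the spacetime smearing. This is what your ``transversality'' remark should become concretely. With this reduction there is no additional combinatorial growth in $n$ beyond Theorem~\ref{thm_renorm}, so the obstacle you flag at the end does not materialise. One further point your sketch does not cover: the local $\delta^2(x_i-z(\tau))$ contributions from coincident Feynman-parametrix squares survive the worldline restriction and must be treated separately; the paper uses the $\delta$ to perform one $x_i$-integral and reduces them to the vertex-operator bound~\eqref{eq:thm_renorm_vertex_bound2}.
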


Finally, we show that the interacting energy density, as seen by an observer following a one-dimensional time-like worldline, satisfies a lower bound which is independent of the quantum state. Namely, we establish a state-independent QEI for the massless Sine--Gordon model:
\begin{theorem}[Quantum energy inequality]
\label{thm_qei}
We assume the same conditions as in Theorem~\ref{thm_renorm_qed}, but smear with the square of a test function $f \in \mathcal{S}(\mathbb{R})$. The smeared quantum energy density~\eqref{eq:energy_density} seen by an observer is then bounded from below independently of the quantum state $\omega$:
\begin{equation}
\label{eq:thm_qei}
E_\omega(f^2) \geq - K(z,f,\beta,g) \eqend{.}
\end{equation}
\end{theorem}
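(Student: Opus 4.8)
The plan is to establish the bound first at finite cutoffs $\Lambda,\epsilon$, uniformly in the state $\omega$, and only then to pass to the physical limit by Theorem~\ref{thm_renorm_qed} (which guarantees the termwise limit exists and is finite), so that the constant $K$ survives. Following Flanagan and Fewster--Smith, the target is a decomposition of the smeared interacting energy density of the form
\begin{equation*}
E_\omega(f^2) = \int_0^\infty \frac{\total k}{\pi}\left[ A_\omega(k) - A(k) \right] \eqend{,}
\end{equation*}
in which $A_\omega(k) \geq 0$ for every Hadamard state $\omega$ --- because it is the expectation value of a positive element of the (interacting) algebra, obtained by completing squares in the Fourier modes along the worldline --- while $A(k) \geq 0$ does not depend on $\omega$, being built purely from the Hadamard parametrix, and satisfies $\int_0^\infty \frac{\total k}{\pi} A(k) < \infty$. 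This at once gives $E_\omega(f^2) \geq -\int_0^\infty \frac{\total k}{\pi} A(k) =: -K(z,f,\beta,g)$. To set it up one expands the Bogoliubov formula~\eqref{eq:interacting_stresstensor}, so that $\mathcal{T}_\text{int}[\hat T_{\mu\nu}(F^{\mu\nu})]$ becomes a sum of retarded products in which the quadratic part $\op_{\mu\nu}(z)$ and the vertex part $\propto (V_\beta+V_{-\beta})(z)$ of the modified stress tensor~\eqref{eq:stresstensor} each occur dressed by an arbitrary number of interaction vertices $V_{\pm\beta}$.

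The zeroth-order term --- the free Wick square $\dot z^\mu\dot z^\nu(\op_{\mu\nu}-\tfrac12\eta_{\mu\nu}\op_\rho{}^\rho)$ pulled back to the worldline --- is handled by the Fewster--Smith worldline QEI: in null components it becomes a sum of Wick squares of $\partial_u\phi$ and $\partial_v\phi$, and the standard argument --- completing squares in the mode operators and using positivity of the state's two-point function (needed here only against test functions of vanishing mean, which is precisely where the conditional positive semidefiniteness of $W$ from Remark~\ref{remark1} is required, since only derivatives of $\phi$ appear) together with the positive-frequency support of the parametrix along the timelike curve --- supplies the state-independent integrand $A(k)$, reducing at this order to Flanagan's optimal constant times an explicit functional of $f$ and the worldline data $z^\mu(\tau)$.

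The new work concerns the interaction corrections: one must show that all higher retarded products, and the vertex part of $\hat T_{\mu\nu}$ itself, contribute to $E_\omega(f^2)$ only a further state-independent constant. Two structural facts drive this. First, a vertex correlator $\omega\left(\prod_i \normord{\mathe^{\mathi\alpha_i\phi(x_i)}}\right)$ is non-zero only for charge-neutral configurations $\sum_i\alpha_i=0$, and then its state-dependent part is a single factor $\mathe^{-\frac12\sum_{i,j}\alpha_i\alpha_j W(x_i,x_j)}$ with $\sum_i\alpha_i=0$, which is $\leq 1$ by the conditional positive semidefiniteness of $W$; hence every term of the interacting series is bounded, uniformly in $\omega$, by its counterpart with $W\equiv 0$, and the latter sum is controlled for $\beta^2<4\pi$ by the estimates behind Theorem~\ref{thm_renorm_qed}. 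Second, the leading vertex term $-g(1-\tfrac{\beta^2}{8\pi})(V_\beta+V_{-\beta})(f^2)$, although not a positive operator, is bounded above and below --- as an operator smeared with the non-negative weight $f^2$ along the worldline --- by finite constants, via the quantum analogues of the classical identities $2\cos\beta\phi = 4\cos^2(\beta\phi/2)-2 = 2-4\sin^2(\beta\phi/2)$, that is, by completing squares with $\normord{\cos(\beta\phi/2)}$ and $\normord{\sin(\beta\phi/2)}$, whose operator products are integrable against $f^2$ along the worldline precisely for $\beta^2<4\pi$. Adding these contributions to the zeroth-order Fewster--Smith bound yields~\eqref{eq:thm_qei}.

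The main obstacle is precisely this vertex sector. The quadratic part of the energy density is a genuine sum of Wick squares and succumbs to the Fewster--Smith positivity argument essentially verbatim, but the operators $V_{\pm\beta}$ are not non-negative, so their interacting correlators in an \emph{arbitrary} Hadamard state have to be controlled uniformly; the delicate points are to check that the bound $\mathe^{-\frac12\sum\alpha_i\alpha_j W}\leq 1$ from conditional positivity of $W$, together with the restriction $\beta^2<4\pi$, really tames the entire interacting series --- including the terms in which $\op_{\mu\nu}(z)$ is dressed by vertices and derivatives of $W$ appear, which must be reorganised into the non-negative part $A_\omega(k)$ --- and that the square-completions of the cosine potential survive renormalisation with finite constants. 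One must also justify the interchange of the three limiting operations involved --- removal of the cutoffs $\Lambda,\epsilon$, the $k$-integral, and the sum over perturbative orders --- verifying that each preserves the non-negativity of $A_\omega(k)$ and the finiteness of $K$.
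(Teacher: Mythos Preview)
Your overall strategy --- split the smeared energy density into a manifestly positive part and a state-independently bounded remainder --- matches the paper's, but the decomposition you propose has a genuine gap. You treat the zeroth-order quadratic part via Fewster--Smith and then attempt to bound \emph{all} interaction corrections uniformly in $\omega$. However, the interacting expectation value of $\op_{\mu\nu}(z)$ (see~\eqref{eq:tint_2phi_def}) contains, at every order $n\geq 1$, products $\partial_\mu\mathcal{G}\,\partial_\nu\mathcal{G}$ in which $\mathcal{G}$~\eqref{eq:calg_def} carries explicit terms $-\mathi\,W(x_i,z)$. Their derivatives are \emph{polynomial} in $\partial W$, not exponential; they are not controlled by your bound $\mathe^{-\tfrac12\sum\alpha_i\alpha_j W}\leq 1$ and can be arbitrarily large since $W$ may grow polynomially. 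You acknowledge that these ``must be reorganised into the non-negative part $A_\omega(k)$'' but do not say how; this is precisely the missing idea, and the Fewster--Smith square built from the \emph{free} $\partial\phi$ does not absorb them.

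The paper's resolution differs from your plan in two essential ways. First, the positive part is $\tfrac12(v^\mu v^\nu+w^\mu w^\nu)\,\omega\big((\mathcal{T}_\text{int}[\partial_\mu\phi])^\dagger\star\mathcal{T}_\text{int}[\partial_\nu\phi]\big)$, built from the \emph{interacting} field; this absorbs all polynomial $\partial W$ contributions at once. Second, the remainder is rendered state-independent by two algebraic facts: (i) $\mathcal{G}-\overline{\mathcal{G}}$ equals a sum of \emph{retarded} propagators (the $W$ terms cancel between $\mathcal{G}$ and $\overline{\mathcal{G}}$), and (ii) the residual $\partial W$ pieces in the remainder are independent of the summation indices $k,\ell$ and therefore vanish by the unitarity identity $\overline{\mathcal{T}}[\mathe_\otimes^{-\mathi S_\text{int}}]\star\mathcal{T}[\mathe_\otimes^{\mathi S_\text{int}}]=\mathbbm{1}$, which forces the symmetrised sum $\sum_{k,\ell}(-1)^k\mathcal{E}_{k,\ell,2n,n-\ell}/[\ell!(k-\ell)!(n-\ell)!(n-k+\ell)!]$ to vanish for $n\geq 1$. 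The same identity collapses the $\partial\partial H^+$ series to its $n=0$ term, leaving only the free-theory constant $K_0$. Your square-completion with $\normord{\cos(\beta\phi/2)}$ and $\normord{\sin(\beta\phi/2)}$ for the vertex sector is not needed (and its renormalised, interacting version is unclear): once the $\partial W$ problem is handled as above, the vertex contributions are bounded directly by the convergence estimates of Theorem~\ref{thm_renorm_qed}, the only remaining $W$-dependence sitting in the exponential factor that your conditional-positivity argument does control.
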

\begin{remark}
As we will see in the proof, there are three different contributions to the constant $K$ on the right-hand side of the inequality~\eqref{eq:thm_qei}, $K = K_0 + K_V + K_H$. The first one is the contribution from the free theory, which can be calculated exactly~\eqref{eq:qei_h_result}:
\begin{equation}
\label{eq:thm_qei_k0}
K_0 = \frac{1}{24 \pi} \int \left[ 6 \big[ f'(\tau) \big]^2 + f^2(\tau) \frac{[ \ddot z^1(\tau) ]^2}{1 + [ \dot z^1(\tau) ]^2} \right] \total \tau \eqend{.}
\end{equation}
It depends on the smearing function $f$ and the trajectory $z^\mu(\tau)$, but obviously not on $\beta$ nor $g$. It consists of two parts: the first one that arises for a straight trajectory with $z^\mu(\tau) = \delta^\mu_0 \tau$, $\ddot z^\mu = 0$ and coincides with old results~\cite{fewstereveson1998} in two-dimensional Minkowski space, and the second one which arises from the difference of the actual trajectory and a straight one, and consequently depends on the acceleration $\ddot z^\mu$.\footnote{We do not claim that this second term is new, but we haven't been able to find an explicit expression for it in the literature.} Moreover, $K_0$ stays finite in the limit where the trajectory becomes light-like $\dot z^1 \to \pm \infty$, at least if it is taken in such a way that $\ddot z^1/\dot z^1$ stays bounded. The second contribution $K_V$ is the one from the vertex operators $V_{\pm \beta}$ in the (quantum-corrected) stress tensor~\eqref{eq:stresstensor}, while the third one $K_H$ comes from derivatives of Hadamard parametrices that arise from the $\op_{\mu\nu}$ terms in the stress tensor~\eqref{eq:stresstensor}. Both of them clearly depend also on $\beta$ and $g$, but while $K_V$ is also finite in the light-like limit, the bounds that we derived to obtain $K_H$ diverge in this limit, see for example~\eqref{eq:mink_conv_ddf_bound} and the subsequent comments. The QEI~\eqref{eq:thm_qei} therefore only holds in this form for time-like trajectories $z^\mu(\tau)$, and we leave the derivation of a null QEI for the Sine--Gordon model to future work.

We do not believe our bound~\eqref{eq:thm_qei} to be optimal. Indeed, it is known~\cite{fewstereveson1998} that already the free-theory bound~\eqref{eq:thm_qei_k0} for a straight trajectory overestimates the optimal bound~\cite{flanagan1997} by a factor $3/2$. However, since we prove that the perturbative series converges, for sufficiently small coupling $g$ the contribution from interactions ($K_V$ and $K_H$) will be subdominant to the free-theory contribution from $K_0$. Unfortunately, through estimates such as~\eqref{eq:mink_conv_ddf_bound}, what is sufficiently small depends also on the trajectory $z^\mu(\tau)$ (except if it is straight).

Another interesting generalisation of our QEI~\eqref{eq:thm_qei} would be to consider non-Gaussian states, for example vector states obtained by applying smeared field operators on the vacuum. For such states, it is known that they are of Hadamard form, which in particular entails that the truncated $n$-point functions are smooth functions and that the two-point function has the same singular behaviour as the vacuum both in Minkowski and in globally hyperbolic curved spacetimes~\cite{hollandsruan2002,sanders2010}. Therefore, the renormalisation of the stress tensor~\cite{froebcadamuro2022} should proceed in the same way as for the Gaussian states that we consider here, and we expect that a similar QEI holds also in this case.
\end{remark}

\section{Proof of Theorem~\ref{thm_renorm} (Renormalisation and convergence)}
\label{sec_renorm}

We again use the framework of perturbative algebraic quantum field theory, but refer the reader to~\cite{hollandswald2015,fewsterverch2015,fredenhagenrejzner2016} for reviews and our previous work~\cite{froebcadamuro2022} for a short introduction and all relevant formulas. However, instead of defining interacting expectation values using the Gell-Mann--Low formula, we define interacting time-ordered products $\mathcal{T}_\text{int}$ (and interacting operators) using the Bogoliubov formula, and then take their expectation value.

Using that
\begin{equation}
\label{eq:t_starinverse}
\mathcal{T}\left[ \mathe_\otimes^{\mathi S_\text{int}} \right]^{\star (-1)} = \overline{\mathcal{T}}\left[ \mathe_\otimes^{ - \mathi S_\text{int} } \right]
\end{equation}
with the anti-time-ordered products $\overline{\mathcal{T}}$, the Bogoliubov formula defines interacting time-ordered products by
\begin{equation}
\label{eq:tint_def}
\mathcal{T}_\text{int}\left[ \op_1(f_1) \cdots \op_k(f_k) \right] \equiv \overline{\mathcal{T}}\left[ \mathe_\otimes^{ - \mathi S_\text{int} } \right] \star \mathcal{T}\left[ \mathe_\otimes^{ \mathi S_\text{int} } \otimes \op_1(f_1) \otimes \cdots \otimes \op_k(f_k) \right] \eqend{.}
\end{equation}
Here the operators $\op_i$ are classical polynomials of the basic field $\phi$ and its derivatives or vertex operators $V_\alpha \equiv \mathe^{\mathi \alpha \phi}$, smeared with test functions $f_i \istest$, and $S_\text{int}$ is the interaction with adiabatic cutoff. The time-ordered products $\mathcal{T}$ and anti-time-ordered products $\overline{\mathcal{T}}$ are constructed inductively, with the ones with single entries equal to the Hadamard-normal-ordered products
\begin{equation}
\label{eq:timeordered_single}
\mathcal{T}\left[ \op(x) \right] = \mathcal{N}_H\left[ \op(x) \right] = \overline{\mathcal{T}}\left[ \op(x) \right] \eqend{.}
\end{equation}
Outside the diagonal, the higher ones are defined by (anti-)causal factorisation:
\begin{splitequation}
\label{eq:timeordered_causal}
&\mathcal{T}\left[ \op_1(x_1) \otimes \cdots \otimes \op_n(x_n) \right] \\
&\quad= \mathcal{T}\left[ \op_1(x_1) \otimes \cdots \otimes \op_k(x_k) \right] \star \mathcal{T}\left[ \op_{k+1}(x_{k+1}) \otimes \cdots \otimes \op_n(x_n) \right]
\end{splitequation}
and
\begin{splitequation}
\label{eq:timeordered_anticausal}
&\overline{\mathcal{T}}\left[ \op_1(x_1) \otimes \cdots \otimes \op_n(x_n) \right] \\
&\quad= \overline{\mathcal{T}}\left[ \op_{k+1}(x_{k+1}) \otimes \cdots \otimes \op_n(x_n) \right] \star \overline{\mathcal{T}}\left[ \op_1(x_1) \otimes \cdots \otimes \op_k(x_k) \right]
\end{splitequation}
if none of the $x_1,\ldots,x_k$ lie in the past light cone of any of the $x_{k+1},\ldots,x_n$; the extension to the total diagonal $x_1 = \cdots = x_n$ corresponds to renormalisation. The interacting time-ordered products~\eqref{eq:tint_def} with a single entry define interacting operators in the quantum theory, and expanding the exponentials we obtain
\begin{equation}
\label{eq:opint_def}
\mathcal{T}_\text{int}\left[ \op(x) \right] = \sum_{n=0}^\infty \sum_{k=0}^n \frac{(-1)^k \mathi^n}{k! (n-k)!} \overline{\mathcal{T}}\left[ S_\text{int}^{\otimes k} \right] \star \mathcal{T}\left[ S_\text{int}^{\otimes (n-k)} \otimes \op(x) \right] \eqend{.}
\end{equation}

For the Sine--Gordon model, the interaction is given by
\begin{equation}
\label{eq:sint}
S_\text{int} = \int g(x) [ V_\beta(x) + V_{-\beta}(x) ] \total^2 x \eqend{,}
\end{equation}
where $g \istest$ is the adiabatic cutoff and we consider the case where $\beta^2 < 4 \pi$. We are interested in the interacting stress tensor, and from our previous work~\cite{froebcadamuro2022} we know that the classical expression needs to be amended to obtain a conserved tensor in the quantum theory. The corrected stress tensor is given by
\begin{equation}
\label{eq:stresstensor_def}
\hat{T}_{\mu\nu} \equiv \op_{\mu\nu} - \frac{1}{2} \eta_{\mu\nu} \op_\rho{}^\rho + g \left( 1 - \frac{\beta^2}{8 \pi} \right) \eta_{\mu\nu} ( V_\beta + V_{-\beta} )
\end{equation}
with the composite operator
\begin{equation}
\label{eq:opmunu_def}
\op_{\mu\nu} \equiv \partial_\mu \phi \, \partial_\nu \phi \eqend{,}
\end{equation}
such that we need to determine $\mathcal{T}_\text{int}\left[ \op_{\mu\nu} \right]$ and $\mathcal{T}_\text{int}\left[ V_{\pm \beta} \right]$. Moreover, we also need $\mathcal{T}_\text{int}\left[ \partial_\mu \phi \right]$ to prove the quantum energy inequality, so we first have to derive explicit expressions for these three interacting operators:
\begin{lemma}
\label{lemma1}
The interacting operators $\mathcal{T}_\text{int}\left[ V_\alpha(z) \right]$, $\mathcal{T}_\text{int}\left[ \partial_\mu \phi(z) \right]$ and $\mathcal{T}_\text{int}\left[ \op_{\mu\nu}(z) \right]$ are given by the formal power series~\eqref{eq:tint_vertex_def}, \eqref{eq:tint_1phi_def} and \eqref{eq:tint_2phi_def}.
\end{lemma}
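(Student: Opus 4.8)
The plan is to expand the Bogoliubov formula~\eqref{eq:opint_def} for each of the three basic insertions $V_\alpha(z)$, $\partial_\mu\phi(z)$, and $\op_{\mu\nu}(z)$, and to reorganise the resulting sum over time-ordered and anti-time-ordered products of vertex operators into a single combinatorially explicit power series in the coupling $g$. First I would recall the basic building blocks from the free theory: the time-ordered two-point function (Feynman parametrix) $H^F$, the Wightman two-point function $H^+$, and the relation $\overline{\mathcal T}[\mathe_\otimes^{-\mathi S_\text{int}}] = \mathcal T[\mathe_\otimes^{\mathi S_\text{int}}]^{\star(-1)}$ from~\eqref{eq:t_starinverse}. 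Since all entries in~\eqref{eq:opint_def} are vertex operators $V_{\pm\beta}$ together with a single extra insertion, I can use the explicit formula for (anti-)time-ordered products of a collection of vertex operators, which is Gaussian in character: each such product equals the normal-ordered product of the vertex operators times the exponential of a sum of two-point functions (of type $H^F$ or $\overline{H^F}$) paired over the vertices, with the appropriate charges $\pm\beta$. The key algebraic fact I would use is that the $\star$-product of a string of such normal-ordered exponentials again produces an exponential of contractions, now involving $H^+$ between the two groups, so that the whole expression~\eqref{eq:opint_def} becomes a sum over configurations of ``left'' vertices (from $\overline{\mathcal T}$) and ``right'' vertices (from $\mathcal T$) and their charges, weighted by products of the three kernels.

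The main steps, in order, are: (i) substitute $S_\text{int}$ from~\eqref{eq:sint} into~\eqref{eq:opint_def}, so each term is an integral over vertex positions $x_1,\dots,x_n$ with a sum over charge assignments $\alpha_i\in\{+\beta,-\beta\}$; (ii) evaluate $\overline{\mathcal T}[S_\text{int}^{\otimes k}]$, $\mathcal T[S_\text{int}^{\otimes(n-k)}\otimes\op(z)]$, and their $\star$-product using the Gaussian vertex-operator formulas, tracking which kernel appears for vertex--vertex pairs inside the time-ordered group ($H^F$), inside the anti-time-ordered group ($\overline{H^F}$), and across the two groups ($H^+$); (iii) for the extra insertion, handle the three cases: for $V_\alpha(z)$ it simply joins the pool of right-group vertices with charge $\alpha$; for $\partial_\mu\phi(z)$ one Wick-contracts the single $\phi(z)$ against each vertex, pulling down a factor $\partial_\mu^z$ of the relevant kernel times $\mathi\alpha_i$; for $\op_{\mu\nu}(z)=\partial_\mu\phi\,\partial_\nu\phi(z)$ one gets either a double self-contraction term (giving $\partial_\mu\partial_\nu$ of a kernel, the UV-singular piece that will require the renormalisation $\delta\mathcal T$ from Theorem~\ref{thm_renorm}) or a product of two single contractions with the vertices; (iv) collect everything, relabel, and read off the closed-form power series, which I would record as equations~\eqref{eq:tint_vertex_def}, \eqref{eq:tint_1phi_def}, \eqref{eq:tint_2phi_def}. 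I would also note that the adiabatic cutoff $g$ makes every integral over vertex positions well-defined at fixed order (the $\Lambda,\epsilon$ cutoffs keeping the kernels finite), so the manipulations are purely formal power-series identities and no convergence question enters at this stage.

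The main obstacle I expect is purely bookkeeping: organising the combinatorics of the double sum $\sum_{n}\sum_{k}$ over left/right vertex partitions together with the charge assignments and the three types of kernels, in such a way that the final formula is both correct and compact enough to be usable in the subsequent convergence estimates (Theorem~\ref{thm_renorm}) and the QEI bounds (Theorem~\ref{thm_qei}). In particular, care is needed with the signs: the $(-1)^k$ in~\eqref{eq:opint_def}, the $\mathi$'s from $\mathe^{\mathi\alpha\phi}$ and from $\mathi S_\text{int}$, and the complex conjugation turning $H^F$ into $\overline{H^F}$ in the anti-time-ordered group must all be consistently tracked; a convenient way is to assign to each vertex a sign $\eta_i=\pm$ according to whether it sits in the $\mathcal T$ or $\overline{\mathcal T}$ group and to write the pairing kernel as a single function of $(\eta_i,\eta_j)$ interpolating between $H^F$, $\overline{H^F}$ and $H^+$. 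The only genuinely non-combinatorial point is identifying, in the $\op_{\mu\nu}$ case, the local divergent self-contraction term and checking that it is exactly of the form $\delta^2(z-x)\,\mathcal T[V_\alpha(x)]$ announced in Theorem~\ref{thm_renorm}, so that its subtraction is the redefinition $\delta\mathcal T$ used there; once that is isolated, the remainder is manifestly a well-defined formal power series and the lemma follows.
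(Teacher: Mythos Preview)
Your overall strategy is the same as the paper's: expand the Bogoliubov formula~\eqref{eq:opint_def}, insert the Gaussian formulas for (anti-)time-ordered products of vertex operators (the paper's~\eqref{eq:timeordered_exponential}, \eqref{eq:timeordered_exponential_1phi}, \eqref{eq:timeordered_exponential_2phi}, \eqref{eq:antitimeordered_exponential}), take the $\star$-product via~\eqref{eq:normal_ordering_starproduct}, and then reorganise the sum over charge assignments by counting separately the numbers $\ell,m$ of $V_{+\beta}$'s in the anti-time-ordered and time-ordered groups. So the proposal is correct in outline and matches the paper.

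Two points in your description of the $\op_{\mu\nu}$ case should be corrected before you write it out. First, the Wick expansion of $\partial_\mu\phi\,\partial_\nu\phi(z)$ against the vertex pool produces \emph{four} types of terms, not two: zero contractions (the normal-ordered $\op_{\mu\nu}$ term), one contraction (the mixed $\partial_{(\mu}\mathcal G\cdot\mathcal N_G[\partial_{\nu)}\phi\cdots]$ term), two contractions with vertices (the $\partial_\mu\mathcal G\,\partial_\nu\mathcal G$ term), and the self-contraction of the two $\phi$'s at $z$. You only list the last two. Second, that self-contraction is \emph{not} the UV-singular piece: since single time-ordered products are Hadamard normal-ordered~\eqref{eq:timeordered_single}, the $H^+$ part is subtracted and only the smooth state-dependent contribution $\lim_{z'\to z}\partial^z_\mu\partial^{z'}_\nu W(z,z')$ survives (this is the last bracketed term in~\eqref{eq:tint_2phi_def}). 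The renormalisation $\delta\mathcal T$ of Theorem~\ref{thm_renorm} instead cures the singularity in $\partial_\mu\mathcal G\,\partial_\nu\mathcal G$ when two parametrices sit at the \emph{same} vertex point $x_i$ (or $y_j$) and $z\to x_i$; that analysis belongs to the proof of Theorem~\ref{thm_renorm}, not to this lemma, which is purely a formal power-series identity with the cutoffs $\Lambda,\epsilon$ in place.
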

\begin{proof}
We recall from~\cite{froebcadamuro2022} that the two-point function $G$ of the quasi-free state $\omega$ is given by
\begin{equation}
\label{eq:twopf_hadamard}
G^+(x,y) \equiv - \mathi \omega^{\Lambda,\epsilon}\left( \phi(x) \star \phi(y) \right) = H^+(x,y) + \frac{\mathi}{2 \pi} \ln\left( \frac{\Lambda}{\mu} \right) - \mathi W(x,y)
\end{equation}
with the Hadamard parametrix
\begin{equation}
\label{eq:hadamard}
H^+(x,y) \equiv \frac{\mathi}{4 \pi} \ln\left[ \mu^2 ( \epsilon + \mathi u ) ( \epsilon + \mathi v ) \right]
\end{equation}
containing the singular part of the two-point function and the state-dependent part $W(x,y)$ which is a real smooth and symmetric bisolution of the massless Klein--Gordon equation $\partial_x^2 W(x,y) = \partial_y^2 W(x,y) = 0$. We introduced the light cone coordinates
\begin{equation}
\label{eq:lightconecoords}
u = u(x,y) \equiv (x^0-y^0) - (x^1-y^1) \eqend{,} \quad v = v(x,y) \equiv (x^0-y^0) + (x^1-y^1) \eqend{,}
\end{equation}
$\Lambda$ is an IR cutoff which we ultimately take to vanish, and we keep $\epsilon > 0$ as a UV cutoff. That is, the physical two-point function is obtained as the distributional boundary value (in the limit $\epsilon \to 0$) from the function~\eqref{eq:twopf_hadamard} which is analytic for all $\epsilon > 0$.

From the proof of~\cite[Lemma 2]{froebcadamuro2022}, we know that
\begin{splitequation}
\label{eq:timeordered_exponential}
\mathcal{T}\left[ \bigotimes_{j=1}^n V_{\alpha_j}(x_j) \right] &= \exp\left[ - \mathi \sum_{1 \leq i < j \leq n} \alpha_i \alpha_j H^\text{F}(x_i,x_j) \right] \mathcal{N}_G\left[ \prod_{j=1}^n V_{\alpha_j}(x_j) \right] \\
&\quad\times \exp\left[ - \frac{1}{2} \sum_{i,j=1}^n \alpha_i \alpha_j W(x_i,x_j) \right] \left( \frac{\Lambda}{\mu} \right)^\frac{\left( \sum_{k=1}^n \alpha_k \right)^2}{4 \pi}
\end{splitequation}
and
\begin{splitequation}
\label{eq:timeordered_exponential_2phi}
&\mathcal{T}\left[ \bigotimes_{j=1}^n V_{\alpha_j}(x_j) \otimes \op_{\mu\nu}(z) \right] = \exp\left[ - \mathi \sum_{1 \leq i < j \leq n} \alpha_i \alpha_j H^\text{F}(x_i,x_j) \right] \\
&\quad\times \Bigg[ \sum_{i,j=1}^n \alpha_i \alpha_j \partial_\mu G_\text{F}(z,x_i) \partial_\nu G_\text{F}(z,x_j) \, \mathcal{N}_G\left[ \prod_{j=1}^n V_{\alpha_j}(x_j) \right] \\
&\qquad- 2 \sum_{i=1}^n \alpha_i \partial_{(\mu} G_\text{F}(z,x_i) \, \mathcal{N}_G\left[ \partial_{\nu)} \phi(z) \prod_{j=1}^n V_{\alpha_j}(x_j) \right] \\
&\qquad+ \mathcal{N}_G\left[ \op_{\mu\nu}(z) \prod_{j=1}^n V_{\alpha_j}(x_j) \right] + \lim_{z' \to z} \partial^z_\mu \partial^{z'}_\nu W(z,z') \, \mathcal{N}_G\left[ \prod_{j=1}^n V_{\alpha_j}(x_j) \right] \Bigg] \\
&\qquad\times \exp\left[ - \frac{1}{2} \sum_{i,j=1}^n \alpha_i \alpha_j W(x_i,x_j) \right] \left( \frac{\Lambda}{\mu} \right)^\frac{\left( \sum_{k=1}^n \alpha_k \right)^2}{4 \pi} \eqend{,} \raisetag{2.4em}
\end{splitequation}
where $\mathcal{N}_G$ denotes normal-ordering with respect to the full two-point function~\eqref{eq:twopf_hadamard}, $H^\text{F}$ is the Feynman parametrix
\begin{splitequation}
\label{eq:hf_def}
H^\text{F}(x,y) &\equiv \Theta(x^0-y^0) H^+(x,y) + \Theta(y^0-x^0) H^+(y,x) \\
&= \frac{\mathi}{4 \pi} \ln\left[ \mu^2 ( - u v + \mathi \epsilon \abs{u + v} + \epsilon^2 ) \right]
\end{splitequation}
which is a fundamental solution of the massless Klein--Gordon equation:
\begin{splitequation}
\label{eq:hf_fundamental_solution}
\partial^2 H^\text{F}(x,y) &= - 4 \partial_u \partial_v H^\text{F}(x,y) = \frac{2 \epsilon}{\pi ( u^2 + \epsilon^2 )} \delta(u+v) \\
&\to 2 \delta(u) \delta(v) = \delta^2(x-y) \quad (\epsilon \to 0) \eqend{,}
\end{splitequation}
and
\begin{equation}
\label{eq:twopf_feynman}
G^\text{F}(x,y) \equiv H^\text{F}(x,y) + \frac{\mathi}{2 \pi} \ln\left( \frac{\Lambda}{\mu} \right) - \mathi W(x,y)
\end{equation}
is the time-ordered two-point function or Feynman propagator. Completely analogous to the proof of~\cite[Lemma 2]{froebcadamuro2022}, we also compute
\begin{splitequation}
\label{eq:timeordered_exponential_1phi}
&\mathcal{T}\left[ \bigotimes_{j=1}^n V_{\alpha_j}(x_j) \otimes \phi(z) \right] = \exp\left[ - \mathi \sum_{1 \leq i < j \leq n} \alpha_i \alpha_j H^\text{F}(x_i,x_j) \right] \\
&\qquad\times\Bigg[ \mathcal{N}_G\left[ \phi(z) \prod_{j=1}^n V_{\alpha_j}(x_j) \right] - \sum_{i=1}^n \alpha_i G^\text{F}(x_i,z) \mathcal{N}_G\left[ \prod_{j=1}^n V_{\alpha_j}(x_j) \right] \Bigg] \\
&\qquad\times \exp\left[ - \frac{1}{2} \sum_{i,j=1}^n \alpha_i \alpha_j W(x_i,x_j) \right] \left( \frac{\Lambda}{\mu} \right)^\frac{\left( \sum_{k=1}^n \alpha_k \right)^2}{4 \pi} \eqend{,}
\end{splitequation}
as well as (using now the anti-causal factorisation~\eqref{eq:timeordered_anticausal})
\begin{splitequation}
\label{eq:antitimeordered_exponential}
\overline{\mathcal{T}}\left[ \bigotimes_{j=1}^n V_{\alpha_j}(x_j) \right] &= \exp\left[ - \mathi \sum_{1 \leq i < j \leq n} \alpha_i \alpha_j H^\text{D}(x_i,x_j) \right] \mathcal{N}_G\left[ \prod_{j=1}^n V_{\alpha_j}(x_j) \right] \\
&\quad\times \exp\left[ - \frac{1}{2} \sum_{i,j=1}^n \alpha_i \alpha_j W(x_i,x_j) \right] \left( \frac{\Lambda}{\mu} \right)^\frac{\left( \sum_{k=1}^n \alpha_k \right)^2}{4 \pi} \eqend{,}
\end{splitequation}
where
\begin{equation}
\label{eq:hd_def}
H^\text{D}(x,y) \equiv \Theta(x^0-y^0) H^+(y,x) + \Theta(y^0-x^0) H^+(x,y)
\end{equation}
is the anti-time-ordered or Dyson parametrix. As in~\cite[Lemma 2]{froebcadamuro2022}, these expressions are smooth functions of the $x_i$ if $\epsilon > 0$, but for $\epsilon = 0$ need renormalisation (which we will do later).

To determine the interacting operators that we need, we use the product formula~\cite[Eq.~(126)]{froebcadamuro2022}
\begin{equation}
\label{eq:normal_ordering_starproduct}
\mathcal{N}_G\left[ \mathe^{\mathi (J,\phi)} \right] \star \mathcal{N}_G\left[ \mathe^{\mathi (K,\phi)} \right] = \exp\left[ - \mathi (J, G^+ \ast K) \right] \, \mathcal{N}_G\left[ \mathe^{\mathi (J+K,\phi)} \right] \eqend{,}
\end{equation}
where $(J,\phi) \equiv \int J(x) \phi(x) \total^2 x$ and $(J, G^+ \ast K) \equiv \iint J(x) G^+(x,y) K(y) \total^2 x \total^2 y$. Taking functional derivatives thereof with respect to $J$ or $K$ and choosing $J$ and $K$ appropriately, we obtain the product for terms involving powers of $\phi$ and its derivatives, while for $J$ or $K$ proportional to sums of $\delta$ distributions we obtain the product for vertex operators. From the Bogoliubov formula~\eqref{eq:opint_def} and the Sine--Gordon interaction~\eqref{eq:sint}, we first compute
\begin{splitequation}
&\mathcal{T}_\text{int}\left[ \op(z) \right] = \sum_{n=0}^\infty \sum_{k=0}^n \frac{(-1)^k \mathi^n}{k! (n-k)!} \int\dotsi\int \sum_{\sigma_j = \pm 1} \overline{\mathcal{T}}\left[ \bigotimes_{i=1}^k V_{\sigma_i \beta}(x_i) \right] \\
&\qquad\qquad \star \mathcal{T}\left[ \bigotimes_{j=k+1}^n V_{\sigma_j \beta}(x_j) \otimes \op(z) \right] \prod_{j=1}^n g(x_j) \total^2 x_j \\
&\quad= \sum_{n=0}^\infty \sum_{k=0}^n \sum_{\ell=0}^k \sum_{m=0}^{n-k} \frac{(-1)^k \mathi^n}{\ell! (k-\ell)! m! (n-k-m)!} \int\dotsi\int \\
&\qquad\times \overline{\mathcal{T}}\left[ \bigotimes_{i=1}^\ell V_\beta(x_i) \bigotimes_{j=1}^{k-\ell} V_{-\beta}(y_j) \right] \star \mathcal{T}\left[ \bigotimes_{i=\ell+1}^{\ell+m} V_\beta(x_i) \bigotimes_{j=k-\ell+1}^{n-\ell-m} V_{-\beta}(y_j) \otimes \op(z) \right] \\
&\qquad\times \prod_{i=1}^{\ell+m} g(x_i) \total^2 x_i \prod_{j=1}^{n-\ell-m} g(y_j) \total^2 y_j \eqend{,}
\end{splitequation}
where we used that the (anti-)time-ordered products are symmetric in their arguments, and that there are $k!/[ \ell! (k-\ell)! ]$ ways to choose $\ell$ vertex operators $V_\beta$ with positive sign from a total of $k$ ones $V_{\pm \beta}$ with either sign; we also renamed the insertion points of the $V_{-\beta}$ operators to $y_j$.

Using then the above results for the (anti-)time-ordered products~\eqref{eq:antitimeordered_exponential} and \eqref{eq:timeordered_exponential_1phi} and the product formula~\eqref{eq:normal_ordering_starproduct} as well as the decomposition of the two-point function~\eqref{eq:twopf_hadamard}, a long but straightforward computation results in
\begin{splitequation}
\label{eq:tint_1phi_def}
\mathcal{T}_\text{int}\left[ \partial_\mu \phi(z) \right] &= - \sum_{n=0}^\infty \sum_{k=0}^n \sum_{\ell=0}^k \sum_{m=0}^{n-k} \frac{(-1)^k \mathi^n}{\ell! (k-\ell)! m! (n-k-m)!} \int\dotsi\int \mathcal{E}_{k,\ell,n,m}(\vec{x};\vec{y}) \\
&\quad\times\Bigg[ \beta \partial^z_\mu \mathcal{G}_{k,\ell,n,m}(\vec{x};\vec{y};z) \, \mathcal{N}_G\left[ \prod_{i=1}^{\ell+m} V_\beta(x_i) \prod_{j=1}^{n-\ell-m} V_{-\beta}(y_j) \right] \\
&\qquad\quad- \mathcal{N}_G\left[ \partial_\mu \phi(z) \prod_{i=1}^{\ell+m} V_\beta(x_i) \prod_{j=1}^{n-\ell-m} V_{-\beta}(y_j) \right] \Bigg] \\
&\quad\times \left( \frac{\Lambda}{\mu} \right)^\frac{\beta^2 (2(\ell+m)-n)^2}{4 \pi} \prod_{i=1}^{\ell+m} g(x_i) \total^2 x_i \prod_{j=1}^{n-\ell-m} g(y_j) \total^2 y_j \eqend{,} \raisetag{2.3em}
\end{splitequation}
where we defined
\begin{splitequation}
\label{eq:calg_def}
\mathcal{G}_{k,\ell,n,m}(\vec{x};\vec{y};z) &\equiv \sum_{i=1}^\ell H^+(x_i,z) + \sum_{i=\ell+1}^{\ell+m} H^\text{F}(x_i,z) - \sum_{j=1}^{k-\ell} H^+(y_j,z) \\
&\quad- \sum_{j=k-\ell+1}^{n-\ell-m} H^\text{F}(y_j,z) - \mathi \sum_{i=1}^{\ell+m} W(x_i,z) + \mathi \sum_{j=1}^{n-\ell-m} W(y_j,z)
\end{splitequation}
and
\begin{splitequation}
\label{eq:cale_def}
&\mathcal{E}_{k,\ell,n,m}(\vec{x};\vec{y}) \equiv \exp\Bigg[ - \mathi \beta^2 \sum_{1 \leq i < j \leq \ell} H^\text{D}(x_i,x_j) + \mathi \beta^2 \sum_{i=1}^\ell \sum_{j=1}^{k-\ell} H^\text{D}(x_i,y_j) \\
&\qquad- \mathi \beta^2 \sum_{1 \leq i < j \leq k-\ell} H^\text{D}(y_i,y_j) - \mathi \beta^2 \sum_{\ell+1 \leq i < j \leq \ell+m} H^\text{F}(x_i,x_j) \\
&\qquad+ \mathi \beta^2 \sum_{i=\ell+1}^{\ell+m} \sum_{j=k-\ell+1}^{n-\ell-m} H^\text{F}(x_i,y_j) - \mathi \beta^2 \sum_{k-\ell+1 \leq i < j \leq n-\ell-m} H^\text{F}(y_i,y_j) \\
&\qquad- \mathi \beta^2 \sum_{i=1}^\ell \sum_{j=\ell+1}^{\ell+m} H^+(x_i,x_j) + \mathi \beta^2 \sum_{i=1}^\ell \sum_{j=k-\ell+1}^{n-\ell-m} H^+(x_i,y_j) \\
&\qquad+ \mathi \beta^2 \sum_{i=1}^{k-\ell} \sum_{j=\ell+1}^{\ell+m} H^+(y_i,x_j) - \mathi \beta^2 \sum_{i=1}^{k-\ell} \sum_{j=k-\ell+1}^{n-\ell-m} H^+(y_i,y_j) \Bigg] \\
&\times \exp\left[ - \frac{\beta^2}{2} \sum_{i,j=1}^{\ell+m} W(x_i,x_j) + \beta^2 \sum_{i=1}^{\ell+m} \sum_{j=1}^{n-\ell-m} W(x_i,y_j) - \frac{\beta^2}{2} \sum_{i,j=1}^{n-\ell-m} W(y_i,y_j) \right] \eqend{.}
\end{splitequation}
Analogously, we obtain
\begin{splitequation}
\label{eq:tint_2phi_def}
&\mathcal{T}_\text{int}\left[ \op_{\mu\nu}(z) \right] = \sum_{n=0}^\infty \sum_{k=0}^n \sum_{\ell=0}^k \sum_{m=0}^{n-k} \frac{(-1)^k \mathi^n}{\ell! (k-\ell)! m! (n-k-m)!} \int\dotsi\int \mathcal{E}_{k,\ell,n,m}(\vec{x};\vec{y}) \\
&\quad\times \Bigg[ \beta^2 \partial^z_\mu \mathcal{G}_{k,\ell,n,m}(\vec{x};\vec{y};z) \partial^z_\nu \mathcal{G}_{k,\ell,n,m}(\vec{x};\vec{y};z) \, \mathcal{N}_G\left[ \prod_{i=1}^{\ell+m} V_\beta(x_i) \prod_{j=1}^{n-\ell-m} V_{-\beta}(y_j) \right] \\
&\qquad\quad- 2 \beta \partial^z_{(\mu} \mathcal{G}_{k,\ell,n,m}(\vec{x};\vec{y};z) \, \mathcal{N}_G\left[ \partial_{\nu)} \phi(z) \prod_{i=1}^{\ell+m} V_\beta(x_i) \prod_{j=1}^{n-\ell-m} V_{-\beta}(y_j) \right] \\
&\qquad\quad+ \mathcal{N}_G\left[ \op_{\mu\nu}(z) \prod_{i=1}^{\ell+m} V_\beta(x_i) \prod_{j=1}^{n-\ell-m} V_{-\beta}(y_j) \right] \\
&\qquad\quad+ \lim_{z' \to z} \partial^z_\mu \partial^{z'}_\nu W(z,z') \, \mathcal{N}_G\left[ \prod_{i=1}^{\ell+m} V_\beta(x_i) \prod_{j=1}^{n-\ell-m} V_{-\beta}(y_j) \right] \Bigg] \\
&\quad\times \left( \frac{\Lambda}{\mu} \right)^\frac{\beta^2 (2(\ell+m)-n)^2}{4 \pi} \prod_{i=1}^{\ell+m} g(x_i) \total^2 x_i \prod_{j=1}^{n-\ell-m} g(y_j) \total^2 y_j \raisetag{2.3em}
\end{splitequation}
and
\begin{splitequation}
\label{eq:tint_vertex_def}
&\mathcal{T}_\text{int}\left[ V_\alpha(z) \right] = \sum_{n=0}^\infty \sum_{k=0}^n \sum_{\ell=0}^k \sum_{m=0}^{n-k} \frac{(-1)^k \mathi^n}{\ell! (k-\ell)! m! (n-k-m)!} \int\dotsi\int \mathcal{E}_{k,\ell,n,m}(\vec{x};\vec{y}) \\
&\qquad\times \exp\left[ - \mathi \alpha \beta \, \mathcal{G}_{k,\ell,n,m}(\vec{x};\vec{y};z) - \frac{\alpha^2}{2} W(z,z) \right] \\
&\qquad\times \mathcal{N}_G\left[ \prod_{i=1}^{\ell+m} V_\beta(x_i) \prod_{j=1}^{n-\ell-m} V_{-\beta}(y_j) V_\alpha(z) \right] \\
&\qquad\times \left( \frac{\Lambda}{\mu} \right)^\frac{\left[ ( 2(\ell+m)-n ) \beta + \alpha \right]^2}{4 \pi} \prod_{i=1}^{\ell+m} g(x_i) \total^2 x_i \prod_{j=1}^{n-\ell-m} g(y_j) \total^2 y_j \eqend{.} \raisetag{2.3em}
\end{splitequation}
\end{proof}

The proofs of Theorems~\ref{thm_renorm} and~\ref{thm_renorm_qed} then proceeds analogously to the proof of~\cite[Thms.~4--6]{froebcadamuro2022} in our previous work, where we showed renormalisability and convergence of the Gell-Mann--Low series for the interacting expectation value of the stress tensor as well as its conservation. However, since now we are considering the Bogoliubov definition of the interacting stress tensor, as well as smearing functions that are supported on a smooth time-like curve (instead of space-time smearings), there are some small differences. To shorten the proof, we will only explain the differences and refer the reader for all other details to~\cite{froebcadamuro2022}.

As in~\cite{froebcadamuro2022}, we see that taking the expectation value in the quasi-free state $\omega^{\Lambda,\epsilon}$ with two-point function $G^+$ and taking the limit $\Lambda \to 0$, only neutral configurations contribute at each order in perturbation theory. For the (smeared) expectation value of the interacting modified stress tensor $\hat{T}_{\mu\nu}$, using the results~\eqref{eq:tint_2phi_def} and~\eqref{eq:tint_vertex_def}, we thus obtain the series
\begin{splitequation}
\label{eq:tint_tmunu_series}
&\lim_{\Lambda \to 0} \omega^{\Lambda,\epsilon}\left( \mathcal{T}_\text{int}\left[ \hat{T}_{\mu\nu}(f) \right] \right) = \sum_{n=0}^\infty \sum_{k=0}^n \sum_{\ell=0}^k \sum_{m=0}^{n-k} \frac{(-1)^k \mathi^n}{\ell! (k-\ell)! m! (n-k-m)!} \int\dotsi\int \\
&\qquad\times \mathcal{E}_{k,\ell,n,m}(\vec{x};\vec{y}) \Theta^{k,\ell,n,m}_{\mu\nu}(\vec{x};\vec{y};z) f(z) \total^2 z \prod_{i=1}^{\ell+m} g(x_i) \total^2 x_i \prod_{j=1}^{n-\ell-m} g(y_j) \total^2 y_j
\end{splitequation}
with
\begin{splitequation}
\label{eq:theta_munu_def}
&\Theta^{k,\ell,n,m}_{\mu\nu}(\vec{x};\vec{y};z) \equiv \delta_{2(\ell+m),n} \Bigg[ \beta^2 \partial^z_\mu \mathcal{G}_{k,\ell,n,m}(\vec{x};\vec{y};z) \partial^z_\nu \mathcal{G}_{k,\ell,n,m}(\vec{x};\vec{y};z) \\
&\hspace{6em}- \frac{\beta^2}{2} \eta_{\mu\nu} \partial_z^\rho \mathcal{G}_{k,\ell,n,m}(\vec{x};\vec{y};z) \partial^z_\rho \mathcal{G}_{k,\ell,n,m}(\vec{x};\vec{y};z) \\
&\hspace{6em}+ \lim_{z' \to z} \left( \partial^z_\mu \partial^{z'}_\nu W(z,z') - \frac{1}{2} \eta_{\mu\nu} \partial_z^\rho \partial^{z'}_\rho W(z,z') \right) \Bigg] \\
&\quad+ \left( 1 - \frac{\beta^2}{8 \pi} \right) \eta_{\mu\nu} g(z) \delta_{2(\ell+m),n-1} \exp\left[ - \mathi \beta^2 \mathcal{G}_{k,\ell,n,m}(\vec{x};\vec{y};z) - \frac{\beta^2}{2} W(z,z) \right] \\
&\quad+ \left( 1 - \frac{\beta^2}{8 \pi} \right) \eta_{\mu\nu} g(z) \delta_{2(\ell+m),n+1} \exp\left[ \mathi \beta^2 \mathcal{G}_{k,\ell,n,m}(\vec{x};\vec{y};z) - \frac{\beta^2}{2} W(z,z) \right] \eqend{.}
\end{splitequation}
Inserting the Wightman, Feynman and Dyson Hadamard parametrices~\eqref{eq:hadamard}, \eqref{eq:hf_def} and~\eqref{eq:hd_def} into $\mathcal{E}_{k,\ell,n,m}$~\eqref{eq:cale_def}, we obtain a product of terms of the form
\begin{equation}
\left[ \mu^2 ( - u(x_i,x_j) v(x_i,x_j) + \mathi \epsilon ) \right]^{\pm \frac{\beta^2}{4\pi}}
\end{equation}
as well as such terms with $x_i$ or $x_j$ replaced by $y_i$ or $y_j$, and terms with different $\mathi \epsilon$ prescriptions: $- \mathi \epsilon$ instead of $+ \mathi \epsilon$ for the Dyson parametrices, and $+ \mathi \epsilon \sgn(x_i^0 - x_j^0)$ for the Wightman parametrices. For a negative exponent, these are singular as $\epsilon \to 0$, but the singularity is absolutely integrable since we are in the finite regime $\beta^2 < 4\pi$. We can thus take their absolute value and ignore the $\mathi \epsilon$ prescription, and then these terms are bounded in the same way as in the proof of~\cite[Thm.~5]{froebcadamuro2022}. The same applies to the exponentials containing parametrices in $\Theta^{k,\ell,n,m}_{\mu\nu}$~\eqref{eq:theta_munu_def} which are contained in $\mathcal{G}_{k,\ell,n,m}$~\eqref{eq:calg_def}, and which come from the vertex operators in the stress tensor.

We thus only need to potentially renormalise the products of derivatives of $\mathcal{G}_{k,\ell,n,m}$~\eqref{eq:calg_def} in the first contribution to $\Theta^{k,\ell,n,m}_{\mu\nu}$~\eqref{eq:theta_munu_def}. Expanding those, we have three different types of terms: products of two $W$ which are smooth, products of $W$ and one parametrix, and products of two parametrices. For the products of $W$ and one parametrix, we integrate by parts to obtain
\begin{equation}
\int \partial_\mu W(x,z) \partial_\nu H(y,z) f(z) \total^2 z = - \int H(y,z) \partial_\nu \left[ \partial_\mu W(x,z) f(z) \right] \total^2 z \eqend{,}
\end{equation}
where the points $x$ and $y$ may be the same, and where the parametrix $H$ is either a Wightman one $H^+$ or a Feynman one $H^\text{F}$. Since $W$ and $f$ are smooth and the parametrix is only logarithmically divergent, the singularity is then integrable and we can bound this integral in the same way as in the proof of~\cite[Thm.~5]{froebcadamuro2022}. We now consider the terms with two parametrices. If they are supported at different points, i.e., for the terms of the form $\partial_\mu H(x,z) \partial_\nu H(y,z)$ with $x \neq y$, no renormalisation is needed and we can bound them in the same way as in the proof of~\cite[Thm.~5]{froebcadamuro2022}. For this, we use that the mixed derivatives can be rewritten in the form
\begin{splitequation}
\int \partial^z_{(u} H^\text{F}(x,z) \partial^z_{v)} H^\text{F}(y,z) f(z) \total^2 z &= \frac{1}{8} H^\text{F}(x,y) [ f(x) + f(y) ] \\
&\quad+ \frac{1}{2} \int H^\text{F}(x,z) H^\text{F}(y,z) \partial_u \partial_v f(z) \total^2 z \eqend{,}
\end{splitequation}
which is~\cite[Eq.~(187)]{froebcadamuro2022}, and which follows straightforwardly from an integration by parts and the fact that the Feynman parametrix $H^\text{F}$ is a fundamental solution of the massless Klein--Gordon equation~\eqref{eq:hf_fundamental_solution}. Since the Wightman parametrix $H^+$ is a solution, for their products we have instead
\begin{equation}
\int \partial^z_{(u} H^+(x,z) \partial^z_{v)} H^+(y,z) f(z) \total^2 z = \frac{1}{2} \int H^+(x,z) H^+(y,z) \partial_u \partial_v f(z) \total^2 z \eqend{,}
\end{equation}
and for the mixed products we obtain
\begin{splitequation}
\int \partial^z_{(u} H^+(x,z) \partial^z_{v)} H^\text{F}(y,z) f(z) \total^2 z &= \frac{1}{8} H^+(x,y) f(y) \\
&\quad+ \frac{1}{2} \int H^+(x,z) H^\text{F}(y,z) \partial_u \partial_v f(z) \total^2 z \eqend{.}
\end{splitequation}
On the other hand, for the same derivatives it is shown in~\cite[Eqs.~(195)--(198)]{froebcadamuro2022} that
\begin{equation}
\partial^z_u H^\text{F}(x,z) \partial^z_u H^\text{F}(y,z) = \frac{1}{32 \pi^2} \frac{\partial^2}{\partial u(z)^2} \ln\left[ \mu^2 u(z,a)^2 \right] \eqend{,}
\end{equation}
where $a$ is some point such that $\min(u(x),u(y)) \leq u(a) \leq \max(u(x),u(y))$, and the analogous result for $v$ derivatives, and the same computation establishes this result also for the Wightman and mixed parametrices. All these can now be bounded in the same way as in the proof of~\cite[Thm.~5]{froebcadamuro2022}.

On the other hand, the terms with products of two parametrices at the same point need renormalisation. We know that~\cite[Eq.~(150)]{froebcadamuro2022}
\begin{splitequation}
\label{eq:thm_renorm_hmuhnu}
\partial_\mu H^\text{F}(x,y) \partial_\nu H^\text{F}(x,y) &= \left[ \partial_\mu H^\text{F}(x,y) \partial_\nu H^\text{F}(x,y) \right]^\text{ren} \\
&\quad- \frac{\mathi}{4 \pi} \, \eta_{\mu\nu} \ln(2 \mu \epsilon) \, \delta^2(x-y) + \bigo{\epsilon} \eqend{,}
\end{splitequation}
with $\delta^2(x-y) = 2 \delta(u) \delta(v)$ and
\begin{equations}[eq:thm_renorm_hmuhnu_ren]
\left[ \partial_u H^\text{F}(x,y) \partial_u H^\text{F}(x,y) \right]^\text{ren} &= - \frac{\mathi}{4 \pi} \, \partial_u^2 H^\text{F}(x,y) - \frac{\mathi}{16 \pi} \delta^2(x-y) \eqend{,} \\
\left[ \partial_v H^\text{F}(x,y) \partial_v H^\text{F}(x,y) \right]^\text{ren} &= - \frac{\mathi}{4 \pi} \, \partial_v^2 H^\text{F}(x,y) - \frac{\mathi}{16 \pi} \delta^2(x-y) \eqend{,} \\
\left[ \partial_u H^\text{F}(x,y) \partial_v H^\text{F}(x,y) \right]^\text{ren} &= \frac{1}{2} \partial_u \partial_v \left[ H^\text{F}(x,y) \right]^2 \eqend{.}
\end{equations}
Moreover, in the proof of~\cite[Thm.~6]{froebcadamuro2022} it is shown that there exists a redefinition of time-ordered products~\cite[Eqs.~(220)--(224)]{froebcadamuro2022} such that the last term is removed from $\left[ \partial_u H^\text{F}(x,y) \partial_u H^\text{F}(x,y) \right]^\text{ren}$ and $\left[ \partial_v H^\text{F}(x,y) \partial_v H^\text{F}(x,y) \right]^\text{ren}$. It is also shown that there exists another redefinition of time-ordered products~\cite[Eqs.~(152)--(155)]{froebcadamuro2022} such that the divergent term $\sim \ln \epsilon$ is removed; however, this is actually not necessary for the stress tensor, since it anyway cancels out in $\Theta^{k,\ell,n,m}_{\mu\nu}$~\eqref{eq:theta_munu_def}. We now only need to determine the corresponding formulas for the product of two Wightman parametrices $H^+$, since the product of a Feynman parametrix and a Wightman one at the same point never appears. Namely, using the explicit form of the parametrix~\eqref{eq:hadamard}, we compute
\begin{splitequation}
\label{eq:thm_renorm_hadamard_uu_ren}
\partial_u H^+(x,y) \partial_u H^+(x,y) &= - \frac{1}{16 \pi^2} \frac{1}{(u - \mathi \epsilon)^2} = \frac{1}{16 \pi^2} \partial_u^2 \ln (u - \mathi \epsilon) \\
&= - \frac{\mathi}{4 \pi} \partial_u^2 H^+(x,y)
\end{splitequation}
and the analogous equation for $\partial_v H^+(x,y) \partial_v H^+(x,y)$, as well as
\begin{equation}
\label{eq:thm_renorm_hadamard_uv_ren}
\partial_u H^+(x,y) \partial_v H^+(x,y) = - \frac{1}{16 \pi^2} \frac{1}{(u - \mathi \epsilon) (v - \mathi \epsilon)} = \frac{1}{2} \partial_u \partial_v \Big[ H^+(x,y) \Big]^2
\end{equation}
for the mixed derivatives. For these terms we thus do not need any redefinition of time-ordered products, and can also bound them in the same way as the product of Feynman parametrices, as in the proof of~\cite[Thm.~5]{froebcadamuro2022}.

It then follows as in the proof of~\cite[Thm.~5]{froebcadamuro2022} that we have the bounds
\begin{splitequation}
\label{eq:thm_renorm_bound}
&\abs{ \int\dotsi\int \mathcal{E}_{k,\ell,n,m}(\vec{x};\vec{y}) \Theta^{k,\ell,n,m}_{\mu\nu}(\vec{x};\vec{y};z) f(z) \total^2 z \prod_{i=1}^{\ell+m} g(x_i) \total^2 x_i \prod_{j=1}^{n-\ell-m} g(y_j) \total^2 y_j } \\
&\leq \delta_{2(\ell+m),n} C n^2 K^n \left[ (\ell+m)! \right]^{1+\frac{\beta^2}{4 \pi}} + \delta_{2(\ell+m),n} C' (K')^n \left[ (\ell+m)! \right]^{1+\frac{\beta^2}{4 \pi}} \\
&\quad+ \left( \delta_{2(\ell+m),n-1} + \delta_{2(\ell+m),n+1} \right) C'' (K'')^{n+1} \left[ (\ell+m+1)! \right]^{1+\frac{\beta^2}{4 \pi}} \eqend{,} \raisetag{2em}
\end{splitequation}
with the constants $C$, $C'$, $C''$, $K$, $K'$ and $K''$ depending on $f$, $g$, $\beta$ and $W$, and using that
\begin{equation}
(\ell+m+1)^{1+\frac{\beta^2}{4 \pi}} \leq (\ell+m+1)^2
\end{equation}
for $\beta^2 < 4 \pi$, the expectation value of the interacting stress tensor~\eqref{eq:tint_tmunu_series} is bounded by
\begin{splitequation}
\label{eq:renorm_stress_bound}
&\abs{ \lim_{\Lambda,\epsilon \to 0} \omega^{\Lambda,\epsilon}\left( \mathcal{T}_\text{int}\left[ \hat{T}_{\mu\nu}(f) \right] \right)} \leq \hat{C} \sum_{n=0}^\infty \hat{K}^n \sum_{k=0}^n \sum_{\ell=0}^k \sum_{m=0}^{n-k} \frac{\left[ (\ell+m)! \right]^{1+\frac{\beta^2}{4 \pi}}}{\ell! (k-\ell)! m! (n-k-m)!} \\
&\quad\times \Big[ \delta_{2(\ell+m),n} (1+n^2) + \Big( \delta_{2(\ell+m),n-1} + \delta_{2(\ell+m),n+1} \Big) (\ell+m+1)^2 \Big] \eqend{,} \raisetag{1.6em}
\end{splitequation}
where $\hat{C} = \max(C,C',C'' K'')$ and $\hat{K} = \max(K,K',K'')$. It remains to estimate the sums, and we start with the first term where only terms with $2(\ell+m) = n$ contribute. Changing $n \to 2n$, we thus have to bound
\begin{splitequation}
\label{eq:renorm_stress_bound1}
&\hat{C} \sum_{n=0}^\infty [ 1 + (2n)^2 ] \hat{K}^{2n} \sum_{k=0}^{2n} \sum_{\ell=0}^k \sum_{m=0}^{2n-k} \frac{\left[ (\ell+m)! \right]^{1+\frac{\beta^2}{4 \pi}}}{\ell! (k-\ell)! m! (2n-k-m)!} \delta_{(\ell+m),n} \\
&\leq 2 \hat{C} \sum_{n=0}^\infty (n+1)^2 \hat{K}^{2n} (n!)^{1+\frac{\beta^2}{4 \pi}} \sum_{k=0}^{2n} \sum_{\ell=\max(0,k-n)}^{\min(k,n)} \frac{1}{\ell! (k-\ell)! (n-\ell)! (n-k+\ell)!} \eqend{.}
\end{splitequation}
We split the sum over $k$ in two and shift $k \to k+n$ in the second half, such that
\begin{splitequation}
&\sum_{k=0}^{2n} \sum_{\ell=\max(0,k-n)}^{\min(k,n)} \frac{1}{\ell! (k-\ell)! (n-\ell)! (n-k+\ell)!} \\
&= \sum_{k=0}^n \sum_{\ell=0}^k \frac{1}{\ell! (k-\ell)! (n-\ell)! (n-k+\ell)!} + \sum_{k=1}^n \sum_{\ell=k}^n \frac{1}{\ell! (k+n-\ell)! (n-\ell)! (\ell-k)!} \\
&= \sum_{\ell=0}^n \sum_{k=\ell}^n \frac{1}{\ell! (k-\ell)! (n-\ell)! (n-k+\ell)!} + \sum_{\ell=1}^n \sum_{k=1}^\ell \frac{1}{\ell! (k+n-\ell)! (n-\ell)! (\ell-k)!} \\
&= \sum_{\ell=0}^n \sum_{k=\ell}^n \frac{1}{\ell! (n-k)! (n-\ell)! k!} + \sum_{\ell=1}^n \sum_{k=0}^{\ell-1} \frac{1}{\ell! (n-k)! (n-\ell)! k!} \eqend{,} \raisetag{2.1em}
\end{splitequation}
where we switched the order of summation of both sums in the second equality, and replaced $k \to n-k+\ell$ in the first sum and $k \to \ell-k$ in the second sum in the third equality. We can then extend the summation in the second sum to include $\ell = 0$, and both sums combine to give
\begin{splitequation}
&\sum_{k=0}^{2n} \sum_{\ell=\max(0,k-n)}^{\min(k,n)} \frac{1}{\ell! (k-\ell)! (n-\ell)! (n-k+\ell)!} \\
&= \left[ \sum_{\ell=0}^n \frac{1}{\ell! (n-\ell)!} \right] \sum_{k=0}^n \frac{1}{(n-k)! k!} = \frac{2^{2n}}{(n!)^2} \eqend{.}
\end{splitequation}
It follows that~\eqref{eq:renorm_stress_bound1} is bounded by
\begin{equation}
4 \hat{C} \sum_{n=0}^\infty \frac{(n+1)^2 (2\hat{K})^{2n}}{(n!)^{1-\frac{\beta^2}{4 \pi}}} < \infty \eqend{,}
\end{equation}
since we are in the finite regime $\beta^2 < 4 \pi$.

For the second term in the sum~\eqref{eq:renorm_stress_bound} where only terms with $2(\ell+m) = n-1$ contribute, we change $n \to 2n+1$ and have to bound
\begin{splitequation}
\label{eq:renorm_stress_bound2}
&\hat{C} \sum_{n=0}^\infty \hat{K}^{2n+1} \sum_{k=0}^{2n+1} \sum_{\ell=0}^k \sum_{m=0}^{2n+1-k} \frac{\left[ (\ell+m)! \right]^{1+\frac{\beta^2}{4 \pi}}}{\ell! (k-\ell)! m! (2n+1-k-m)!} (\ell+m+1)^2 \delta_{(\ell+m),n} \\
&= \hat{C} \hat{K} \sum_{n=0}^\infty (n+1)^2 \hat{K}^{2n} \sum_{k=0}^{2n+1} \sum_{\ell=\max(0,k-n-1)}^{\min(k,n)} \frac{(n!)^{1+\frac{\beta^2}{4 \pi}}}{\ell! (k-\ell)! (n-\ell)! (n+1-k+\ell)!} \eqend{.}
\end{splitequation}
Similarly to before, we split the sum over $k$, shift $k$ in the second sum and switch the order of summation to obtain
\begin{splitequation}
&\sum_{k=0}^{2n+1} \sum_{\ell=\max(0,k-n-1)}^{\min(k,n)} \frac{1}{\ell! (k-\ell)! (n-\ell)! (n+1-k+\ell)!} \\
&= \sum_{\ell=0}^n \sum_{k=\ell}^n \frac{1}{\ell! (k-\ell)! (n-\ell)! (n+1-k+\ell)!} \\
&\quad+ \sum_{\ell=0}^n \sum_{k=0}^\ell \frac{1}{\ell! (k+n+1-\ell)! (n-\ell)! (\ell-k)!} \eqend{.}
\end{splitequation}
We now replace $k \to n+1-k+\ell$ in the first sum and $k \to \ell-k$ in the second sum, and obtain
\begin{splitequation}
&\sum_{k=0}^{2n+1} \sum_{\ell=\max(0,k-n-1)}^{\min(k,n)} \frac{1}{\ell! (k-\ell)! (n-\ell)! (n+1-k+\ell)!} \\
&= \left[ \sum_{\ell=0}^n \frac{1}{\ell! (n-\ell)!} \right] \sum_{k=0}^{n+1} \frac{1}{(n+1-k)! k!} = \frac{2^{2n+1}}{n! (n+1)!} \eqend{,}
\end{splitequation}
such that~\eqref{eq:renorm_stress_bound2} is bounded by
\begin{equation}
2 \hat{C} \hat{K} \sum_{n=0}^\infty \frac{(n+1) (2\hat{K})^{2n}}{(n!)^{1-\frac{\beta^2}{4 \pi}}} < \infty \eqend{.}
\end{equation}
The last term in the sum~\eqref{eq:renorm_stress_bound} is bounded analogously, such that the expectation value of the interacting stress tensor~\eqref{eq:renorm_stress_bound} has a finite bound; in particular, the perturbative sum from the Bogoliubov formula~\eqref{eq:opint_def} is absolutely convergent in this case.

Conservation of the interacting (modified) stress tensor then follows exactly as in the proof of~\cite[Thm.~6]{froebcadamuro2022}, and we omit the details. \hfill\squareforqed

\section{Proof of Theorem~\ref{thm_renorm_qed} (Renormalisation of the quantum energy density)}
\label{sec_renorm_qed}

While Theorem~\ref{thm_renorm} shows that the expectation value of the stress tensor is finite when smeared with a test function of space-time, for the QEI we need to smear the energy density with a test function supported on a time-like curve. We thus need to derive suitable bounds to show that also this smearing is finite.

We consider a smooth time-like curve $z^\mu(\tau)$ with $\tau \in \mathbb{R}$ and define the associated relativistic velocity $v^\mu(\tau) \equiv \dot z^\mu(\tau)$, which we assume to be normalised: $v^\mu v_\mu = -1$. We also define the associated space-like vector $w_\mu(\tau) \equiv - \epsilon_{\mu\nu} v^\nu(\tau)$, which is also normalised: $w^\mu w_\mu = 1$. Using that $\epsilon_{\mu\nu} \epsilon_{\rho\sigma} = - 2 \eta_{\mu[\rho} \eta_{\sigma]\nu}$, it follows that the Minkowski metric can be written as
\begin{equation}
\label{eq:metric_vv_ww}
\eta_{\mu\nu} = - v_\mu v_\nu + w_\mu w_\nu \eqend{,} \quad \eta^{\mu\nu} = - v^\mu v^\nu + w^\mu w^\nu \eqend{.}
\end{equation}
The energy density as seen by an observer following this curve is then given by the contraction of the stress tensor with $v^\mu v^\nu$, and the (smeared) expectation value of the quantum energy density is~\eqref{eq:energy_density}
\begin{equation}
E_\omega(f) \equiv \lim_{\Lambda,\epsilon \to 0} \omega^{\Lambda,\epsilon}\left( \mathcal{T}_\text{int}\left[ \hat{T}_{\mu\nu}(v^\mu v^\nu f) \right] \right)
\end{equation}
with $f \in \mathcal{S}(\mathbb{R})$, where we integrate over the one-dimensional submanifold $z(\tau)$.

Analogously to the previous result~\eqref{eq:tint_tmunu_series} for a space-time smearing, using the results~\eqref{eq:tint_2phi_def} and~\eqref{eq:tint_vertex_def} we obtain the series
\begin{splitequation}
\label{eq:eomega_series}
E_\omega(f) &= \sum_{n=0}^\infty \sum_{k=0}^n \sum_{\ell=0}^k \sum_{m=0}^{n-k} \frac{(-1)^k \mathi^n}{\ell! (k-\ell)! m! (n-k-m)!} \int\dotsi\int \mathcal{E}_{k,\ell,n,m}(\vec{x};\vec{y}) \\
&\qquad\times \Theta^{k,\ell,n,m}(\vec{x};\vec{y};\tau) f(\tau) \total \tau \prod_{i=1}^{\ell+m} g(x_i) \total^2 x_i \prod_{j=1}^{n-\ell-m} g(y_j) \total^2 y_j
\end{splitequation}
for the (smeared) expectation value of the interacting energy density, where
\begin{splitequation}
\label{eq:theta_00_def}
&\Theta^{k,\ell,n,m}(\vec{x};\vec{y};\tau) \equiv v^\mu(\tau) v^\nu(\tau) \Theta^{k,\ell,n,m}_{\mu\nu}(\vec{x};\vec{y};z(\tau)) \\
&\quad= \frac{1}{2} \left( v^\mu v^\nu + w^\mu w^\nu \right) \delta_{2(\ell+m),n} \Bigg[ \beta^2 \partial^z_\mu \mathcal{G}_{k,\ell,n,m}(\vec{x};\vec{y};z) \partial^z_\nu \mathcal{G}_{k,\ell,n,m}(\vec{x};\vec{y};z) \\
&\hspace{14em}+ \lim_{z' \to z} \partial^z_\mu \partial^{z'}_\nu W(z,z') \Bigg] \\
&\qquad- \left( 1 - \frac{\beta^2}{8 \pi} \right) g(z) \delta_{2(\ell+m),n-1} \exp\left[ - \mathi \beta^2 \mathcal{G}_{k,\ell,n,m}(\vec{x};\vec{y};z) - \frac{\beta^2}{2} W(z,z) \right] \\
&\qquad- \left( 1 - \frac{\beta^2}{8 \pi} \right) g(z) \delta_{2(\ell+m),n+1} \exp\left[ \mathi \beta^2 \mathcal{G}_{k,\ell,n,m}(\vec{x};\vec{y};z) - \frac{\beta^2}{2} W(z,z) \right] \eqend{,}
\end{splitequation}
and we evaluate it on the observer's worldline $z = z(\tau)$. To simplify the expression for $\Theta^{k,\ell,n,m}$, we have used the expression for the inverse metric in terms of $v^\mu$ and $w^\mu$~\eqref{eq:metric_vv_ww}, from which it follows that for any $A_\mu$ and $B_\mu$
\begin{equation}
v^\mu v^\nu \left( A_\mu B_\nu - \frac{1}{2} \eta_{\mu\nu} A^\rho B_\rho \right) = \frac{1}{2} \left( v^\mu v^\nu + w^\mu w^\nu \right) A_\mu B_\nu \eqend{.}
\end{equation}
To show that $E_\omega(f)$ is renormalisable and the series~\eqref{eq:eomega_series} convergent, we need to show bounds on the smearing in $\tau$ which are of the same form as for a space-time smearing. Afterwards, we can proceed exactly as in the proofs of Theorem~\ref{thm_renorm} (the previous section) and~\cite[Thms.~4 and 5]{froebcadamuro2022}, and we omit the details.

For the required bounds, we first consider the last two terms in~\eqref{eq:theta_00_def}, which contain exponentials of the parametrix. As in the previous section, inserting the various parametrices into $\mathcal{E}_{k,\ell,n,m}$ and $\mathcal{G}_{k,\ell,n,m}$, we obtain after taking the absolute value a product of terms of the form
\begin{equation}
\abs{ \mu^2 u(x_i,x_j) v(x_i,x_j) }^{\pm \frac{\beta^2}{4\pi}}
\end{equation}
as well as such terms with $x_i$ or $x_j$ replaced by $y_i$ or $y_j$ or $z$. As in the proof of~\cite[Thm.~5]{froebcadamuro2022}, we can write them using the Cauchy determinant formula as a determinant, which can then be estimated by the absolute value of the sum of individual terms. For this, we need to combine the exponentials $\exp\left[ \pm \mathi \beta^2 \mathcal{G}_{k,\ell,n,m}(\vec{x};\vec{y};z) - \beta^2/2 W(z,z) \right]$ with $\mathcal{E}_{k,\ell,n,m}(\vec{x};\vec{y})$ analogously to~\cite[Eq.~(202)]{froebcadamuro2022}, renaming $z$ to $x_{\ell+m+1}$ for the second-to-last term in~\eqref{eq:theta_00_def} or $y_{n-\ell-m+1}$ for the last term in~\eqref{eq:theta_00_def}. Since both cases are bounded in the same way, we only show the first explicitly. We thus need to bound
\begin{splitequation}
\label{eq:thm_renorm_vertex_bound}
&\int\dotsi\int \left[ \sum_{\pi} \left( \prod_{j=1}^k \frac{1}{\mu \abs{u(x_j,y_{\pi(j)})}} \right) \frac{1}{\mu \abs{u(z(\tau),y_{\pi(k+1)})}} \right]^\frac{\beta^2}{4 \pi} \\
&\qquad\times \left[ \sum_{\pi} \left( \prod_{j=1}^k \frac{1}{\mu \abs{v(x_j,y_{\pi(j)})}} \right) \frac{1}{\mu \abs{v(z(\tau),y_{\pi(k+1)})}} \right]^\frac{\beta^2}{4 \pi} \\
&\qquad\times \abs{g(z(\tau))} \abs{f(\tau)} \total \tau \prod_{i=1}^k \abs{ g(x_i) } \total u(x_i) \total v(x_i) \prod_{i=1}^{k+1} \abs{ g(y_i) } \total u(y_i) \total v(y_i) \eqend{,}
\end{splitequation}
where $k = \ell+m$ and the sums run over all permutations $\pi$ of $\{1,\ldots,k+1\}$. The only difference to~\cite[Eq.~(163)]{froebcadamuro2022} is then that instead of a space-time integral over $x_{k+1}$, we only have a one-dimensional integral over $\tau$. We thus proceed exactly as in the proof of~\cite[Thm.~5]{froebcadamuro2022}, producing IR convergence factors $\big[ 1 + \mu^2 u(x)^2 \big]^{-2} \big[ 1 + \mu^2 v(x)^2 \big]^{-2}$ by bounding their inverse together with the adiabatic cutoff functions $g(x)$, using Hölder's inequality to bound the resulting integrals which factorise in $u$ and $v$, and extracting the sum over permutations (which gives an additional factor of $[(k+1)!]^{2/\rho}$ with $\rho = 8 \pi/(4 \pi + \beta^2)$). The only new bound that we need to prove is a bound on
\begin{splitequation}
&\iiint \Big[ \mu^2 \abs{u(z(\tau),y_{k+1}) v(z(\tau),y_{k+1})} \Big]^{- \rho \frac{\beta^2}{4 \pi}} \abs{f(\tau)} \total \tau \\
&\quad\times \frac{\total u(y_{k+1})}{[ 1 + \mu^2 u(y_{k+1})^2 ]^\rho} \frac{\total v(y_{k+1})}{[ 1 + \mu^2 v(y_{k+1})^2 ]^\rho} \eqend{,}
\end{splitequation}
since the integral over $x_{k+1} = z(\tau)$ is now only one-dimensional. We shift $u(y_{k+1}) \to u(y_{k+1}) + u(z(\tau))$, $v(y_{k+1}) \to v(y_{k+1}) + v(z(\tau))$ and perform these integrals first. We split the integral over $u(y_{k+1})$ into two parts: If $\mu \abs{u(y_{k+1})} \leq 1$, we estimate the factor $[ 1 + \mu^2 [ u(y_{k+1}) + u(z(\tau)) ]^2 ]^{-\rho}$ by $1$ and obtain
\begin{equation}
\int_{\mu \abs{u(y_{k+1})} \leq 1} \Big[ \mu \abs{u(y_{k+1})} \Big]^{- \rho \frac{\beta^2}{4 \pi}} \frac{\total u(y_{k+1})}{[ 1 + \mu^2 [ u(y_{k+1}) + u(z(\tau)) ]^2 ]^\rho} \leq \frac{8 \pi}{4 \pi - \rho \beta^2} \mu^{-1} \eqend{,}
\end{equation}
while for $\mu \abs{u(y_{k+1})} > 1$ we estimate
\begin{splitequation}
&\int_{\mu \abs{u(y_{k+1})} > 1} \Big[ \mu \abs{u(y_{k+1})} \Big]^{- \rho \frac{\beta^2}{4 \pi}} \frac{\total u(y_{k+1})}{[ 1 + \mu^2 [ u(y_{k+1}) + u(z(\tau)) ]^2 ]^\rho} \\
&\quad\leq \int \frac{\total u(y_{k+1})}{[ 1 + \mu^2 [ u(y_{k+1}) + u(z(\tau)) ]^2 ]^\rho} = \frac{\sqrt{\pi}}{\mu} \frac{\Gamma\left( \rho - \frac{1}{2} \right)}{\Gamma(\rho)} \eqend{.}
\end{splitequation}
The integral over $v(y_{k+1})$ is bounded in the same way, and finally the integral over $\tau$ gives $\norm{ f }_1$. It follows that for the contribution of the last two terms in~\eqref{eq:theta_00_def} to the expectation value of the energy density we have a bound of the same form as for a space-time smearing, namely
\begin{splitequation}
\label{eq:thm_renorm_vertex_bound2}
&\int\dotsi\int \abs{ \mathcal{E}_{k,\ell,n,m}(\vec{x};\vec{y}) } \left( 1 - \frac{\beta^2}{8 \pi} \right) \delta_{2(\ell+m),n+1} \abs{ g(z) } \abs{ f(\tau) } \\
&\times \abs{ \exp\left[ \mathi \beta^2 \mathcal{G}_{k,\ell,n,m}(\vec{x};\vec{y};z) - \frac{\beta^2}{2} W(z,z) \right] } \total \tau \prod_{i=1}^{\ell+m} \abs{ g(x_i) } \total^2 x_i \prod_{j=1}^{n-\ell-m} \abs{ g(y_j) } \total^2 y_j \\
&\leq \delta_{2(\ell+m),n-1} C K^{n+1} \left[ (\ell+m+1)! \right]^{1+\frac{\beta^2}{4 \pi}} \raisetag{1.4em}
\end{splitequation}
with constants $C$ and $K$ depending on $f$, $g$, $\beta$ and $z(\tau)$; compare~\eqref{eq:thm_renorm_bound} for the space-time smearing. We note that due to our condition~\eqref{eq:thm_renorm_wpos_disc}
\begin{equation}
\sum_{i,j=1}^n \Big[ W(x_i,x_j) - W(x_i,y_j) - W(y_i,x_j) + W(y_i,y_j) \Big] \geq 0
\end{equation}
for all $n \in \mathbb{N}$, the $W$ dependence in the exponential in $\mathcal{E}_{k,\ell,n,m}(\vec{x};\vec{y})$~\eqref{eq:cale_def}, combined with $W(z,z)$, is simply estimated by $1$, and thus the constants $C$ and $K$ do not depend on $W$.

To bound the first term in~\eqref{eq:theta_00_def}, we first consider the second derivative of the state-dependent part $W$. A straightforward bound is
\begin{splitequation}
&\abs{ \int \left. \frac{1}{2} \left( v^\mu v^\nu + w^\mu w^\nu \right) \lim_{z' \to z} \partial^z_\mu \partial^{z'}_\nu W(z,z') \right\rvert_{z = z(\tau)} f(\tau) \total \tau } \\
&\quad\leq \sup_{\mu \in \{0,1\} } \norm{ \sup(v^\mu) }_\infty^2 \sup_{\mu,\nu \in \{0,1\}} \norm{ \partial^z_\mu \partial^{z'}_\nu W(z,z') }_\infty \norm{ f }_1 \eqend{,}
\end{splitequation}
where we used that $w^1 = v^0$, $w^0 = v^1$. For the terms contained in $\mathcal{G}_{k,\ell,n,m}$~\eqref{eq:calg_def}, we again have three different types of terms: products of two $W$, products of $W$ and one parametrix, and products of two parametrices. For products of two $W$, we again have a straightforward bound
\begin{splitequation}
&\abs{ \int \left. \frac{1}{2} \left( v^\mu v^\nu + w^\mu w^\nu \right) \partial^z_\mu W(x,z) \partial^z_\nu W(y,z) \right\rvert_{z = z(\tau)} f(\tau) \total \tau } \\
&\quad\leq \sup_{\mu \in \{0,1\} } \norm{ \sup(v^\mu) }_\infty^2 \sup_{\mu \in \{0,1\}} \norm{ \partial^z_\mu W(x,z) }_\infty^2 \norm{ f }_1 \eqend{,}
\end{splitequation}
where $x$ and $y$ can be equal or distinct. For the products of $W$ and one parametrix, we compute instead
\begin{splitequation}
\label{eq:thm_renorm_qed_whder}
&\int \left. \frac{1}{2} \left( v^\mu v^\nu + w^\mu w^\nu \right) \partial^z_\mu W(x,z) \partial^z_\nu H(y,z) \right\rvert_{z = z(\tau)} f(\tau) \total \tau \\
&= \int \Big[ (v^0 - v^1)^2 \partial_{u(z)} W(x,z) \partial_{u(z)} H(y,z) \\
&\qquad+ (v^0 + v^1)^2 \partial_{v(z)} W(x,z) \partial_{v(z)} H(y,z) \Big]_{z = z(\tau)} f(\tau) \total \tau \eqend{,}
\end{splitequation}
using that
\begin{equation}
v^u = v^0 - v^1 \eqend{,} \quad v^v = v^0 + v^1 \eqend{,} \quad w^u = - v^u \eqend{,} \quad w^v = v^v \eqend{.}
\end{equation}
In particular, the mixed terms where one derivative is with respect to $u$ and one with respect to $v$ have dropped out. We now need to change the derivatives acting on the parametrix into the $\tau$ derivative, for which we first compute
\begin{equations}[eq:dudv_dtau]
\partial_\tau u(z(\tau)) &= \partial_\tau \Big[ z^0(\tau) - z^1(\tau) \Big] = v^0 - v^1 = \sqrt{1 + (v^1)^2} - v^1 > 0 \eqend{,} \\
\partial_\tau v(z(\tau)) &= \partial_\tau \Big[ z^0(\tau) + z^1(\tau) \Big] = v^0 + v^1 = \sqrt{1 + (v^1)^2} + v^1 > 0 \eqend{,}
\end{equations}
using that $z(\tau)$ is a future-directed time-like curve. For the Wightman parametrix \eqref{eq:hadamard}, it follows that
\begin{splitequation}
\label{eq:hadamard_der_u_tau}
\partial_{u(z)} H^+(y,z) \Big\rvert_{z = z(\tau)} &= - \frac{\mathi}{4 \pi} \frac{1}{u(y,z) - \mathi \epsilon} \bigg\rvert_{z = z(\tau)} \\
&= \frac{\mathi}{4 \pi} \frac{1}{v^0 - v^1} \partial_\tau \ln\left[ \mu ( \epsilon + \mathi u(y,z(\tau)) ) \right]
\end{splitequation}
and the analogous equation with $v^0 + v^1$ for the $v(z)$ derivative, while for the Feynman parametrix~\eqref{eq:hf_def} we obtain
\begin{splitequation}
\label{eq:hf_der_u_tau}
&\partial_{u(z)} H^\text{F}(y,z) \Big\rvert_{z = z(\tau)} = - \frac{\mathi}{4 \pi} \left[ \frac{\Theta(u(y,z)+v(y,z))}{u(y,z) - \mathi \epsilon} + \frac{\Theta(-u(y,z)-v(y,z))}{u(y,z) + \mathi \epsilon} \right]_{z = z(\tau)} \\
&\quad= \frac{\mathi}{4 \pi} \frac{1}{v^0 - v^1} \partial_\tau \Big[ \Theta(u(y,z(\tau))+v(y,z(\tau))) \ln\left[ \mu ( \epsilon + \mathi u(y,z(\tau)) ) \right] \\
&\hspace{8em}+ \Theta(-u(y,z(\tau))-v(y,z(\tau))) \ln\left[ \mu ( \epsilon - \mathi u(y,z(\tau)) ) \right] \Big] \\
&\qquad- \frac{\mathi}{4 \pi} \frac{v^0}{v^0 - v^1} \delta(y^0-z^0(\tau)) \mathi \pi \sgn[ y^1 - z^1(\tau) ] \eqend{,} \raisetag{1.6em}
\end{splitequation}
and the analogous equation with $u$ replaced by $v$, $v^0 - v^1$ by $v^0 + v^1$ and the $\sgn$ function replaced by its negative. Integrating the $\tau$ derivative by parts, and changing coordinates to either $u(z(\tau))$ or $v(z(\tau))$ according to the argument of the logarithm, for the Wightman parametrix we thus obtain
\begin{splitequation}
\label{eq:thm_renorm_qed_bound_wh1}
&\int \left. \frac{1}{2} \left( v^\mu v^\nu + w^\mu w^\nu \right) \partial^z_\mu W(x,z) \partial^z_\nu H^+(y,z) \right\rvert_{z = z(\tau)} f(\tau) \total \tau \\
&= - \frac{\mathi}{4 \pi} \int \bigg[ f(\tau) \partial_\tau \Big[ \partial_{u(z)} W(x,z) \Big]_{z = z(\tau)} + f(\tau) \partial_\tau \ln(v^0 - v^1) \Big[ \partial_{u(z)} W(x,z) \Big]_{z = z(\tau)} \\
&\qquad\qquad+ \partial_\tau f(\tau) \Big[ \partial_{u(z)} W(x,z) \Big]_{z = z(\tau)} \bigg] \ln\left[ \mu ( \epsilon + \mathi u(y,z(\tau)) ) \right] \total u(z(\tau)) \\
&\quad- \frac{\mathi}{4 \pi} \int \bigg[ f(\tau) \partial_\tau \Big[ \partial_{v(z)} W(x,z) \Big]_{z = z(\tau)} + f(\tau) \partial_\tau \ln (v^0 + v^1) \Big[ \partial_{v(z)} W(x,z) \Big]_{z = z(\tau)} \\
&\qquad\qquad+ \partial_\tau f(\tau) \Big[ \partial_{v(z)} W(x,z) \Big]_{z = z(\tau)} \bigg] \ln\left[ \mu ( \epsilon + \mathi v(y,z(\tau)) ) \right] \total v(z(\tau)) \eqend{,}
\end{splitequation}
where now $\tau$ (and thus also $z^\mu(\tau)$) is a function of $u(z(\tau))$ in the first integral and of $v(z(\tau))$ in the second integral. This identification proceeds through the identities
\begin{equation}
\label{eq:thm_renorm_udot}
u(\tau) = z^0(\tau) - z^1(\tau) \eqend{,} \quad \dot u(\tau) = v^0(\tau) - v^1(\tau) = \sqrt{1 + [ v^1(\tau) ]^2} - v^1(\tau) \eqend{,}
\end{equation}
where we used the normalisation $v^\mu v_\mu = -1$. It follows that
\begin{equation}
v^1(\tau) = \frac{1 - \dot u(\tau)^2}{2 \dot u(\tau)} \eqend{,} \quad v^0(\tau) = \frac{1 + \dot u(\tau)^2}{2 \dot u(\tau)} \eqend{,}
\end{equation}
where we used that $z^\mu(\tau)$ is a future-directed curve such that $v^0 > 0$ and $\dot u(\tau) > 0$, and from this finally
\begin{equation}
\label{eq:zmu_tau_of_uv}
z^\mu(\tau) = z^\mu(0) + \int_0^\tau v^\mu(\sigma) \total \sigma \eqend{,} \quad v(z(\tau)) = v(u(\tau)) = v(0) + \int_0^\tau \frac{1}{\dot u(\sigma)} \total \sigma \eqend{.}
\end{equation}
Analogously, for the second integral we have to use
\begin{equation}
u(z(\tau)) = u(v(\tau)) = u(0) + \int_0^\tau \frac{1}{\dot v(\sigma)} \total \sigma \eqend{.}
\end{equation}
To bound~\eqref{eq:thm_renorm_qed_bound_wh1}, we estimate the terms in brackets by their supremum:
\begin{splitequation}
&\bigg\lvert f(\tau) \partial_\tau \Big[ \partial_{u(z)} W(x,z) \Big]_{z = z(\tau)} + f(\tau) \partial_\tau \ln(v^0 - v^1) \Big[ \partial_{u(z)} W(x,z) \Big]_{z = z(\tau)} \\
&\quad+ \partial_\tau f(\tau) \Big[ \partial_{u(z)} W(x,z) \Big]_{z = z(\tau)} \bigg\rvert \\
&\leq \frac{1}{1 + \mu^2 u(z(\tau))^2} \sup_\tau \bigg\lvert [ 1 + \mu^2 u(z(\tau))^2 ] \bigg[ f(\tau) \partial_\tau \Big[ \partial_{u(z)} W(x,z) \Big]_{z = z(\tau)} \\
&\quad+ f(\tau) \partial_\tau \ln(v^0 - v^1) \Big[ \partial_{u(z)} W(x,z) \Big]_{z = z(\tau)} + \partial_\tau f(\tau) \Big[ \partial_{u(z)} W(x,z) \Big]_{z = z(\tau)} \bigg] \bigg\rvert
\end{splitequation}
and the analogous estimate for the second integral, as well as
\begin{equation}
\abs{ \ln\left[ \mu ( \epsilon + \mathi u(y,z(\tau)) ) \right] } \leq \abs{ \ln\left[ \mu \abs{ u(y,z(\tau)) } \right] } + \frac{\pi}{2} \eqend{,}
\end{equation}
which holds in the limit $\epsilon \to 0$. The remaining dependence on $x$ can then be absorbed in the adiabatic cutoff functions $g(x)$ as in the proof of~\cite[Thm.~5]{froebcadamuro2022}. For the remaining integral over $u(z(\tau))$ or $v(z(\tau))$, we use the bound~\cite[Eq.~(180)]{froebcadamuro2022}
\begin{equation}
\label{eq:mink_conv_num_2_int}
\int \frac{\abs{ \ln\abs{ \mu u(z,x) } }^k}{1 + \mu^2 u(z)^2} \total u(z) \leq \frac{2}{\mu} c_k \ln^k\left( 2 + \mu \abs{u(x)} \right)
\end{equation}
and $\int [ 1 + \mu^2 u(z)^2 ]^{-1} \total u(z) = \pi/\mu$. As in the proof of~\cite[Thm.~5]{froebcadamuro2022}, the logarithm can again be absorbed in the adiabatic cutoff functions $g(x)$.

For the Feynman parametrices $H^\text{F}$, using~\eqref{eq:hf_der_u_tau} instead of~\eqref{eq:hadamard_der_u_tau} the same procedure leads to
\begin{splitequation}
\label{eq:thm_renorm_qed_bound_wh2}
&\int \left. \frac{1}{2} \left( v^\mu v^\nu + w^\mu w^\nu \right) \partial^z_\mu W(x,z) \partial^z_\nu H^\text{F}(y,z) \right\rvert_{z = z(\tau)} f(\tau) \total \tau \\
&= - \frac{\mathi}{4 \pi} \int \Big[ f(\tau) \partial_\tau \Big[ \partial_{u(z)} W(x,z) \Big]_{z = z(\tau)} + f(\tau) \partial_\tau \ln(v^0 - v^1) \Big[ \partial_{u(z)} W(x,z) \Big]_{z = z(\tau)} \\
&\qquad\qquad+ \partial_\tau f(\tau) \Big[ \partial_{u(z)} W(x,z) \Big]_{z = z(\tau)} \Big] \Big[ \Theta(y^0+z^0(\tau)) \ln\left[ \mu ( \epsilon + \mathi u(y,z(\tau)) ) \right] \\
&\qquad\qquad+ \Theta(-y^0-z^0(\tau)) \ln\left[ \mu ( \epsilon - \mathi u(y,z(\tau)) ) \right] \Big] \total u(z(\tau)) \\
&\quad- \frac{\mathi}{4 \pi} \int \Big[ f(\tau) \partial_\tau \Big[ \partial_{v(z)} W(x,z) \Big]_{z = z(\tau)} + f(\tau) \partial_\tau \ln(v^0 + v^1) \Big[ \partial_{v(z)} W(x,z) \Big]_{z = z(\tau)} \\
&\qquad\qquad+ \partial_\tau f(\tau) \Big[ \partial_{v(z)} W(x,z) \Big]_{z = z(\tau)} \Big] \Big[ \Theta(y^0+z^0(\tau)) \ln\left[ \mu ( \epsilon + \mathi v(y,z(\tau)) ) \right] \\
&\qquad\qquad+ \Theta(-y^0-z^0(\tau)) \ln\left[ \mu ( \epsilon - \mathi v(y,z(\tau)) ) \right] \Big] \total v(z(\tau)) \\
&\quad- \frac{1}{4} \int v^0 \delta(y^0-z^0(\tau)) \sgn[ y^1 - z^1(\tau) ] \\
&\qquad\quad\times \Big[ (v^0 + v^1) \partial_{v(z)} W(x,z) - (v^0 - v^1) \partial_{u(z)} W(x,z) \Big]_{z = z(\tau)} f(\tau) \total \tau \eqend{,} \raisetag{1.8em}
\end{splitequation}
and the first two terms can be bounded in the same way as for the Wightman parametrix, using that $\abs{ \Theta(x) } \leq 1$. For the last term, we use the $\delta$ to perform the integral over $\tau$, using~\eqref{eq:dudv_dtau} to change variables. This gives
\begin{equation}
v^0 \total \tau = \total z^0(\tau) \eqend{,}
\end{equation}
and then we can estimate this term by its supremum over all $\tau$:
\begin{splitequation}
&\bigg\lvert - \frac{1}{4} \int v^0 \delta(y^0-z^0(\tau)) \sgn[ y^1 - z^1(\tau) ] \\
&\qquad\times \Big[ (v^0 + v^1) \partial_{v(z)} W(x,z) - (v^0 - v^1) \partial_{u(z)} W(x,z) \Big]_{z = z(\tau)} f(\tau) \total \tau \bigg\rvert \\
&\leq \frac{1}{4} \sup_\tau \abs{ \Big[ (v^0 + v^1) \partial_{v(z)} W(x,z) - (v^0 - v^1) \partial_{u(z)} W(x,z) \Big]_{z = z(\tau)} f(\tau) } \eqend{.}
\end{splitequation}
Again, the dependence on $x$ can be absorbed in the adiabatic cutoff functions $g(x)$ as in the proof of~\cite[Thm.~5]{froebcadamuro2022}.

Lastly, we need to consider terms with products of two Hadamard parametrices appearing in the first term of~\eqref{eq:theta_00_def}, either at the same point or at two different points. We first compute the analogue of~\eqref{eq:thm_renorm_qed_whder}
\begin{splitequation}
\label{eq:thm_renorm_qed_hhder}
&\int \left. \frac{1}{2} \left( v^\mu v^\nu + w^\mu w^\nu \right) \partial^z_\mu H(x,z) \partial^z_\nu H(y,z) \right\rvert_{z = z(\tau)} f(\tau) \total \tau \\
&= \int \Big[ (v^0 - v^1)^2 \partial_{u(z)} H(x,z) \partial_{u(z)} H(y,z) \\
&\qquad+ (v^0 + v^1)^2 \partial_{v(z)} H(x,z) \partial_{v(z)} H(y,z) \Big]_{z = z(\tau)} f(\tau) \total \tau \eqend{,}
\end{splitequation}
and then start with the case of equal points $x = y$. As in the space-time smearing case, only products of two Wightman parametrices or two Feynman parametrices appear, for which we have computed the relevant expressions~\eqref{eq:thm_renorm_hmuhnu_ren} and~\eqref{eq:thm_renorm_hadamard_uu_ren} for products of their derivatives. Again, the logarithmically divergent term appearing for the Feynman parametrices~\eqref{eq:thm_renorm_hmuhnu} drops out since it is proportional to the metric $\eta_{\mu\nu}$ and thus vanishes when contracted with $v^\mu v^\nu + w^\mu w^\nu$, and we also recall that we have used a redefinition of time-ordered products to get rid of the local terms in~\eqref{eq:thm_renorm_hmuhnu_ren}. It follows that both for the Wightman and Feynman parametrices, we have
\begin{splitequation}
&\int \left. \frac{1}{2} \left( v^\mu v^\nu + w^\mu w^\nu \right) \left[ \partial^z_\mu H(x,z) \partial^z_\nu H(x,z) \right]^\text{ren} \right\rvert_{z = z(\tau)} f(\tau) \total \tau \\
&= \int \left. \frac{1}{2} \left( v^\mu v^\nu + w^\mu w^\nu \right) \partial^z_\mu H(x,z) \partial^z_\nu H(x,z) \right\rvert_{z = z(\tau)} f(\tau) \total \tau \\
&= - \frac{\mathi}{4 \pi} \int \Big[ (v^0 - v^1)^2 \partial_{u(z)}^2 H(x,z) + (v^0 + v^1)^2 \partial_{v(z)}^2 H(x,z) \Big]_{z = z(\tau)} f(\tau) \total \tau \eqend{,}
\end{splitequation}
and analogously to~\eqref{eq:hadamard_der_u_tau} and~\eqref{eq:hf_der_u_tau} we convert the $u$ and $v$ derivatives into $\tau$ derivatives. We compute
\begin{splitequation}
&\left[ (v^0-v^1)^2 \partial_u^2 H(x,z) + (v^0+v^1)^2 \partial_v^2 H(x,z) \right]_{z = z(\tau)} \\
&\quad= \partial_\tau^2 H(x,z(\tau)) - 2 \left[ \partial_{u(z)} \partial_{v(z)} H(x,z) \right]_{z = z(\tau)} \\
&\qquad- [ \partial_\tau (v^0-v^1) ] \left[ \partial_{u(z)} H(x,z) \right]_{z = z(\tau)} - [ \partial_\tau (v^0+v^1) ] \left[ \partial_{v(z)} H(x,z) \right]_{z = z(\tau)} \eqend{,}
\end{splitequation}
and for a Wightman parametrix we have $\partial_{u(z)} \partial_{v(z)} H^+(x,z) = 0$, while for a Feynman parametrix $\partial_{u(z)} \partial_{v(z)} H^\text{F}(x,z) = - \frac{1}{4} \delta^2(x-z)$~\eqref{eq:hf_fundamental_solution}. The remaining $u$ and $v$ derivatives we can convert into $\tau$ derivatives using~\eqref{eq:hadamard_der_u_tau} and~\eqref{eq:hf_der_u_tau}, partially integrate all $\tau$ derivatives to act on the test function $f$ or the $v^\mu(\tau)$, and then bound the result completely analogous to the previous case of a product of $W$ and one parametrix.

The only potentially problematic term is the local one $\sim \delta^2(x-z(\tau))$ that appears for a product of Feynman parametrices, whose contribution to~\eqref{eq:thm_renorm_qed_hhder} (for the Feynman parametrix) is
\begin{equation}
- \frac{\mathi}{8 \pi} \int \delta^2(x-z(\tau)) f(\tau) \total \tau \eqend{,}
\end{equation}
and whose contribution to the first term of~\eqref{eq:theta_00_def} and thus the expectation value of the interacting energy density~\eqref{eq:eomega_series} reads
\begin{splitequation}
\label{eq:thm_renorm_qed_dhdh_local}
&- \frac{\mathi}{8 \pi} \beta^2 \sum_{n=0}^\infty \sum_{k=0}^{2n} \sum_{\ell=\max(0,k-n)}^{\min(k,n)} \frac{(-1)^{k+n}}{\ell! (k-\ell)! (n-\ell)! (n-k+\ell)!} \int\dotsi\int \\
&\qquad\times \int \left[ \sum_{i=\ell+1}^n \delta^2(x_i-z(\tau)) + \sum_{j=k-\ell+1}^n \delta^2(y_j-z(\tau)) \right] f(\tau) \total \tau \\
&\qquad\times \mathcal{E}_{k,\ell,2n,n-\ell}(\vec{x};\vec{y}) \prod_{i=1}^n g(x_i) g(y_i) \total^2 x_i \total^2 y_i \eqend{,}
\end{splitequation}
where because of the Kronecker $\delta_{2(\ell+m),n}$ we replaced $n \to 2n$ and then performed the sum over $m$. Let us thus consider one of the terms, say with $\delta^2(x_n-z(\tau))$ which allows us to perform the integral over $x_n$ and bound the result by
\begin{splitequation}
&\frac{\beta^2}{8 \pi} \sum_{n=0}^\infty \sum_{k=0}^{2n} \sum_{\ell=\max(0,k-n)}^{\min(k,n)} \frac{1}{\ell! (k-\ell)! (n-\ell)! (n-k+\ell)!} \int\dotsi\int \abs{ \mathcal{E}_{k,\ell,2n,n-\ell}(\hat{\vec{x}};\vec{y}) } \\
&\quad\times \prod_{i=1}^{n-1} \abs{ g(x_i) } \abs{ g(y_i) } \total^2 x_i \total^2 y_i \, \abs{ g(z(\tau)) } \abs{ f(\tau) } \total \tau \, \abs{ g(y_n) } \total^2 y_n \eqend{,} \raisetag{2.4em}
\end{splitequation}
where $\hat{\vec{x}} = \{x_1,\ldots,x_{n-1},z(\tau)\}$. However, after the shift $n \to n+1$ this is now of the same form as the bounds~\eqref{eq:thm_renorm_vertex_bound} for the vertex operators in~\eqref{eq:theta_00_def}, which are given by~\eqref{eq:thm_renorm_vertex_bound2}, and so also the local terms are suitably bounded, taking into account that there $n-\ell + n-(k-\ell) = 2n-k \leq 2n$ of them~\eqref{eq:thm_renorm_qed_dhdh_local}.

Finally, we need to consider terms with products of two Hadamard parametrices at two different points, for which we still use~\eqref{eq:thm_renorm_qed_hhder}. However, now we use that~\cite[Eq.~(198)]{froebcadamuro2022}
\begin{equation}
\partial_{u(z)} H^\text{F}(z,x) \partial_{u(z)} H^\text{F}(z,y) = \frac{1}{32 \pi^2} \partial_{u(z)}^2 \ln\left[ \mu^2 u(z,a)^2 \right]
\end{equation}
for some point $a$ such that $\min(u(x),u(y)) \leq u(a) \leq \max(u(x),u(y))$, and the analogous result for $v$ or for the product of two Wightman parametrices. We can then convert the $u(z)$ and $v(z)$ derivatives into $\tau$ derivatives using~\eqref{eq:dudv_dtau}, and obtain
\begin{splitequation}
&\int \left. \frac{1}{2} \left( v^\mu v^\nu + w^\mu w^\nu \right) \partial^z_\mu H(x,z) \partial^z_\nu H(y,z) \right\rvert_{z = z(\tau)} f(\tau) \total \tau \\
&= \frac{1}{32 \pi^2} \int \Big[ \partial_\tau^2 \ln\left[ \mu^2 u(z(\tau),a)^2 \right] - \partial_\tau \ln(v^0-v^1) \partial_\tau \ln\left[ \mu^2 u(z(\tau),a)^2 \right] \\
&\qquad+ \partial_\tau^2 \ln\left[ \mu^2 v(z(\tau),a')^2 \right] - \partial_\tau \ln(v^0+v^1) \partial_\tau \ln\left[ \mu^2 v(z(\tau),a')^2 \right] \Big] f(\tau) \total \tau \eqend{,}
\end{splitequation}
where $a'$ may be different from $a$. Integrating the $\tau$ derivatives by parts and changing integration variables to either $u(z(\tau))$ or $v(z(\tau))$ depending on the argument of the logarithm [using~\eqref{eq:dudv_dtau}], this gives
\begin{splitequation}
&\int \left. \frac{1}{2} \left( v^\mu v^\nu + w^\mu w^\nu \right) \partial^z_\mu H(x,z) \partial^z_\nu H(y,z) \right\rvert_{z = z(\tau)} f(\tau) \total \tau \\
&= \frac{1}{32 \pi^2} \int \frac{1}{v^0 - v^1} \Big[ f''(\tau) + \partial_\tau \left[ f(\tau) \partial_\tau \ln(v^0-v^1) \right] \Big] \ln\left[ \mu^2 u(z(\tau),a)^2 \right] \total u(z(\tau)) \\
&\quad+ \frac{1}{32 \pi^2} \int \frac{1}{v^0 + v^1} \Big[ f''(\tau) + \partial_\tau \left[ f(\tau) \partial_\tau \ln(v^0+v^1) \right] \Big] \ln\left[ \mu^2 v(z(\tau),a')^2 \right] \total v(z(\tau)) \eqend{,}
\end{splitequation}
where now $\tau$ is a function of $u(z(\tau))$ or $v(z(\tau))$. We then estimate the terms apart from the logarithm by their supremum:
\begin{splitequation}
\label{eq:mink_conv_ddf_bound}
&\abs{ \frac{1}{v^0 - v^1} \Big[ f''(\tau) + \partial_\tau \left[ f(\tau) \partial_\tau \ln(v^0-v^1) \right] \Big] } \\
&\leq \frac{1}{1 + \mu^2 u(z(\tau))^2} \sup_\tau \abs{ \frac{1 + \mu^2 u(z(\tau))^2}{v^0 - v^1} \Big[ f''(\tau) + \partial_\tau \left[ f(\tau) \partial_\tau \ln(v^0-v^1) \right] \Big] } \eqend{,}
\end{splitequation}
and then use~\eqref{eq:mink_conv_num_2_int} to perform the integral over $u(z(\tau))$, and the analogous estimate for the integral over $v(z(\tau))$. Finally, using that $\min(u(x),u(y)) \leq u(a) \leq \max(u(x),u(y))$ we estimate the resulting logarithm by
\begin{equation}
\ln\left( 2 + \mu \abs{u(a)} \right) \leq \ln\left( 2 + \mu \abs{u(x)} \right) + \ln\left( 2 + \mu \abs{u(y)} \right) \eqend{,}
\end{equation}
and the logarithms can then be absorbed into the adiabatic cutoff functions $g(x)$. We see at this point that the bound~\eqref{eq:mink_conv_ddf_bound} diverges in the limit where the trajectory becomes light-like, $v^1 \to \pm v^0$. This precludes the direct extension of our result to the null case, but for arbitrary time-like trajectories $z(\tau)$ that we consider in this work we of course have a finite bound.

We then follow the same steps as in the proof of~\cite[Thm.~5]{froebcadamuro2022} and use that there are in total $4 n^2$ products of Feynman or Wightman parametrices and $W$ in the product of derivatives of $\mathcal{G}_{k,\ell,n,m}$~\eqref{eq:calg_def} in $\Theta^{k,\ell,n,m}$~\eqref{eq:theta_00_def}. We can thus bound the contribution of these terms to the expectation value of the energy density at order $n$ by $C n^2 K^n (n!)^{1 + \frac{\beta^2}{4\pi}}$ as required, where the constants $C$ and $K$ depend on $f$, $g$, $\beta$, $W$ and $z(\tau)$. We note however that the dependence on $W$ only comes from the explicit $W$ contained in $\mathcal{G}_{k,\ell,n,m}$, since due to our condition on $W$~\eqref{eq:thm_renorm_wpos_disc} the $W$ dependence in $\mathcal{E}_{k,\ell,n,m}$ is bounded by $1$ analogously to the bounds~\eqref{eq:thm_renorm_vertex_bound2} for the contribution from the vertex operators. Taking all together, we obtain a bound of the same form as for a space-time smearing~\eqref{eq:thm_renorm_bound}, and thus following the same steps to estimate the sums we obtain the bound
\begin{equation}
\abs{ E_\omega(f) } \leq C' \sum_{n=0}^\infty \frac{(n+1)^2 (K')^{2n}}{(n!)^{1-\frac{\beta^2}{4 \pi}}} < \infty
\end{equation}
on the smeared quantum energy density, where $C'$ and $K'$ are new constants depending on $f$, $g$, $\beta$, $W$ and $z(\tau)$. \hfill\squareforqed

\section{Proof of Theorem~\ref{thm_qei} (Quantum energy inequality)}
\label{sec_qei}

To obtain the quantum energy inequality~\eqref{eq:thm_qei} in the interacting theory, we suitably extend methods known from the free theory~\cite{flanagan1997,fewster2000,fewstersmith2008}, namely writing the energy density in point-split form and then separating it into a sum of a positive and a bounded part, with the bounded part being state-independent.

We first note that the energy density when smeared with the square of a test function $f \in \mathcal{S}(\mathbb{R})$ can be written in a point-split form
\begin{equation}
E_\omega(f^2) = \lim_{\Lambda,\epsilon \to 0} \omega^{\Lambda,\epsilon}\left( \mathcal{T}_\text{int}\left[ \hat{T}_{\mu\nu}( v^\mu v^\nu f^2 ) \right] \right) = \int e_{\omega,f}(\tau,\tau) \total \tau \eqend{,}
\end{equation}
where
\begin{equation}
\label{eq:qei_pointsplit_e_def}
e_{\omega,f}(\tau,\tau') \equiv v^\mu(\tau) v^\nu(\tau') f(\tau) f(\tau') \lim_{\Lambda,\epsilon \to 0} \omega^{\Lambda,\epsilon}\left( \mathcal{T}_\text{int}\left[ \hat{T}^\parallel_{\mu\nu}( z(\tau), z(\tau') ) \right] \right)
\end{equation}
with the point-split stress tensor
\begin{splitequation}
\label{eq:qei_pointsplit_tmunu_pre}
\hat{T}^\parallel_{\mu\nu}(z,z') &\equiv \partial_\mu \phi(z) \partial_\nu \phi(z') - \frac{1}{2} \eta_{\mu\nu} \partial^\rho \phi(z) \partial_\rho \phi(z') \\
&\quad+ \frac{1}{2} \left( 1 - \frac{\beta^2}{8 \pi} \right) \eta_{\mu\nu} \Big[ g(z) ( V_\beta(z) + V_{-\beta}(z) ) + g(z') ( V_\beta(z') + V_{-\beta}(z') ) \Big] \eqend{.}
\end{splitequation}
Actually, since the time-ordered products involve renormalisation, and in particular the subtraction of the Hadamard parametrix which is singular as the two points approach each other, it is more appropriate to define the point-split value of the interacting time-ordered product of the point-split stress tensor instead of the point-split stress tensor itself. So instead of Lemma~\ref{lemma1}, in particular the interacting time-ordered product $\mathcal{T}_\text{int}\left[ \op_{\mu\nu}(z) \right]$~\eqref{eq:tint_2phi_def}, we define a point-split time-ordered product by
\begin{splitequation}
\label{eq:tint_2phi_pointsplit_def}
&\mathcal{T}_\text{int}\left[ \op^\parallel_{\mu\nu}(z,z') \right] \equiv \sum_{n=0}^\infty \sum_{k=0}^n \sum_{\ell=0}^k \sum_{m=0}^{n-k} \frac{(-1)^k \mathi^n}{\ell! (k-\ell)! m! (n-k-m)!} \int\dotsi\int \mathcal{E}_{k,\ell,n,m}(\vec{x};\vec{y}) \\
&\quad\times \Bigg[ \beta^2 \partial^z_\mu \mathcal{G}_{k,\ell,n,m}(\vec{x};\vec{y};z) \partial^{z'}_\nu \mathcal{G}_{k,\ell,n,m}(\vec{x};\vec{y};z') \, \mathcal{N}_G\left[ \prod_{i=1}^{\ell+m} V_\beta(x_i) \prod_{j=1}^{n-\ell-m} V_{-\beta}(y_j) \right] \\
&\qquad\quad- \beta \partial^z_\mu \mathcal{G}_{k,\ell,n,m}(\vec{x};\vec{y};z) \, \mathcal{N}_G\left[ \partial_\nu \phi(z') \prod_{i=1}^{\ell+m} V_\beta(x_i) \prod_{j=1}^{n-\ell-m} V_{-\beta}(y_j) \right] \\
&\qquad\quad- \beta \partial^{z'}_\nu \mathcal{G}_{k,\ell,n,m}(\vec{x};\vec{y};z') \, \mathcal{N}_G\left[ \partial_\mu \phi(z) \prod_{i=1}^{\ell+m} V_\beta(x_i) \prod_{j=1}^{n-\ell-m} V_{-\beta}(y_j) \right] \\
&\qquad\quad+ \mathcal{N}_G\left[ \partial_\mu \phi(z) \partial_\nu \phi(z') \prod_{i=1}^{\ell+m} V_\beta(x_i) \prod_{j=1}^{n-\ell-m} V_{-\beta}(y_j) \right] \\
&\qquad\quad+ \partial^z_\mu \partial^{z'}_\nu W(z,z') \, \mathcal{N}_G\left[ \prod_{i=1}^{\ell+m} V_\beta(x_i) \prod_{j=1}^{n-\ell-m} V_{-\beta}(y_j) \right] \Bigg] \\
&\quad\times \left( \frac{\Lambda}{\mu} \right)^\frac{\beta^2 (2(\ell+m)-n)^2}{4 \pi} \prod_{i=1}^{\ell+m} g(x_i) \total^2 x_i \prod_{j=1}^{n-\ell-m} g(y_j) \total^2 y_j \eqend{,} \raisetag{2.4em}
\end{splitequation}
and the point-split interacting time-ordered product of the stress tensor by
\begin{splitequation}
\label{eq:qei_pointsplit_tmunu}
&\mathcal{T}_\text{int}\left[ \hat{T}^\parallel_{\mu\nu}(z,z') \right] \equiv \mathcal{T}_\text{int}\left[ \op^\parallel_{\mu\nu}(z,z') \right] - \frac{1}{2} \eta_{\mu\nu} \eta^{\rho\sigma} \mathcal{T}_\text{int}\left[ \op^\parallel_{\rho\sigma}(z,z') \right] \\
&\qquad+ \frac{1}{2} \left( 1 - \frac{\beta^2}{8 \pi} \right) \eta_{\mu\nu} \Big[ g(z) ( \mathcal{T}_\text{int}\left[ V_\beta(z) \right] + \mathcal{T}_\text{int}\left[ V_{-\beta}(z) \right] ) + (z \leftrightarrow z') \Big] \eqend{.}
\end{splitequation}
Since the normal-ordered products of operators are continuous functions of the insertion points and $W$ is smooth, almost all terms in the point-split time-ordered product~\eqref{eq:tint_2phi_pointsplit_def} have a finite limit as $z' \to z$. Only the products of derivatives of $\mathcal{G}_{k,\ell,m,n}$~\eqref{eq:calg_def} which contain Hadamard parametrices need renormalisation in the limit where the UV cutoff $\epsilon \to 0$, but as we have seen in the proof of Theorem~\ref{thm_renorm}, the required counterterms are proportional to $\eta_{\mu\nu}$~\eqref{eq:thm_renorm_hmuhnu} and thus cancel in the stress tensor. We thus have
\begin{equation}
\lim_{z' \to z} \mathcal{T}_\text{int}\left[ \hat{T}^\parallel_{\mu\nu}(z,z') \right] = \mathcal{T}_\text{int}\left[ \hat{T}_{\mu\nu}(z) \right]
\end{equation}
with the point-split stress tensor~\eqref{eq:qei_pointsplit_tmunu}, and moreover the point-split quantity $e_{\omega,f}(\tau,\tau')$~\eqref{eq:qei_pointsplit_e_def} is continuous in $\tau$ and $\tau'$. All the hypotheses of~\cite[Lemma~2.12]{fewstersmith2008} are thus fulfilled and we can apply it and obtain
\begin{equation}
\label{eq:qei_e_fourier}
E_\omega(f^2) = \int e_{\omega,f}(\tau,\tau) \total \tau = \lim_{\delta \to 0^+} \int \mathe^{- \delta \xi^2} (\mathcal{F} e_{\omega,f})(-\xi,\xi) \frac{\total \xi}{2 \pi} \eqend{,}
\end{equation}
where the Fourier transform is defined by
\begin{equation}
\label{eq:qei_fourier_def}
(\mathcal{F} g)(\xi,\zeta) \equiv \iint \mathe^{\mathi \xi \tau + \mathi \zeta \sigma} g(\tau,\sigma) \total \tau \total \sigma \eqend{.}
\end{equation}
Since~\cite[Lemma~2.12]{fewstersmith2008} is only applicable to functions of compact support, we temporarily restrict to test functions $f$ of compact support, and obtain the final result for arbitrary Schwartz functions $f$ by a limit argument. Moreover, since $e_{\omega,f}(\tau,\tau')$ is by construction symmetric in $\tau$ and $\tau'$, the integrand in~\eqref{eq:qei_e_fourier} is invariant under the change $\xi \to -\xi$, and so we may restrict the integration to positive $\xi$, obtaining a factor of 2. That is, to prove the QEI~\eqref{eq:thm_qei} we need to show that
\begin{equation}
\label{eq:thm_qei_N}
E^N_\omega(f^2) = 2 \lim_{\delta \to 0^+} \int_0^\infty \mathe^{- \delta \xi^2} (\mathcal{F} e_{\omega,f_N})(-\xi,\xi) \frac{\total \xi}{2 \pi} \geq - K(z,f_N)
\end{equation}
and that $\lim_{N \to \infty} K(z,f_N) < \infty$, where $f_N \in \mathcal{S}_\text{c}(\mathbb{R})$ is a sequence of test functions of compact support converging to $f$.\footnote{We could also restrict to negative $\xi$, but this would ultimately result in a trivial QEI with $\lim_{N \to \infty} K(z,f_N) = \infty$~\cite{fewstersmith2008}.}

However, we will need~\cite[Lemma~2.12]{fewstersmith2008} only for some part of the bound, where it is technically difficult to take the limit $\tau' \to \tau$ directly in position space and then integrate over $\tau$. To show the quantum energy inequality~\eqref{eq:thm_qei}, we first split the energy density $e_{\omega,f}$~\eqref{eq:qei_pointsplit_e_def} into a positive part and a part that we can bound independently of the state. The latter part is furthermore split into several terms, one of which requires the use of~\cite[Lemma~2.12]{fewstersmith2008} in the form~\eqref{eq:thm_qei_N}, while the other ones can be bounded directly. We will see that the positive part is given by
\begin{equation}
\frac{1}{2} \Big[ v^\mu(\tau) v^\nu(\tau') + w^\mu(\tau) w^\nu(\tau') \Big] \, \Big( \mathcal{T}_\text{int}\left[ \partial_\mu \phi(z(\tau)) \right] \Big)^\dagger \star \mathcal{T}_\text{int}\left[ \partial_\nu \phi(z(\tau')) \right] \eqend{,}
\end{equation}
the star product of two interacting time-ordered products of derivatives of the field $\partial_\mu \phi$, for which we could use the result~\eqref{eq:tint_1phi_def}. However, the computation is shortened by inserting the Bogoliubov formula~\eqref{eq:tint_def} and using that
\begin{equation}
\Big( \overline{\mathcal{T}}\left[ \op_1(f_1) \otimes \cdots \otimes \op_k(f_k) \right] \Big)^\dagger = \mathcal{T}\left[ \op_1^\dagger(f_1^*) \otimes \cdots \otimes \op_k^\dagger(f_k^*) \right]
\end{equation}
as well as the star inverse~\eqref{eq:t_starinverse}, such that
\begin{equation}
\Big( \mathcal{T}_\text{int}\left[ \partial_\mu \phi(z) \right] \Big)^\dagger \star \mathcal{T}_\text{int}\left[ \partial_\nu \phi(z') \right] = \overline{\mathcal{T}}\left[ \mathe_\otimes^{ - \mathi S_\text{int} } \otimes \partial_\mu \phi(z) \right] \star \mathcal{T}\left[ \mathe_\otimes^{ \mathi S_\text{int} } \otimes \partial_\nu \phi(z') \right] \eqend{.}
\end{equation}
A quite long but straightforward computation using the result~\eqref{eq:timeordered_exponential_1phi}, the analogously determined
\begin{splitequation}
\label{eq:antitimeordered_exponential_1phi}
&\overline{\mathcal{T}}\left[ \bigotimes_{j=1}^n V_{\alpha_j}(x_j) \otimes \phi(z) \right] = \exp\left[ - \mathi \sum_{1 \leq i < j \leq n} \alpha_i \alpha_j H^\text{D}(x_i,x_j) \right] \\
&\qquad\times\Bigg[ \mathcal{N}_G\left[ \phi(z) \prod_{j=1}^n V_{\alpha_j}(x_j) \right] - \sum_{i=1}^n \alpha_i G^\text{D}(x_i,z) \mathcal{N}_G\left[ \prod_{j=1}^n V_{\alpha_j}(x_j) \right] \Bigg] \\
&\qquad\times \exp\left[ - \frac{1}{2} \sum_{i,j=1}^n \alpha_i \alpha_j W(x_i,x_j) \right] \left( \frac{\Lambda}{\mu} \right)^\frac{\left( \sum_{k=1}^n \alpha_k \right)^2}{4 \pi} \eqend{,}
\end{splitequation}
the formula~\eqref{eq:normal_ordering_starproduct} for the star product and the explicit form of the two-point function~\eqref{eq:twopf_hadamard} then gives
\begin{splitequation}
\label{eq:qei_tint_dphi2}
&\Big( \mathcal{T}_\text{int}\left[ \partial_\mu \phi(z) \right] \Big)^\dagger \star \mathcal{T}_\text{int}\left[ \partial_\nu \phi(z') \right] \\
&= \sum_{n=0}^\infty \sum_{k=0}^n \sum_{\ell=0}^k \sum_{m=0}^{n-k} \frac{(-1)^k \mathi^n}{\ell! (k-\ell)! m! (n-k-m)!} \int\dotsi\int \mathcal{E}_{k,\ell,n,m}(\vec{x};\vec{y}) \\
&\quad\times\Bigg[ \beta^2 \partial^z_\mu \overline{\mathcal{G}}_{k,\ell,n,m}(\vec{x};\vec{y};z) \partial^{z'}_\nu \mathcal{G}_{k,\ell,n,m}(\vec{x};\vec{y};z') \, \mathcal{N}_G\left[ \prod_{i=1}^{\ell+m} V_\beta(x_i) \prod_{j=1}^{n-\ell-m} V_{-\beta}(y_j) \right] \\
&\qquad\quad- \beta \partial^z_\mu \overline{\mathcal{G}}_{k,\ell,n,m}(\vec{x};\vec{y};z) \, \mathcal{N}_G\left[ \partial_\nu \phi(z') \prod_{i=1}^{\ell+m} V_\beta(x_i) \prod_{j=1}^{n-\ell-m} V_{-\beta}(y_j) \right] \\
&\qquad\quad- \beta \partial^{z'}_\nu \mathcal{G}_{k,\ell,n,m}(\vec{x};\vec{y};z') \, \mathcal{N}_G\left[ \partial_\mu \phi(z) \prod_{i=1}^{\ell+m} V_\beta(x_i) \prod_{j=1}^{n-\ell-m} V_{-\beta}(y_j) \right] \\
&\qquad\quad+ \mathcal{N}_G\left[ \partial_\mu \phi(z) \partial_\nu \phi(z') \prod_{i=1}^{\ell+m} V_\beta(x_i) \prod_{j=1}^{n-\ell-m} V_{-\beta}(y_j) \right] \\
&\qquad\quad+ \mathi \partial^z_\mu \partial^{z'}_\nu G^+(z,z') \, \mathcal{N}_G\left[ \prod_{i=1}^{\ell+m} V_\beta(x_i) \prod_{j=1}^{n-\ell-m} V_{-\beta}(y_j) \right] \Bigg] \\
&\quad\times \left( \frac{\Lambda}{\mu} \right)^\frac{\beta^2 (2(\ell+m)-n)^2}{4 \pi} \prod_{i=1}^{\ell+m} g(x_i) \total^2 x_i \prod_{j=1}^{n-\ell-m} g(y_j) \total^2 y_j \eqend{,} \raisetag{2.3em}
\end{splitequation}
where we used that the (anti-)time-ordered products are symmetric in their arguments, and that there are $k!/[ \ell! (k-\ell)! ]$ ways to choose $\ell$ vertex operators $V_\beta$ with positive sign from a total of $k$ ones $V_{\pm \beta}$ with either sign. We also renamed the insertion points of the $V_{-\beta}$ operators to $y_j$, and recall that $\mathcal{E}_{k,\ell,n,m}$ and $\mathcal{G}_{k,\ell,n,m}$ are defined by~\eqref{eq:cale_def} and~\eqref{eq:calg_def}, while
\begin{splitequation}
\label{eq:barcalg_def}
\overline{\mathcal{G}}_{k,\ell,n,m}(\vec{x};\vec{y};z) &\equiv \sum_{i=1}^\ell H^\text{D}(z,x_i) + \sum_{i=\ell+1}^{\ell+m} H^+(z,x_i) - \sum_{j=1}^{k-\ell} H^\text{D}(z,y_j) \\
&\quad- \sum_{j=k-\ell+1}^{n-\ell-m} H^+(z,y_j) - \mathi \sum_{i=1}^{\ell+m} W(z,x_i) + \mathi \sum_{j=1}^{n-\ell-m} W(z,y_j) \eqend{.}
\end{splitequation}
Contracting with $v^\mu v^\nu$ and comparing with~\eqref{eq:tint_2phi_pointsplit_def}, we see that the point-split quantum energy density~\eqref{eq:qei_pointsplit_e_def} with the interacting time-ordered product of the stress tensor~\eqref{eq:qei_pointsplit_tmunu} can be written as
\begin{splitequation}
\label{eq:qei_e_positivebounded}
e_{\omega,f}(\tau,\tau') &= \frac{1}{2} f(\tau) f(\tau') \Big[ v^\mu(\tau) v^\nu(\tau') + w^\mu(\tau) w^\nu(\tau') \Big] \\
&\qquad\times \lim_{\Lambda,\epsilon \to 0} \omega^{\Lambda,\epsilon}\left( \Big( \mathcal{T}_\text{int}\left[ \partial_\mu \phi(z) \right] \Big)^\dagger \star \mathcal{T}_\text{int}\left[ \partial_\nu \phi(z') \right] \right) \\
&\quad+ \frac{1}{2} f(\tau) f(\tau') \Big[ v^\mu(\tau) v^\nu(\tau') + w^\mu(\tau) w^\nu(\tau') \Big] \\
&\qquad\times \sum_{n=0}^\infty \sum_{k=0}^n \sum_{\ell=0}^k \sum_{m=0}^{n-k} \frac{(-1)^k \mathi^n \delta_{2(\ell+m),n}}{\ell! (k-\ell)! m! (n-k-m)!} \int\dotsi\int \mathcal{E}_{k,\ell,n,m}(\vec{x};\vec{y}) \\
&\quad\times \bigg[ \beta^2 \partial^z_\mu \left[ \mathcal{G}_{k,\ell,n,m}(\vec{x};\vec{y};z) - \overline{\mathcal{G}}_{k,\ell,n,m}(\vec{x};\vec{y};z) \right] \partial^{z'}_\nu \mathcal{G}_{k,\ell,n,m}(\vec{x};\vec{y};z') \\
&\qquad\quad- \mathi \partial^z_\mu \partial^{z'}_\nu H^+(z,z') \bigg] \prod_{i=1}^{\ell+m} g(x_i) \total^2 x_i \prod_{j=1}^{n-\ell-m} g(y_j) \total^2 y_j \\
&\quad- \frac{1}{2} \left( 1 - \frac{\beta^2}{8 \pi} \right) f(\tau) f(\tau') \Big[ g(z(\tau)) \lim_{\Lambda,\epsilon \to 0} \\
&\qquad\times \omega^{\Lambda,\epsilon}\left( \mathcal{T}_\text{int}\left[ V_\beta(z(\tau)) \right] + \mathcal{T}_\text{int}\left[ V_{-\beta}(z(\tau)) \right] \right) + (\tau \leftrightarrow \tau') \Big] \eqend{,} \raisetag{1.6em}
\end{splitequation}
where after taking the derivatives we set $z = z(\tau)$ and $z' = z(\tau')$. Here, we used again that only neutral configurations remain in the limit $\Lambda \to 0$. The first term is the positive part, which would be obvious from the positivity of the state $\omega$, except for the IR divergence of the state. However, we will show below that it is indeed positive. The second and third term comprise the bounded part, and we will show below a state-independent bound for it.

Let us first consider the positivity of the first term of~\eqref{eq:qei_e_positivebounded}. Since it is of the form $\omega^{\Lambda,\epsilon}(A^\dagger \star A)$ and $\omega$ is a quasi-free state, it is positive if the two-point function~\eqref{eq:twopf_hadamard} of $\omega$ is positive definite. However, $\omega$ is indefinite since positivity of the two-point function~\eqref{eq:twopf_hadamard} holds only for test functions with vanishing mean~\eqref{eq:thm_renorm_pos_hplusw}: for the Hadamard parametrix $H^+$ by direct computation, and for the state-dependent part $W$ by our condition of conditional positive semidefiniteness. Therefore, to show positivity of~\eqref{eq:qei_e_positivebounded} we need to construct an auxiliary state $\tilde{\omega}$, whose two-point function has the same behaviour as $\omega$ in the limit $\Lambda \to 0$ (i.e., the difference of their two-point functions converges to $0$), but is positive for all $\Lambda > 0$. In principle, this is the massive extension of our massless state, but its construction is not fully trivial. We first use that as $\Lambda \to 0$ we have~\cite[Eq.~(10.31.2)]{dlmf}
\begin{equation}
\bessel*{K}{0}{2 \Lambda \mathe^{-\gamma} \sqrt{ ( \epsilon + \mathi u ) ( \epsilon + \mathi v ) } } = - \frac{1}{2} \ln \left[ \Lambda^2 ( \epsilon + \mathi u ) ( \epsilon + \mathi v ) \right] + \bigo{\Lambda^2 \ln \Lambda}
\end{equation}
for the second modified Bessel function $\mathrm{K}$, and using the integral~\cite[Eq.~(10.32.10)]{dlmf} we compute that
\begin{equation}
\bessel*{K}{0}{M \sqrt{ ( \epsilon + \mathi u ) ( \epsilon + \mathi v ) } } = \int_0^\infty \exp\left[ - \frac{a}{2} M ( \epsilon + \mathi u ) - \frac{M}{2 a} ( \epsilon + \mathi v ) \right] \frac{1}{2 a} \total a
\end{equation}
for $M > 0$. We further compute
\begin{splitequation}
\partial^2 \bessel*{K}{0}{M \sqrt{ ( \epsilon + \mathi u ) ( \epsilon + \mathi v ) } } &= - 4 \partial_u \partial_v \bessel*{K}{0}{M \sqrt{ ( \epsilon + \mathi u ) ( \epsilon + \mathi v ) } } \\
&= M^2 \int_0^\infty \exp\left[ - \frac{a}{2} M ( \epsilon + \mathi u ) - \frac{M}{2 a} ( \epsilon + \mathi v ) \right] \frac{1}{2 a} \total a \\
&= M^2 \bessel*{K}{0}{M \sqrt{ ( \epsilon + \mathi u ) ( \epsilon + \mathi v ) } } \eqend{,} \raisetag{1.5em}
\end{splitequation}
using that the $a$ integral is absolutely convergent for $\epsilon > 0$ such that we can differentiate under the integral sign. It follows that
\begin{equation}
\left( \partial^2 - 4 \Lambda^2 \mathe^{-2\gamma} \right) \bessel*{K}{0}{2 \Lambda \mathe^{-\gamma} \sqrt{ ( \epsilon + \mathi u ) ( \epsilon + \mathi v ) } } = 0 \eqend{,}
\end{equation}
and we see that we have obtained a suitable massive extension of our massless Hadamard parametrix $H^+(x,y)$.

For the state-dependent part $W(x,y)$, the construction is more involved. For this, we use that in two dimensions solutions of the massless Klein--Gordon equation split into functions that depend on only one light-cone coordinate: the general solution of $\partial^2 f(x) = 0$ is $f(x) = f_u(u(x)) + f_v(v(x))$ since the d'Alembertian factorises according to $\partial^2 = - 4 \partial_u \partial_v$. Since $\partial_x^2 W(x,y) = \partial_y^2 W(x,y) = 0$, this generalises to
\begin{splitequation}
W(x,y) &= W_{uu}(u(x),u(y)) + W_{uv}(u(x),v(y)) \\
&\quad+ W_{vu}(v(x),u(y)) + W_{vv}(v(x),v(y)) \eqend{,}
\end{splitequation}
where $W_{uu}$, $W_{uv}$, $W_{vu}$ and $W_{vv}$ are smooth functions. Because of the symmetry $W(x,y) = W(y,x)$, $W_{uu}$ and $W_{vv}$ are symmetric in their arguments, while $W_{vu}(u,v) = W_{uv}(v,u)$. Passing to Fourier space in each variable, we obtain
\begin{equation}
W_{uv}(u(x),v(y)) = \int \tilde{W}_{uv}(p,q) \, \mathe^{\mathi p u(x) + \mathi q v(y)} \frac{\total p \total q}{(2\pi)^2}
\end{equation}
and the analogous expressions for $W_{uu}$, $W_{vu}$ and $W_{vv}$, where the symmetry leads to $\tilde{W}_{vu}(p,q) = \tilde{W}_{uv}(q,p)$. The conditionally positive semidefiniteness~\eqref{eq:thm_renorm_wpos_cont} of $W$ reads in Fourier space
\begin{splitequation}
\label{eq:qei_w_pos}
&\iint \Big[ \tilde{W}_{uu}(p,q) \, \tilde{f}(-p,0) \tilde{f}^*(q,0) + \tilde{W}_{vu}(p,q) \, \tilde{f}(0,-p) \tilde{f}^*(q,0) \\
&\qquad+ \tilde{W}_{uv}(p,q) \, \tilde{f}(-p,0) \tilde{f}^*(0,q) + \tilde{W}_{vv}(p,q) \, \tilde{f}(0,-p) \tilde{f}^*(0,q) \Big] \frac{\total p \total q}{(2\pi)^2} \geq 0 \raisetag{4.5em}
\end{splitequation}
for all $\tilde{f} \istest$ with $\tilde{f}(0,0) = 0$, where
\begin{equation}
\tilde{f}(p,q) \equiv \iint f(u,v) \mathe^{- \mathi p u - \mathi q v} \total u \total v
\end{equation}
is the Fourier transform of the test function $f$. Since $f$ is arbitrary, the condition~\eqref{eq:qei_w_pos} actually separates in individual conditions for $W_{uu}$, $W_{vv}$ and $W_{uv}$:
\begin{equations}[eq:qei_wuuvv_pos]
\iint \tilde{W}_{uu}(p,q) \, \tilde{g}(-p) \tilde{g}^*(q) \frac{\total p \total q}{(2\pi)^2} &\geq 0 \eqend{,} \\
\iint \tilde{W}_{vv}(p,q) \, \tilde{h}(-p) \tilde{h}^*(q) \frac{\total p \total q}{(2\pi)^2} &\geq 0 \eqend{,} \\
\iint \tilde{W}_{uv}(p,q) \Big[ \tilde{h}(-q) \tilde{g}^*(p) + \tilde{g}(-p) \tilde{h}^*(q) \Big] \frac{\total p \total q}{(2\pi)^2} &\geq 0 \eqend{,}
\end{equations}
where now $g,h \in \mathcal{S}(\mathbb{R})$ with $\tilde{g}(0) = \tilde{h}(0) = 0$.\footnote{For example, the condition for $\tilde{W}_{uu}$ is obtained by taking $\tilde{f}(p,q)$ such that $\tilde{f}(0,q) = 0$ and defining $\tilde{g}(p) \equiv \tilde{f}(p,0)$.} Since by assumption $W$ and thus its components $W_{uu}$ and so on are smooth functions, the Fourier coefficients $\tilde{W}_{uu}$ and so on have fast decay at infinity. We can thus define
\begin{splitequation}
W_M(x,y) &\equiv \iint \tilde{W}_{uu}(p,q) \, \mathe^{\mathi p u(x) + \mathi \frac{M^2}{4 p} v(x)} \mathe^{\mathi q u(y) + \mathi \frac{M^2}{4 q} v(y)} \frac{\total p \total q}{(2\pi)^2} \\
&\quad+ \iint \tilde{W}_{vu}(p,q) \, \mathe^{\mathi \frac{M^2}{4 p} u(x) + \mathi p v(x)} \mathe^{\mathi q u(y) + \mathi \frac{M^2}{4 q} v(y)} \frac{\total p \total q}{(2\pi)^2} \\
&\quad+ \iint \tilde{W}_{uv}(p,q) \, \mathe^{\mathi p u(x) + \mathi \frac{M^2}{4 p} v(x)} \mathe^{\mathi \frac{M^2}{4 q} u(y) + \mathi q v(y)} \frac{\total p \total q}{(2\pi)^2} \\
&\quad+ \iint \tilde{W}_{vv}(p,q) \, \mathe^{\mathi \frac{M^2}{4 p} u(x) + \mathi p v(x)} \mathe^{\mathi \frac{M^2}{4 q} u(y) + \mathi q v(y)} \frac{\total p \total q}{(2\pi)^2} \eqend{,}
\end{splitequation}
where the integrals are absolutely convergent and which by construction fulfills
\begin{equation}
\left( \partial_x^2 - M^2 \right) W(x,y) = \left( \partial_y^2 - M^2 \right) W(x,y) = 0 \eqend{.}
\end{equation}
Note that in the case where $W(x,y)$ is the Fourier transform of a positive measure, we have $W_{uv} = W_{vu} = 0$ and $\tilde{W}_{uu}(p,q) \total p \total q = \delta(p+q) \total \mu_{W,u}(p)$, $\tilde{W}_{vv}(p,q) \total p \total q = \delta(p+q) \total \mu_{W,v}(p)$, and the formulas above need to be adjusted accordingly. For example, this is the case for the state~\eqref{eq:thm_renorm_w_example}, where we compute
\begin{splitequation}
\tilde{W}_{uu}(p,q) \total p \total q &= \tilde{W}_{vv}(p,q) \total p \total q \\
&= \delta(p+q) \Theta\left( \abs{p} \in [E_0, E_1] \right) \frac{\pi \mathe^{- \beta \abs{p}}}{\abs{p} \left( 1 - \mathe^{- \beta \abs{p}} \right)} \total p \total q
\end{splitequation}
and
\begin{splitequation}
W_M(x,y) &= \frac{1}{4 \pi} \int \Theta\left( \abs{p} \in [E_0, E_1] \right) \frac{\mathe^{- \beta \abs{p}}}{\abs{p} \left( 1 - \mathe^{- \beta \abs{p}} \right)} \\
&\qquad\times \left[ \mathe^{\mathi p u(x,y) + \mathi \frac{M^2}{4 p} v(x,y)} + \mathe^{\mathi \frac{M^2}{4 p} u(x,y) + \mathi p v(x,y)} \right] \total p \eqend{.}
\end{splitequation}

We then define the quasi-free state $\tilde{\omega}^{\Lambda,\epsilon}$ with two-point function
\begin{equation}
\tilde{\omega}^{\Lambda,\epsilon}\left( \phi(x) \star \phi(y) \right) = \frac{1}{2 \pi} \bessel*{K}{0}{2 \Lambda \mathe^{-\gamma} \sqrt{ ( \epsilon + \mathi u(x,y) ) ( \epsilon + \mathi v(x,y) ) } } + W_{2 \Lambda \mathe^{-\gamma}}(x,y) \eqend{,}
\end{equation}
which is the two-point function in the massive theory with mass $M = 2 \Lambda \mathe^{-\gamma}$. For the smeared two-point function, it follows that
\begin{splitequation}
\label{eq:qei_omega_mass}
&\iint \tilde{\omega}^{\Lambda,\epsilon}\left( \phi(x) \star \phi(y) \right) f(x) f^*(y) \total^2 x \total^2 y \\
&= \frac{1}{2 \pi} \int_0^\infty \abs{ \int f(x) \exp\left[ - \mathi a \Lambda \mathe^{-\gamma} (x^0-x^1) - \mathi \frac{\Lambda \mathe^{-\gamma}}{a} (x^0+x^1) \right] \total^2 x }^2 \\
&\qquad\times \exp\left[ - a \Lambda \mathe^{-\gamma} \epsilon - \frac{\Lambda \mathe^{-\gamma}}{a} \epsilon \right] \frac{1}{2 a} \total a + \iint W_{2 \Lambda \mathe^{-\gamma}}(x,y) f(x) f^*(y) \total^2 x \total^2 y
\end{splitequation}
for an arbitrary test function $f \istest$, where we could interchange the integral over $a$ with the ones over $x$ and $y$ for all $\Lambda,\epsilon > 0$ because of absolute convergence, and the first term is obviously positive. To show positivity for the second term, we pass to Fourier space and obtain
\begin{splitequation}
\label{eq:qei_wmass_ft}
&\iint W_M(x,y) f(x) f^*(y) \total^2 x \total^2 y \\
&= \iint \tilde{W}_{uu}(p,q) \, \tilde{g}(-p) \tilde{g}^*(q) \frac{\total p \total q}{(2\pi)^2} + \iint \tilde{W}_{vv}(p,q) \, \tilde{h}(-p) \tilde{h}^*(q) \frac{\total p \total q}{(2\pi)^2} \\
&\quad+ \iint \tilde{W}_{uv}(p,q) \, \left[ \tilde{g}(-p) \tilde{h}^*(q) + \tilde{g}^*(p) \tilde{h}(-q) \right] \frac{\total p \total q}{(2\pi)^2} \eqend{,}
\end{splitequation}
where
\begin{equation}
\tilde{g}(p) \equiv \tilde{f}\left( p, \frac{M^2}{4 p} \right) = \iint f(u,v) \mathe^{- \mathi p u - \mathi \frac{M^2}{4 p} v} \total u \total v \eqend{,} \quad \tilde{h}(p) \equiv \tilde{f}\left( \frac{M^2}{4 p}, p \right) \eqend{.}
\end{equation}
Since the Fourier transform is an isomorphism between Schwartz spaces, it follows that $\tilde{f} \istest$, and in particular for all $M > 0$ we have
\begin{equation}
\lim_{p \to 0} \tilde{g}(p) = \lim_{p \to 0} \tilde{h}(p) = 0 \eqend{.}
\end{equation}
Using the conditions~\eqref{eq:qei_wuuvv_pos}, we see that each term in~\eqref{eq:qei_wmass_ft} is non-negative, and hence we have $\iint W_M(x,y) f(x) f^*(y) \total^2 x \total^2 y \geq 0$ for all $f \istest$, so that~\eqref{eq:qei_omega_mass}
\begin{equation}
\label{eq:qei_omega_pos}
\iint \tilde{\omega}^{\Lambda,\epsilon}\left( \phi(x) \star \phi(y) \right) f(x) f^*(y) \total^2 x \total^2 y \geq 0
\end{equation}
for all $\Lambda, \epsilon > 0$. We then obtain that
\begin{equation}
\iint f(\tau) f(\tau') v^\mu(\tau) v^\nu(\tau') \, \tilde{\omega}^{\Lambda,\epsilon}\left( \Big( \mathcal{T}_\text{int}\left[ \partial_\mu \phi(z) \right] \Big)^\dagger \star \mathcal{T}_\text{int}\left[ \partial_\nu \phi(z') \right] \right) \total \tau \total \tau' \geq 0
\end{equation}
for all $\Lambda, \epsilon > 0$, and thus also in the limit where the regulators vanish, and the same with $v^\mu v^\nu$ replaced by $w^\mu w^\nu$. Since by construction
\begin{equation}
\label{eq:qei_limit_state}
\lim_{\Lambda \to 0} \omega^{\Lambda,\epsilon}(F) = \lim_{\Lambda \to 0} \tilde{\omega}^{\Lambda,\epsilon}(F)
\end{equation}
whenever the limit is finite, it follows that the first term of~\eqref{eq:qei_e_positivebounded} is positive, and integrating over $\tau$ we obtain a positive contribution to $E_\omega(f^2)$~\eqref{eq:qei_e_fourier}.

For the remaining terms in~\eqref{eq:qei_e_positivebounded}, we want to show that they are bounded. We start with the Hadamard parametrix $H^+(z,z')$, that is the series
\begin{splitequation}
\label{eq:qei_h_sum}
&- \frac{\mathi}{2} f(\tau) f(\tau') \Big[ v^\mu(\tau) v^\nu(\tau') + w^\mu(\tau) w^\nu(\tau') \Big] \\
&\quad\times \sum_{n=0}^\infty \sum_{k=0}^n \sum_{\ell=0}^k \sum_{m=0}^{n-k} \frac{(-1)^k \mathi^n \delta_{2(\ell+m),n}}{\ell! (k-\ell)! m! (n-k-m)!} \int\dotsi\int \mathcal{E}_{k,\ell,n,m}(\vec{x};\vec{y}) \\
&\quad\times \partial^z_\mu \partial^{z'}_\nu H^+(z,z') \prod_{i=1}^{\ell+m} g(x_i) \total^2 x_i \prod_{j=1}^{n-\ell-m} g(y_j) \total^2 y_j \eqend{.}
\end{splitequation}
Since the sum is independent of $z$ and $z'$, we can simplify it further. Namely, from the Sine--Gordon interaction~\eqref{eq:sint}, the results~\eqref{eq:timeordered_exponential} and~\eqref{eq:antitimeordered_exponential} for the (anti-)time-ordered products of vertex operators, the formula~\eqref{eq:normal_ordering_starproduct} for the star product and the explicit form of the two-point function~\eqref{eq:twopf_hadamard} we obtain
\begin{splitequation}
&\overline{\mathcal{T}}\left[ \mathe_\otimes^{- \mathi S_\text{int}} \right] \star \mathcal{T}\left[ \mathe_\otimes^{\mathi S_\text{int}} \right] = \sum_{n=0}^\infty \sum_{k=0}^n \frac{(-1)^k \mathi^n}{k! (n-k)!} \int\dotsi\int \sum_{\sigma_j = \pm 1} \overline{\mathcal{T}}\left[ \bigotimes_{i=1}^k V_{\sigma_i \beta}(x_i) \right] \\
&\hspace{10em} \star \mathcal{T}\left[ \bigotimes_{j=k+1}^n V_{\sigma_j \beta}(x_j) \right] \prod_{j=1}^n g(x_j) \total^2 x_j \\
&\quad= \sum_{n=0}^\infty \sum_{k=0}^n \sum_{\ell=0}^k \sum_{m=0}^{n-k} \frac{(-1)^k \mathi^n}{\ell! (k-\ell)! m! (n-k-m)!} \int\dotsi\int \mathcal{E}_{k,\ell,n,m}(\vec{x};\vec{y}) \\
&\qquad\times \mathcal{N}_G\left[ \prod_{i=1}^{\ell+m} V_\beta(x_i) \prod_{j=1}^{n-\ell-m} V_{-\beta}(y_j) \right] \\
&\qquad\times \left( \frac{\Lambda}{\mu} \right)^\frac{\beta^2 (2(\ell+m)-n)^2}{4 \pi} \prod_{i=1}^{\ell+m} g(x_i) \total^2 x_i \prod_{j=1}^{n-\ell-m} g(y_j) \total^2 y_j \raisetag{2.3em}
\end{splitequation}
by a long but straightforward computation. As before, we used that the (anti-)time-ordered products are symmetric in their arguments, that there are $k!/[ \ell! (k-\ell)! ]$ ways to choose $\ell$ vertex operators $V_\beta$ with positive sign from a total of $k$ ones $V_{\pm \beta}$ with either sign, and renamed the insertion points of the $V_{-\beta}$ operators to $y_j$. Taking the expectation value in the state $\omega^{\Lambda,\epsilon}$ and the limit $\Lambda \to 0$, only terms with $n = 2(\ell+m)$ give a non-vanishing contribution, and renaming $n \to 2n$ we obtain
\begin{splitequation}
\label{eq:qei_1_expansion}
&\sum_{n=0}^\infty \sum_{k=0}^{2n} \sum_{\ell=\max(0,k-n)}^{\min(k,n)} \frac{(-1)^{k+n}}{\ell! (k-\ell)! (n-\ell)! (n-k+\ell)!} \int\dotsi\int \mathcal{E}_{k,\ell,2n,n-\ell}(\vec{x};\vec{y}) \\
&\quad\times \prod_{i=1}^n g(x_i) \total^2 x_i \prod_{j=1}^n g(y_j) \total^2 y_j = \omega\left( \overline{\mathcal{T}}\left[ \mathe_\otimes^{- \mathi S_\text{int}} \right] \star \mathcal{T}\left[ \mathe_\otimes^{\mathi S_\text{int}} \right] \right) = 1 \eqend{,} \raisetag{2.4em}
\end{splitequation}
where the last equality follows because $\overline{\mathcal{T}}\left[ \mathe_\otimes^{- \mathi S_\text{int}} \right] = \mathcal{T}\left[ \mathe_\otimes^{\mathi S_\text{int}} \right]^{\star (-1)}$~\eqref{eq:t_starinverse}. Since $g$ is arbitrary, we conclude that
\begin{equation}
\label{eq:qei_vanishing_sum}
\mathbb{S}_{\vec{x};\vec{y}} \sum_{k=0}^{2n} \sum_{\ell=\max(0,k-n)}^{\min(k,n)} \frac{(-1)^{k+n}}{\ell! (k-\ell)! (n-\ell)! (n-k+\ell)!} \mathcal{E}_{k,\ell,2n,n-\ell}(\vec{x};\vec{y}) = 0
\end{equation}
for all $n \geq 1$, where $\mathbb{S}_{\vec{x};\vec{y}}$ indicates symmetrisation of the $x_i$ and $y_j$ among themselves; since $\mathcal{E}_{k,k-\ell,2n,n-k+\ell}(\vec{y};\vec{x}) = \mathcal{E}_{k,\ell,2n,n-\ell}(\vec{x};\vec{y})$ as well as $k-\max(0,k-n) = \min(k,n)$ and $k-\min(k,n) = \max(0,k-n)$, the change of summation index $\ell \to k-\ell$ shows that the sum is already symmetric under the interchange of $x_i$ and $y_i$. That is, the sum in~\eqref{eq:qei_h_sum} collapses to the single term with $n = 0$, which is the free-theory contribution
\begin{splitequation}
\label{eq:qei_h}
h(\tau,\tau') &\equiv - \frac{\mathi}{2} f(\tau) f(\tau') \Big[ v^\mu(\tau) v^\nu(\tau') + w^\mu(\tau) w^\nu(\tau') \Big] \\
&\qquad\times \partial^z_\mu \partial^{z'}_\nu H^+(z,z') \Big\vert_{z = z(\tau), z' = z(\tau')} \eqend{.}
\end{splitequation}

To bound the contribution of~\eqref{eq:qei_h} to the total energy density $E_\omega(f^2)$, we compare the actual trajectory $z^\mu(\tau)$ with a stationary one $z^\mu = \delta^\mu_0 \tau$, and bound each part separately. For this, we use that since the curve $z^\mu(\tau)$ is future-directed we have~\eqref{eq:dudv_dtau}
\begin{equation}
\label{eq:eqi_du_dtau}
\partial_\tau u(z(\tau)) = \sqrt{ 1 + [ v^1(\tau) ]^2 } - v^1(\tau) > 0 \eqend{,}
\end{equation}
and therefore we can view $\tau$ as function of $u$ using the inverse function theorem. Following~\cite{flanagan1997}, we then define
\begin{splitequation}
\label{eq:qei_deltau_def}
\Delta^u_{\mu\nu}(\tau,\tau') &\equiv \partial_\mu^z \partial_\nu^{z'} \Big[ \ln\left( \epsilon + \mathi u(z,z') \right) - \ln\left[ \epsilon + \mathi \left( \tau(u(z)) - \tau(u(z')) \right) \right] \Big]_{z = z(\tau), z' = z(\tau')} \\
&= \left( \delta_\mu^0 - \delta_\mu^1 \right) \left( \delta_\nu^0 - \delta_\nu^1 \right) \frac{\partial}{\partial u(z)} \frac{\partial}{\partial u(z')} \\
&\quad\times \Big[ \ln[ \epsilon + \mathi u(z) - \mathi u(z') ] - \ln[ \epsilon + \mathi \tau(u(z)) - \mathi \tau(u(z')) ] \Big]_{z = z(\tau), z' = z(\tau')} \eqend{.} \raisetag{4em}
\end{splitequation}
Using that as $\epsilon \to 0$ we have
\begin{equation}
\ln[ \epsilon + \mathi u(z) - \mathi u(z') ] = \ln\abs{ u(z) - u(z') } + \frac{\mathi \pi}{2} \sgn[ u(z) - u(z') ]
\end{equation}
and the analogous expression with $\tau(u)$, and that $\sgn[ \tau(u(z)) - \tau(u(z')) ] = \sgn[ u(z) - u(z') ]$ since $\tau$ is a monotone function of $u$, we can simplify~\eqref{eq:qei_deltau_def} to
\begin{splitequation}
\Delta^u_{\mu\nu}(\tau,\tau') &= \left( \delta_\mu^0 - \delta_\mu^1 \right) \left( \delta_\nu^0 - \delta_\nu^1 \right) \frac{\partial}{\partial u(z)} \frac{\partial}{\partial u(z')} \\
&\qquad\times \Big[ \ln\abs{ u(z) - u(z') } - \ln\abs{ \tau(u(z)) - \tau(u(z')) } \Big]_{z = z(\tau), z' = z(\tau')} \\
&= \left( \delta_\mu^0 - \delta_\mu^1 \right) \left( \delta_\nu^0 - \delta_\nu^1 \right) \\
&\qquad\times \bigg[ \frac{1}{\left[ u(z) - u(z') \right]^2} - \frac{\tau'(u(z)) \tau'(u(z'))}{\left[ \tau(u(z)) - \tau(u(z')) \right]^2} \bigg]_{z = z(\tau), z' = z(\tau')}
\end{splitequation}
and
\begin{splitequation}
\lim_{\tau' \to \tau} \Delta^u_{\mu\nu}(\tau,\tau') &= \left( \delta_\mu^0 - \delta_\mu^1 \right) \left( \delta_\nu^0 - \delta_\nu^1 \right) \frac{3 [ \tau''(u(z)) ]^2 - 2 \tau'(u(z)) \tau'''(u(z))}{12 [ \tau'(u(z)) ]^2} \eqend{,}
\end{splitequation}
which was already determined in~\cite[Eq.~(2.17)]{flanagan1997}. From~\eqref{eq:eqi_du_dtau}, we obtain
\begin{equation}
\tau'(u(z)) = \left[ \partial_\tau u(z(\tau)) \right]^{-1} = \frac{1}{v^0(z(\tau)) - v^1(z(\tau))} \eqend{,}
\end{equation}
and it follows that [with $v^\mu(\tau) \equiv v^\mu(z(\tau))$]
\begin{splitequation}
\label{eq:qei_deltau}
\Delta^u_{\mu\nu}(\tau,\tau) = - \frac{\left( \delta_\mu^0 - \delta_\mu^1 \right) \left( \delta_\nu^0 - \delta_\nu^1 \right)}{3 [ v^0(\tau) - v^1(\tau) ]^\frac{3}{2}} \frac{\partial^2}{\partial \tau^2} \frac{1}{\sqrt{ v^0(\tau) - v^1(\tau) }} \eqend{.}
\end{splitequation}
The analogous computation establishes that
\begin{splitequation}
\label{eq:qei_deltav}
\Delta^v_{\mu\nu}(\tau,\tau') &\equiv \partial_\mu^z \partial_\nu^{z'} \Big[ \ln\left( \epsilon + \mathi v(z,z') \right) - \ln\left[ \epsilon + \mathi \left( \tau(v(z)) - \tau(v(z')) \right) \right] \Big]_{z = z(\tau), z' = z(\tau')} \\
&\to - \frac{\left( \delta_\mu^0 + \delta_\mu^1 \right) \left( \delta_\nu^0 + \delta_\nu^1 \right)}{3 [ v^0(\tau) + v^1(\tau) ]^\frac{3}{2}} \frac{\partial^2}{\partial \tau^2} \frac{1}{\sqrt{ v^0(\tau) + v^1(\tau) }} \quad (\tau' \to \tau) \eqend{,}
\end{splitequation}
using that also
\begin{equation}
\label{eq:eqi_dv_dtau}
\partial_\tau v(z(\tau)) = \partial_\tau \left[ z^0(\tau) + z^1(\tau) \right] = v^0(\tau) + v^1(\tau) = \sqrt{ 1 + [ v^1(\tau) ]^2 } + v^1(\tau) > 0 \eqend{.}
\end{equation}
By construction, the $\Delta^{u/v}$ measure the difference between the parametrix on the actual trajectory $z^\mu(\tau)$ and an auxiliary stationary one $z^\mu = \delta^\mu_0 \tau$, and we see that this is finite, but possibly large if the trajectory is almost light-like with $v^1 \approx \pm v^0$. In particular, it diverges in the light-like limit $v^1 \to \pm v^0$, and we see again that we cannot simply take this limit to arrive at the result for a null trajectory.

Using the explicit expression for the Hadamard parametrix~\eqref{eq:hadamard}, we can thus decompose~\eqref{eq:qei_h} in the form
\begin{splitequation}
\label{eq:qei_htau_decomp}
h(\tau,\tau') &= \frac{1}{8 \pi} f(\tau) f(\tau') \Big[ v^\mu(\tau) v^\nu(\tau') + w^\mu(\tau) w^\nu(\tau') \Big] \\
&\quad\times \bigg[ \Delta^u_{\mu\nu}(\tau,\tau') + \Delta^v_{\mu\nu}(\tau,\tau') + \partial^z_\mu \partial^{z'}_\nu \ln\left[ \epsilon + \mathi \left( \tau(u(z)) - \tau(u(z')) \right) \right] \\
&\qquad+ \partial^z_\mu \partial^{z'}_\nu \ln\left[ \epsilon + \mathi \left( \tau(v(z)) - \tau(v(z')) \right) \right] \bigg]_{z = z(\tau), z' = z(\tau')} \\
&\equiv h^\Delta(\tau,\tau') + h^0(\tau,\tau') \eqend{,} \raisetag{1.2em}
\end{splitequation}
where $h^\Delta$ is the part containing $\Delta^{u/v}$. In this part, we can take the limit $\tau' \to \tau$, which results in
\begin{splitequation}
\label{eq:qei_hdeltatau}
h^\Delta(\tau,\tau) &= - \frac{1}{12 \pi} f^2(\tau) \bigg[ \sqrt{ v^0(\tau) - v^1(\tau) } \frac{\partial^2}{\partial \tau^2} \frac{1}{\sqrt{ v^0(\tau) - v^1(\tau) }} \\
&\qquad\qquad+ \sqrt{ v^0(\tau) + v^1(\tau) } \frac{\partial^2}{\partial \tau^2} \frac{1}{\sqrt{ v^0(\tau) + v^1(\tau) }} \bigg] \\
&= - \frac{1}{24 \pi} f^2(\tau) \frac{[ \dot v^1(\tau) ]^2}{1 + [ v^1(\tau) ]^2} \eqend{,}
\end{splitequation}
where we used that $w^1 = v^0$, $w^0 = v^1$ and that $v^0(\tau) = \sqrt{1 + [ v^1(\tau) ]^2}$, which follows from the normalisation $v^\mu v_\mu = -1$. For $h^0(\tau,\tau')$ which contains the contribution from the straight trajectory, we replace $f$ by a real-valued test function $f_N$ with compact support, call the resulting expression $h^0_N$ and bound the contribution of this term to the energy density using~\eqref{eq:thm_qei_N}. We thus have to compute
\begin{equation}
\label{eq:qei_htautau_fourier}
\int h^0_N(\tau,\tau) \total \tau = 2 \lim_{\delta \to 0^+} \int_0^\infty \mathe^{- \delta \xi^2} (\mathcal{F} h^0_N)(-\xi,\xi) \frac{\total \xi}{2 \pi}
\end{equation}
with the Fourier transform~\eqref{eq:qei_fourier_def}
\begin{splitequation}
\label{eq:qei_fourier_hn}
(\mathcal{F} h^0_N)(\xi,\zeta) &= \iint \mathe^{\mathi \xi \tau + \mathi \zeta \sigma} h^0_N(\tau,\sigma) \total \tau \total \sigma \\
&= \frac{1}{2 \pi} \iint \mathe^{\mathi \xi \tau + \mathi \zeta \sigma} f_N(\tau) f_N(\sigma) \frac{1}{( \tau - \sigma - \mathi \epsilon )^2} \total \tau \total \sigma \eqend{,}
\end{splitequation}
where we performed the derivatives in $h^0_N$~\eqref{eq:qei_htau_decomp} using~\eqref{eq:eqi_du_dtau} and~\eqref{eq:eqi_dv_dtau} and used again that $w^1 = v^0$, $w^0 = v^1$.

Using the Fourier transform
\begin{equation}
\frac{1}{( \tau - \mathi \epsilon )^2} = \int \left[ - 2 \pi p \, \Theta(p) \mathe^{- p \epsilon}  \right] \mathe^{- \mathi p \tau} \frac{\total p}{2 \pi} \eqend{,}
\end{equation}
we can write~\eqref{eq:qei_fourier_hn} as
\begin{equation}
\label{eq:qei_fourier_hn2}
(\mathcal{F} h^0_N)(\xi,\zeta) = - \int \tilde{f}_N(p-\xi) \tilde{f}_N(-p-\zeta) p \, \Theta(p) \mathe^{- p \epsilon} \frac{\total p}{2 \pi} \eqend{,}
\end{equation}
where
\begin{equation}
\tilde{f}_N(q) \equiv \int f_N(\tau) \mathe^{- \mathi q \tau} \total \tau
\end{equation}
is the Fourier transform of the test function, which is also a Schwartz function, and we could interchange the integrals for $\epsilon > 0$ because everything converges absolutely. It follows that~\eqref{eq:qei_htautau_fourier}
\begin{splitequation}
\int h^0_N(\tau,\tau) \total \tau &= - 2 \lim_{\delta \to 0^+} \int_0^\infty \mathe^{- \delta \xi^2} \int \tilde{f}_N(p+\xi) \tilde{f}_N(-p-\xi) p \, \Theta(p) \mathe^{- p \epsilon} \frac{\total p}{2 \pi} \frac{\total \xi}{2 \pi} \\
&= - 2 \int_0^\infty \int_0^\infty \abs{ \tilde{f}_N(p+\xi) }^2 p \, \mathe^{- p \epsilon} \frac{\total p}{2 \pi} \frac{\total \xi}{2 \pi} \eqend{,} \raisetag{1.8em}
\end{splitequation}
where we used that since $f_N \in \mathcal{S}(\mathbb{R})$ is real-valued, we have $\tilde{f}_N(-p) = \tilde{f}^*_N(p)$, and where we could take the limit inside the integral because everything converges absolutely. Here we see that since $p$ is positive, for a non-trivial quantum energy inequality we need that $\xi \geq 0$~\cite{fewstersmith2008}, since otherwise the integrand would be unbounded along the diagonal $\xi = -p$ in the limit $\epsilon \to 0$. Shifting $p \to p - \xi$, we can then interchange the integrations and perform the integral over $\xi$, which gives
\begin{splitequation}
\label{eq:qei_h0n_tauint}
\int h^0_N(\tau,\tau) \total \tau &= - 2 \int_0^\infty \abs{ \tilde{f}_N(p) }^2 \int_0^p (p-\xi) \, \mathe^{- (p-\xi) \epsilon} \frac{\total \xi}{2 \pi} \frac{\total p}{2 \pi} \\
&= - \frac{1}{2 \pi^2} \int_0^\infty \abs{ \tilde{f}_N(p) }^2 \frac{1 - \mathe^{- p \epsilon} ( 1 + p \epsilon )}{\epsilon^2} \total p \eqend{.}
\end{splitequation}
Using dominated convergence, we can then take the limit $\epsilon \to 0$ inside the integral, since the integrand (apart from the test functions) is bounded by its value at $\epsilon = 0$ (which is $p^2/2$). Using finally Parseval's theorem for the Fourier transform, we obtain
\begin{equation}
\label{eq:qei_h0n_tauint2}
\int h^0_N(\tau,\tau) \total \tau = - \frac{1}{4 \pi^2} \int_0^\infty \abs{ p \tilde{f}_N(p) }^2 \total p = - \frac{1}{4 \pi} \int \big[ f'_N(\tau) \big]^2 \total \tau \eqend{.}
\end{equation}
Choosing now a sequence of test functions $f_N(\tau) \equiv \chi_N(\tau) f(\tau)$ with $\supp \chi_N \subset [-N-2,N+2]$, $\chi_N(\tau) = 1$ for $\tau \in [-N,N]$ and $\abs{\chi_N(\tau)}, \abs{\chi'_N(\tau)} \leq 1$, we have $\abs{ f'_N(\tau) } \leq \abs{ f(\tau) } + \abs{ f'(\tau) }$ and can use dominated convergence to take the limit $N \to \infty$ inside the integral. Taking all together, from~\eqref{eq:qei_h}, \eqref{eq:qei_htau_decomp}, \eqref{eq:qei_hdeltatau} and~\eqref{eq:qei_h0n_tauint2} we thus obtain
\begin{splitequation}
\label{eq:qei_h_result}
&\int \bigg[ - \frac{\mathi}{2} f(\tau) f(\tau) \Big[ v^\mu(\tau) v^\nu(\tau) + w^\mu(\tau) w^\nu(\tau) \Big] \partial^z_\mu \partial^{z'}_\nu H^+(z,z') \Big\vert_{z = z' = z(\tau)} \bigg] \total \tau \\
&\quad= - \frac{1}{24 \pi} \int \left[ 6 \big[ f'(\tau) \big]^2 + f^2(\tau) \frac{[ \dot v^1(\tau) ]^2}{1 + [ v^1(\tau) ]^2} \right] \total \tau \eqend{,} \raisetag{1.8em}
\end{splitequation}
which depends on the test function $f$ and the trajectory $z^\mu(\tau)$ and is clearly negative, but bounded.

It remains to bound the remaining terms in~\eqref{eq:qei_e_positivebounded}, which contain $\mathcal{G}_{k,\ell,n,m}$ and the vertex operators $V_{\pm \beta}$. For the first kind of terms, we use the definition of $\mathcal{G}_{k,\ell,n,m}$~\eqref{eq:calg_def} and $\overline{\mathcal{G}}_{k,\ell,n,m}$~\eqref{eq:barcalg_def} as well as the definitions of the Feynman~\eqref{eq:hf_def} and Dyson parametrix~\eqref{eq:hd_def} to obtain
\begin{splitequation}
&\mathcal{G}_{k,\ell,2(\ell+m),m}(\vec{x};\vec{y};z) - \overline{\mathcal{G}}_{k,\ell,2(\ell+m),m}(\vec{x};\vec{y};z) \\
&= \sum_{i=1}^\ell \Big[ H^+(x_i,z) - H^\text{D}(z,x_i) \Big] + \sum_{i=\ell+1}^{\ell+m} \Big[ H^\text{F}(x_i,z) - H^+(z,x_i) \Big] \\
&\quad\qquad- \sum_{j=1}^{k-\ell} \Big[ H^+(y_j,z) - H^\text{D}(z,y_j) \Big] - \sum_{j=k-\ell+1}^{\ell+m} \Big[ H^\text{F}(y_j,z) - H^+(z,y_j) \Big] \\
&= \sum_{i=1}^{\ell+m} \Big[ G_\text{ret}(x_i,z) - G_\text{ret}(y_i,z) \Big] \raisetag{2.2em}
\end{splitequation}
with the state-independent retarded propagator
\begin{equation}
G_\text{ret}(x,y) \equiv \Theta(x^0-y^0) \Big[ H^+(x,y) - H^+(y,x) \Big] \eqend{.}
\end{equation}
Replacing $n \to 2n$ and using the Kronecker $\delta$ for the sum over $m$, it thus follows that the contribution of terms containing $\mathcal{G}_{k,\ell,n,m}$ to~\eqref{eq:qei_e_positivebounded} can be written as
\begin{splitequation}
&\frac{\beta^2}{2} f(\tau) f(\tau') \Big[ v^\mu(\tau) v^\nu(\tau') + w^\mu(\tau) w^\nu(\tau') \Big] \\
&\quad\times \sum_{n=0}^\infty \sum_{k=0}^{2n} \sum_{\ell=\max(0,k-n)}^{\min(k,n)} \frac{(-1)^{k+n}}{\ell! (k-\ell)! (n-\ell)! (n-k+\ell)!} \int\dotsi\int \mathcal{E}_{k,\ell,2n,n-\ell}(\vec{x};\vec{y}) \\
&\quad\times \sum_{i=1}^n \partial^z_\mu \Big[ G_\text{ret}(x_i,z) - G_\text{ret}(y_i,z) \Big] \partial^{z'}_\nu \bigg[ \sum_{i=1}^\ell H^+(x_i,z') + \sum_{i=\ell+1}^n H^\text{F}(x_i,z') \\
&\quad\qquad- \sum_{j=1}^{k-\ell} H^+(y_j,z') - \sum_{j=k-\ell+1}^n H^\text{F}(y_j,z') - \mathi \sum_{i=1}^n W(x_i,z') + \mathi \sum_{j=1}^n W(y_j,z') \bigg] \\
&\quad\times \prod_{i=1}^n g(x_i) g(y_i) \total^2 x_i \total^2 y_i \eqend{,} \raisetag{2.2em}
\end{splitequation}
which contains state-dependent terms involving $W$ and state-independent ones. Since the state-dependent terms are independent of $k$ and $\ell$, and moreover symmetric under the interchange of the $x_i$ and $y_j$ among themselves, we can use again the identity~\eqref{eq:qei_vanishing_sum} to conclude that they vanish for all $n > 0$. However, since they only appear starting from $n = 1$, they do not contribute at all, and only state-independent terms remain. For those, we can take the limit $\tau' \to \tau$ and bound them as in the previous section~\ref{sec_renorm_qed}, and the bound is state-independent due to our condition~\eqref{eq:thm_renorm_wpos_disc} on $W$. In the same way, we bound the contribution of the vertex operators in~\eqref{eq:qei_e_positivebounded}, and the resulting bound is also state-independent.

Taking all together, we have shown that the first term in~\eqref{eq:qei_e_positivebounded} is state-dependent but positive, since it can be obtained as the limit of vanishing cutoff $\Lambda \to 0$~\eqref{eq:qei_limit_state} of an alternative quasi-free state with positive definite two-point function~\eqref{eq:qei_omega_pos}. For the terms involving the Hadamard parametrix $H^+(z,z')$, the only non-vanishing contribution comes from the free theory~\eqref{eq:qei_h}, which we have split into a part evaluated on a straight worldline and a part describing the difference between this and the actual worldline. The second part has a finite limit as $\tau' \to \tau$~\eqref{eq:qei_hdeltatau}, while for the first part we have used~\cite[Lemma~2.12]{fewstersmith2008} in the form~\eqref{eq:thm_qei_N}; taking both together we obtain a negative but state-independent contribution~\eqref{eq:qei_h_result} to the energy density. The remaining terms in~\eqref{eq:qei_e_positivebounded} can be simply bounded, noticing that the derivatives of the state-dependent part $W$ drop out. Since the only state dependence is then in the exponentials which we estimate by $1$ using our condition~\eqref{eq:thm_renorm_wpos_disc} on $W$, the bound on these terms is state-independent. The sum of the bounds for these terms together with the contribution~\eqref{eq:qei_h_result} from the Hadamard parametrix then gives a lower bound on the smeared energy density~\eqref{eq:thm_qei}, which depends on the worldline $z(\tau)$, the smearing $f$, $\beta$ and the adiabatic cutoff $g$. Furthermore, this bound is non-trivial since the state-dependent positive part is arbitrary and can thus certainly be larger. \hfill\squareforqed

\begin{acknowledgements}
This work is supported by the Deutsche Forschungsgemeinschaft (DFG, German Research Foundation) --- project no. 396692871 within the Emmy Noether grant CA1850/1-1 and project no. 406116891 within the Research Training Group RTG 2522/1. It is a pleasure to thank Henning Bostelmann, Chris Fewster, Ben Freivogel, Nicola Pinamonti and Aron Wall for comments and suggestions. We also thank Fabrizio Zanello for discussions on conserved currents during the 46th LQP workshop in Erlangen.
\end{acknowledgements}


\bibliography{literature}

\providecommand{\href}[2]{#2}\begingroup\raggedright\begin{thebibliography}{10}

\bibitem{hawkingpenrose1970}
S.W.~Hawking and R.~Penrose, \emph{{The singularities of gravitational collapse
  and cosmology}}, \href{https://doi.org/10.1098/rspa.1970.0021}{\emph{Proc.
  Roy. Soc. Lond. A} {\bfseries 314} (1970) 529}.

\bibitem{schoenyau1979}
R.~Schoen and S.T.~Yau, \emph{{On the proof of the positive mass conjecture in
  general relativity}},
  \href{https://doi.org/10.1007/BF01940959}{\emph{Commun.~Math.~Phys.}
  {\bfseries 65} (1979) 45}.

\bibitem{schoenyau1981}
R.~Schoen and S.T.~Yau, \emph{{Proof of the positive mass theorem. II}},
  \href{https://doi.org/10.1007/BF01942062}{\emph{Commun.~Math.~Phys.}
  {\bfseries 79} (1981) 231}.

\bibitem{hawking1992}
S.W.~Hawking, \emph{{Chronology protection conjecture}},
  \href{https://doi.org/10.1103/PhysRevD.46.603}{\emph{Phys. Rev. D} {\bfseries
  46} (1992) 603}.

\bibitem{epsteinglaserjaffe1965}
H.~Epstein, V.~Glaser and A.~Jaffe, \emph{{Nonpositivity of the energy density
  in quantized field theories}},
  \href{https://doi.org/10.1007/BF02749799}{\emph{Nuovo Cim.} {\bfseries 36}
  (1965) 1016}.

\bibitem{fewster2006}
C.J.~Fewster, \emph{{Energy inequalities in quantum field theory}},  in
  \emph{XIVth International Congress on Mathematical Physics}, pp.~559--568,
  2006, \href{https://doi.org/10.1142/9789812704016_0056}{DOI}
  [\href{https://arxiv.org/abs/math-ph/0501073}{{\ttfamily math-ph/0501073}}].

\bibitem{ford1991}
L.H.~Ford, \emph{{Constraints on negative energy fluxes}},
  \href{https://doi.org/10.1103/PhysRevD.43.3972}{\emph{Phys. Rev. D}
  {\bfseries 43} (1991) 3972}.

\bibitem{brownfewsterkontou2018}
P.J.~Brown, C.J.~Fewster and E.-A.~Kontou, \emph{{A singularity theorem for
  Einstein--Klein--Gordon theory}},
  \href{https://doi.org/10.1007/s10714-018-2446-5}{\emph{Gen. Rel. Grav.}
  {\bfseries 50} (2018) 121}
  [\href{https://arxiv.org/abs/1803.11094}{{\ttfamily 1803.11094}}].

\bibitem{fewsterkontou2020}
C.J.~Fewster and E.-A.~Kontou, \emph{{A new derivation of singularity theorems
  with weakened energy hypotheses}},
  \href{https://doi.org/10.1088/1361-6382/ab685b}{\emph{Class. Quant. Grav.}
  {\bfseries 37} (2020) 065010}
  [\href{https://arxiv.org/abs/1907.13604}{{\ttfamily 1907.13604}}].

\bibitem{freivogelkontoukrommydas2022}
B.~Freivogel, E.-A.~Kontou and D.~Krommydas, \emph{{The Return of the
  Singularities: Applications of the Smeared Null Energy Condition}},
  \href{https://doi.org/10.21468/SciPostPhys.13.1.001}{\emph{SciPost Phys.}
  {\bfseries 13} (2022) 001}
  [\href{https://arxiv.org/abs/2012.11569}{{\ttfamily 2012.11569}}].

\bibitem{fewsterkontou2022}
C.J.~Fewster and E.-A.~Kontou, \emph{{A semiclassical singularity theorem}},
  \href{https://doi.org/10.1088/1361-6382/ac566b}{\emph{Class. Quant. Grav.}
  {\bfseries 39} (2022) 075028}
  [\href{https://arxiv.org/abs/2108.12668}{{\ttfamily 2108.12668}}].

\bibitem{fewsterosterbrink2008}
C.J.~Fewster and L.W.~Osterbrink, \emph{{Quantum energy inequalities for the
  non-minimally coupled scalar field}},
  \href{https://doi.org/10.1088/1751-8113/41/2/025402}{\emph{J. Phys. A}
  {\bfseries 41} (2008) 025402}
  [\href{https://arxiv.org/abs/0708.2450}{{\ttfamily 0708.2450}}].

\bibitem{fewster2012}
C.J.~Fewster, \emph{{Lectures on quantum energy inequalities}},
  \href{https://arxiv.org/abs/1208.5399}{{\ttfamily 1208.5399}}.

\bibitem{kontousanders2020}
E.-A.~Kontou and K.~Sanders, \emph{{Energy conditions in general relativity and
  quantum field theory}},
  \href{https://doi.org/10.1088/1361-6382/ab8fcf}{\emph{Class. Quant. Grav.}
  {\bfseries 37} (2020) 193001}
  [\href{https://arxiv.org/abs/2003.01815}{{\ttfamily 2003.01815}}].

\bibitem{fewsterhollands2005}
C.J.~Fewster and S.~Hollands, \emph{{Quantum energy inequalities in
  two-dimensional conformal field theory}},
  \href{https://doi.org/10.1142/S0129055X05002406}{\emph{Rev. Math. Phys.}
  {\bfseries 17} (2005) 577}
  [\href{https://arxiv.org/abs/math-ph/0412028}{{\ttfamily math-ph/0412028}}].

\bibitem{bostelmannfewster2009}
H.~Bostelmann and C.J.~Fewster, \emph{{Quantum Inequalities from Operator
  Product Expansions}},
  \href{https://doi.org/10.1007/s00220-009-0853-x}{\emph{Commun. Math. Phys.}
  {\bfseries 292} (2009) 761}
  [\href{https://arxiv.org/abs/0812.4760}{{\ttfamily 0812.4760}}].

\bibitem{flanaganwald1996}
{\'E}.{\'E}.~Flanagan and R.M.~Wald, \emph{{Does back reaction enforce the
  averaged null energy condition in semiclassical gravity?}},
  \href{https://doi.org/10.1103/PhysRevD.54.6233}{\emph{Phys. Rev. D}
  {\bfseries 54} (1996) 6233}
  [\href{https://arxiv.org/abs/gr-qc/9602052}{{\ttfamily gr-qc/9602052}}].

\bibitem{fukoellermarolf2017}
Z.~Fu, J.~Koeller and D.~Marolf, \emph{{The quantum null energy condition in
  curved space}}, \href{https://doi.org/10.1088/1361-6382/aa8f2c}{\emph{Class.
  Quant. Grav.} {\bfseries 34} (2017) 225012}
  [\href{https://arxiv.org/abs/1706.01572}{{\ttfamily 1706.01572}}].

\bibitem{boussoetal2016a}
R.~Bousso, Z.~Fisher, S.~Leichenauer and A.C.~Wall, \emph{{Quantum focusing
  conjecture}}, \href{https://doi.org/10.1103/PhysRevD.93.064044}{\emph{Phys.
  Rev. D} {\bfseries 93} (2016) 064044}
  [\href{https://arxiv.org/abs/1506.02669}{{\ttfamily 1506.02669}}].

\bibitem{boussoetal2016b}
R.~Bousso, Z.~Fisher, J.~Koeller, S.~Leichenauer and A.C.~Wall, \emph{{Proof of
  the quantum null energy condition}},
  \href{https://doi.org/10.1103/PhysRevD.93.024017}{\emph{Phys. Rev. D}
  {\bfseries 93} (2016) 024017}
  [\href{https://arxiv.org/abs/1509.02542}{{\ttfamily 1509.02542}}].

\bibitem{bostelmanncadamurofewster2013}
H.~Bostelmann, D.~Cadamuro and C.J.~Fewster, \emph{{Quantum energy inequality
  for the massive Ising model}},
  \href{https://doi.org/10.1103/PhysRevD.88.025019}{\emph{Phys. Rev. D}
  {\bfseries 88} (2013) 025019}
  [\href{https://arxiv.org/abs/1304.7682}{{\ttfamily 1304.7682}}].

\bibitem{bostelmanncadamuro2016}
H.~Bostelmann and D.~Cadamuro, \emph{{Negative energy densities in integrable
  quantum field theories at one-particle level}},
  \href{https://doi.org/10.1103/PhysRevD.93.065001}{\emph{Phys. Rev. D}
  {\bfseries 93} (2016) 065001}
  [\href{https://arxiv.org/abs/1502.01714}{{\ttfamily 1502.01714}}].

\bibitem{bostelmanncadamuro2021}
H.~Bostelmann and D.~Cadamuro, \emph{{Fermionic integrable models and graded
  Borchers triples}},  \href{https://arxiv.org/abs/2112.14686}{{\ttfamily
  2112.14686}}.

\bibitem{hollandswald2015}
S.~Hollands and R.M.~Wald, \emph{{Quantum fields in curved spacetime}},
  \href{https://doi.org/10.1016/j.physrep.2015.02.001}{\emph{Phys.~Rept.}
  {\bfseries 574} (2015) 1} [\href{https://arxiv.org/abs/1401.2026}{{\ttfamily
  1401.2026}}].

\bibitem{fewsterverch2015}
C.J.~Fewster and R.~Verch, \emph{{Algebraic quantum field theory in curved
  spacetimes}},  in \emph{Advances in Algebraic Quantum Field Theory},
  R.~Brunetti, C.~Dappiaggi, K.~Fredenhagen and J.~Yngvason, eds., (Cham),
  p.~125, Springer International Publishing (2015),
  \href{https://doi.org/{10.1007/978-3-319-21353-8\_4}}{DOI}
  [\href{https://arxiv.org/abs/1504.00586}{{\ttfamily 1504.00586}}].

\bibitem{fredenhagenrejzner2016}
K.~Fredenhagen and K.~Rejzner, \emph{{Quantum field theory on curved
  spacetimes: Axiomatic framework and examples}},
  \href{https://doi.org/10.1063/1.4939955}{\emph{J.~Math.~Phys.} {\bfseries 57}
  (2016) 031101} [\href{https://arxiv.org/abs/1412.5125}{{\ttfamily
  1412.5125}}].

\bibitem{froebcadamuro2022}
M.B.~Fr{\"o}b and D.~Cadamuro, \emph{{Local operators in the Sine-Gordon model:
  $\partial_\mu \phi \, \partial_\nu \phi$ and the stress tensor}},
  \href{https://arxiv.org/abs/2205.09223}{{\ttfamily 2205.09223}}.

\bibitem{bahnsrejzner2018}
D.~Bahns and K.~Rejzner, \emph{{The Quantum Sine-Gordon Model in Perturbative
  AQFT}}, \href{https://doi.org/10.1007/s00220-017-2944-4}{\emph{Commun. Math.
  Phys.} {\bfseries 357} (2018) 421}
  [\href{https://arxiv.org/abs/1609.08530}{{\ttfamily 1609.08530}}].

\bibitem{bahnsrejznerfredenhagen2021}
D.~Bahns, K.~Rejzner and K.~Fredenhagen, \emph{{Local Nets of Von Neumann
  Algebras in the Sine--Gordon Model}},
  \href{https://doi.org/10.1007/s00220-021-03961-y}{\emph{Commun. Math. Phys.}
  {\bfseries 383} (2021) 1} [\href{https://arxiv.org/abs/1712.02844}{{\ttfamily
  1712.02844}}].

\bibitem{bahnspinamontirejzner2021}
D.~Bahns, N.~Pinamonti and K.~Rejzner, \emph{{Equilibrium states for the
  massive Sine-Gordon theory in the Lorentzian signature}},
  \href{https://arxiv.org/abs/2103.09328}{{\ttfamily 2103.09328}}.

\bibitem{martinschlesierzahn2023}
M.A.A.~Martin, R.~Schlesier and J.~Zahn, \emph{{Semiclassical energy density of
  kinks and solitons}},
  \href{https://doi.org/10.1103/PhysRevD.107.065002}{\emph{Phys. Rev. D}
  {\bfseries 107} (2023) 065002}
  [\href{https://arxiv.org/abs/2204.08785}{{\ttfamily 2204.08785}}].

\bibitem{fewstereveson1998}
C.J.~Fewster and S.P.~Eveson, \emph{{Bounds on negative energy densities in
  flat space-time}},
  \href{https://doi.org/10.1103/PhysRevD.58.084010}{\emph{Phys. Rev. D}
  {\bfseries 58} (1998) 084010}
  [\href{https://arxiv.org/abs/gr-qc/9805024}{{\ttfamily gr-qc/9805024}}].

\bibitem{flanagan1997}
{\'E}.{\'E}.~Flanagan, \emph{{Quantum inequalities in two-dimensional Minkowski
  space-time}}, \href{https://doi.org/10.1103/PhysRevD.56.4922}{\emph{Phys.
  Rev. D} {\bfseries 56} (1997) 4922}
  [\href{https://arxiv.org/abs/gr-qc/9706006}{{\ttfamily gr-qc/9706006}}].

\bibitem{hollandsruan2002}
S.~Hollands and W.~Ruan, \emph{{The State Space of Perturbative Quantum Field
  Theory in Curved Spacetimes}},
  \href{https://doi.org/10.1007/s00023-002-8629-2}{\emph{Ann. H. Poincar{\'e}}
  {\bfseries 3} (2002) 635}
  [\href{https://arxiv.org/abs/gr-qc/0108032}{{\ttfamily gr-qc/0108032}}].

\bibitem{sanders2010}
K.~Sanders, \emph{{Equivalence of the (Generalised) Hadamard and Microlocal
  Spectrum Condition for (Generalised) Free Fields in Curved Spacetime}},
  \href{https://doi.org/10.1007/s00220-009-0900-7}{\emph{Commun. Math. Phys.}
  {\bfseries 295} (2010) 485}
  [\href{https://arxiv.org/abs/0903.1021}{{\ttfamily 0903.1021}}].

\bibitem{fewster2000}
C.J.~Fewster, \emph{{A general worldline quantum inequality}},
  \href{https://doi.org/10.1088/0264-9381/17/9/302}{\emph{Class. Quant. Grav.}
  {\bfseries 17} (2000) 1897}
  [\href{https://arxiv.org/abs/gr-qc/9910060}{{\ttfamily gr-qc/9910060}}].

\bibitem{fewstersmith2008}
C.J.~Fewster and C.J.~Smith, \emph{{Absolute quantum energy inequalities in
  curved spacetime}},
  \href{https://doi.org/10.1007/s00023-008-0361-0}{\emph{Ann. H.~Poincar{\'e}}
  {\bfseries 9} (2008) 425}
  [\href{https://arxiv.org/abs/gr-qc/0702056}{{\ttfamily gr-qc/0702056}}].

\bibitem{dlmf}
F.W.J.~Olver, A.B.~{Olde Daalhuis}, D.W.~Lozier, B.I.~Schneider, R.F.~Boisvert,
  C.W.~Clark et~al. \emph{NIST Digital Library of Mathematical Functions},
  \href{https://dlmf.nist.gov}{https://dlmf.nist.gov}, Release 1.1.5 of
  2022-03-15.

\end{thebibliography}\endgroup

\end{document}